\newcommand{\algorithmicoutput}{\textbf{Output:}}
\newcommand{\OUTPUT}{\item[\algorithmicoutput]}
\newcommand{\algorithmicinput}{\textbf{Input:}}
\newcommand{\INPUT}{\item[\algorithmicinput]}
\newcommand{\hide}[1]{}
\definecolor{DarkGreen}{rgb}{0.1,0.5,0.1}
\definecolor{DarkRed}{rgb}{0.5,0.1,0.1}
\definecolor{DarkBlue}{rgb}{0.1,0.1,0.5}
 \let\mathscr\relax
\def\epsilon{\varepsilon}
\renewcommand{\hat}{\widehat}
\renewcommand{\bar}{\overline}
\DeclareMathOperator*{\argmin}{\mathrm{argmin}}
\newtheorem*{theorem*}{Theorem}
\newtheorem*{observation*}{Observation}
\declaretheorem[
  name=Theorem,
  refname={theorem, theorems},
  Refname={Theorem, Theorems}]{theorem}
\declaretheorem[
  name=Lemma,
  refname={lemma, lemmas},
  Refname={Lemma, Lemmas}]{lemma}
\declaretheorem[
  name=Corollary,
  refname={corollary, corollaries},
  Refname={Corollary, Corollaries}]{corollary}
\declaretheorem[
  name=Definition,
  refname={definition, definitions},
  Refname={Definition, Definitions}]{definition}
\newtheorem{proposition}[theorem]{Proposition}
\title{Learning Convex Partitions and Computing Game-theoretic Equilibria from Best Response Queries}
\author{Paul W. Goldberg \thanks{Department of Computer Science, University of Oxford. Emails: \href{mailto:paul.goldberg@cs.ox.ac.uk}{paul.goldberg@cs.ox.ac.uk}, \href{mailto:francisco.marmolejo@cs.ox.ac.uk}{francisco.marmolejo@cs.ox.ac.uk}} \and 
Francisco J. Marmolejo-Coss\'{i}o \footnotemark[1] \thanks{Supported by the Mexican National Council of Science and Technology (CONACyT)}
    }
\date{}
\begin{document}

\maketitle

\begin{abstract}
Suppose that an $m$-simplex is partitioned into $n$ convex regions having disjoint interiors and distinct labels, and we may learn the label of any point by querying it. The learning objective is to know, for any point in the simplex, a label that occurs within some distance $\epsilon$ from that point. We present two algorithms for this task: Constant-Dimension Generalised Binary Search (CD-GBS), which for constant $m$ uses $poly(n, \log \left( \frac{1}{\epsilon} \right))$ queries, and Constant-Region Generalised Binary Search (CR-GBS), which uses CD-GBS as a subroutine and for constant $n$ uses $poly(m, \log \left( \frac{1}{\epsilon} \right))$ queries.

We show via Kakutani's fixed-point theorem that these algorithms provide bounds on the best-response query complexity of computing approximate well-supported equilibria of bimatrix games in which one of the players has a constant number of pure strategies. We also partially extend our results to games with multiple players, establishing further query complexity bounds for computing approximate well-supported equilibria in this setting.

\end{abstract}

\begin{paragraph}{Keywords:} Query protocol, equilibrium computation, revealed preferences
\end{paragraph}

\section{Introduction}\label{sec:introduction}

The computation of game-theoretic equilibria is a topic of long-standing interest in the algorithmic and AI communities. This includes computation in the ``classical'' setting of complete information about a game, as well as settings of partial information, communication-bounded settings, and distributed algorithms (for example, best-response dynamics). A recent line of research has studied computation of equilibria based on query access to players' payoff functions. That work, along with the notion of revealed preferences in economics, inspires the new setting we study here.

We study algorithms that have query access to the players' best-response behaviour: an algorithm may query a mixed-strategy profile (i.e. probability distributions constructed by the algorithm, over each player's pure strategies) and learn the players' best responses. Our focus is on standard bimatrix games, which is arguably the most natural starting-point for an investigation of this new query model. The solution concept of interest is $\epsilon$-approximate Nash equilibria (exact equilibria are typically impossible to find using finitely many such queries, see Corollary \ref{cor:exact-nash-inft}). A basic challenge is to identify algorithms that achieve this goal with good bounds on their query complexity (and also, ideally, their runtime complexity).

In more detail, we assume an $m\times n$ game $G$: a row player has $m$ pure strategies and a column player has $n$ pure strategies. $G$ has two unknown $m\times n$ payoff matrices that represent payoffs to the players for all combinations of pure strategy choices they may make. A query consists of a probability distribution over the pure strategies of one of the players, and elicits an answer consisting of a best response for the other player (i.e. a pure strategy that maximises that player's expected payoff). We seek an $\epsilon$-well-supported Nash equilibrium ($\epsilon$-WSNE): a pair of probability distributions over their pure strategies with the property that any strategy of player $p$ whose expected payoff is more than $\epsilon$ below the value of $p$'s best response, gets probability zero. The general question of interest is: how many queries are needed, as a function of $m,n,\epsilon$.

Using Kakutani's fixed point theorem, we reduce this question to a novel and more geometrical challenge in the design of query protocols. Suppose that the $m$-simplex $\Delta^m$ is partitioned into $n$ convex regions having labels in $[n]=\{1,\ldots,n\}$. When we query a point $x\in\Delta^m$ we are told the label of $x$. How many queries (in terms of $m,n,\epsilon$) are needed in order to ensure that all points in $\Delta^m$ are within $\epsilon$ of a point whose label we know? We show how to achieve this using time and queries polynomial in $\log \epsilon$ and $\max(m,n)$ provided that $\min(m,n)$ is constant. This leads to a polynomial query complexity algorithm for 2-player games, provided that one of the players has a constant number of strategies.

\subsection{Further details}
In essence, we consider partitions of the unit $m$-simplex $\Delta^m$ into $n$ convex polytopes, $P_1,...,P_n$, with disjoint interiors, and aim to approximately learn the partition with access to a membership oracle that for a given $x \in \Delta^m$, returns a polytope to which $x$ belongs. The notion of approximation we study is that of  {\em $\epsilon$-close labellings}, a collection of empirical polytopes, $\{\hat{P}_i\}_{i=1}^n$, such that $\hat{P}_i \subseteq P_i$ for $i=1,...,n$ and $\cup_{i=1}^n \hat{P}_i$ is an $\epsilon$-net of $\Delta^m \subset \mathbb{R}^m$ in the $\ell_2$ norm. 

Note that in one dimension ($m=1$) we can use binary search to solve this problem using $n\log(1/\epsilon)$ queries.
We generalise to higher dimension, exploiting convexity of the regions to reduce query usage in computing $\epsilon$-close labellings.  We present two algorithms for this task: Constant-Dimension Generalised Binary Search (CD-GBS), which for constant $m$ uses $poly(n, \log \left( \frac{1}{\epsilon} \right))$ queries, and Constant-Region Generalised Binary Search (CR-GBS), which uses CD-GBS as a subroutine and for constant $n$ uses $poly(m, \log \left( \frac{1}{\epsilon} \right))$ queries.

This problem derives from the question of how to compute approximate (well-supported) Nash equilibra ($\epsilon$-WSNE) using only best response information, obtained via queries in which the algorithm selects a mixed strategy profile and a player, and receives a best response for that  player to the mixed profile. Via Kakutani's fixed-point theorem \cite{kakutani1941} we reduce this variant of equilibrium computation to finding $\epsilon$-close labellings of polytope partitions. For $m \times n$ games where $m$ is constant (or $n$ equivalently, by symmetry), we show that an $\epsilon$-WSNE can be computed using $poly(n, \log \left( \frac{1}{\epsilon} \right) )$ best response queries.

In addition, we briefly delve into the problem of computing $\epsilon$-WSNE in multiplayer games with best response queries. Unfortunately, as soon as there are more than two players, the geometric connection between computing $\epsilon$-WSNE and learning polytope partitions of the simplex breaks down. Nonetheless, fixed-point techniques from Section \ref{sec:discrete-nash} can still be applied in this setting, and we present a simple algorithm that computes an $\epsilon$-WSNE with a finite query complexity. To be more specific, in a game with $n$ players each having $k$ actions, our algorithm computes an $\epsilon$-WSNE using $O\left( n \left( \frac{nk}{\epsilon} \right)^{nk} \right)$ best response queries.

\subsection{Related Work}
Earlier work in computational learning theory has studied exact learning of geometrical regions over a discretised domain, where algorithms are sought with query complexity logarithmic in a “resolution” parameter and binary search is repeatedly applied in a systematic way \cite{BGGM98}. Goldberg and Kwek~\cite{GK00} specifically study the learnability of polytopes in this context, deriving query efficient algorithms, and precisely classifying polytopes learnable in this setting. These algorithms can be adapted to approximately learn a single polytope with membership queries, but the obtained notion of approximation is not directly applicable to computing $\epsilon$-close labellings. 

The Nash equilibrium (NE) is a fundamental concept in game theory \cite{Nash}. They are guaranteed to exist in finite games, yet computational challenges in finding one abound, most notably, the PPAD-completeness of computing an exact equilibrium even for two-player normal form games \cite{DGP, CDT}. For this reason, query complexity has been extensively used as a tool to differentiate hardness of equilibrium concepts in games. For payoff queries, some notable examples include: exponential lower bounds for randomised computation of exact Nash equilibria and exact correlated equilibria via communication complexity lower bounds in multiplayer games \cite{HM10, HN13}; exponential lower bounds for randomised computation of approximate well-supported equilibria and general approximate equilibria for a small enough approximation factor in multiplayer games \cite{B13}; upper and lower bounds for equilibrium computation in bimatrix games, congestion games \cite{FGGS} and anonymous games \cite{GT14}; upper and lower bounds for randomised algorithms computing approximate correlated equilibria \cite{GR14}. Babichenko et al. have also proved lower bounds in communication complexity for computing $\epsilon$-WSNE for small enough $\epsilon$ in both bimatrix and multiplayer games \cite{BabR}.

Best response queries are a weaker but natural query model which is powerful enough to implement fictitious play, a dynamic first proposed by Brown \cite{Brown}, and proven to converge by Robinson \cite{Robinson} in two-player zero-sum games to an approximate NE. Fictitious play does not always converge for general games where both players have more than two strategies \cite{FL98}. Furthermore, Daskalakis and Pan have proven that the rate of convergence of the dynamic is quite slow in the worst case (with arbitrary tie-breaking) \cite{DFictitious}. Also, beyond non-convergence, the dynamic can have a poor approximation value for general games \cite{GSSV11}. In addition, the relationship between best responses and convex partitions of simplices has been studied by Von Stengel \cite{VS04} in the context of sequential games where one player has to commit to and announce a strategy before playing. 

For a bimatrix game, simple $\epsilon$-close labellings can be constructed by querying best responses at mixed strategies arising as uniform distributions over sufficiently large multisets of pure strategies. As a consequence of our main theorem, best responses to these multiset distributions contain enough information to compute approximate WSNE. This result is in the spirit of \cite{BabBarman} and \cite{LMM03}, who aim to quantify specific $k$ such that some approximate equilibrium arises as a uniform mixture over multisets of size $k$. We note in our scenario that there is also a guaranteed existence of an approximate equilibrium using sufficiently large multisets, however {\em verifying} that a {\em specific} pair of mixed strategies is an approximate WSNE is not straightforward using only best response queries. This is in contrast to the verification of approximate equilibria via utility queries as studied in \cite{BabBarman}.

Separately, we note that the present paper is possibly relevant to the search for a price equilibrium in certain markets. Baldwin and Klemperer study markets consisting of {\em strong-substitutes} demand functions for $N$ different goods available in multiple discrete units \cite{BK16}. These markets are a generalisation of the {\em product-mix auction} of \cite{Kle10}; a basic task is to identify prices at which some desired bundle of the goods is demanded. Consider the space $(\mathbb{R}^+)^N$ of all price vectors. As analysed in \cite{BK16}, a strong-substitutes demand function partitions this price space into convex polytopes, each of which comprises the prices at which some particular bundle of goods is demanded. So, the present paper relates to a setting where price vectors may be queried, and responses consist of demand bundles. The connection is imperfect, since the main objective in the context of \cite{BK16} would be to learn a price at which some target bundle is demanded, rather than the entire demand function. The ideas here may be useful for learning the values that the market has for various bundles.

\section{Preliminaries and Notation}\label{sec:preliminaries}

Our main object of study will be families of polytopes that precisely cover the unit simplex, with the property that any two distinct polytopes from the family are either disjoint, meet at their boundary, or entirely coincide. Throughout, the polytopes we work with are convex.

\begin{definition}[$(m,n)$-Polytope Partition]\label{def:polytope-partition}
A {\em $(m,n)$-polytope partition} consists of a set of $n$ convex polytopes in $\mathbb{R}^m$, $\mathscr{P} = \{P_1,...,P_n\}$, with the following properties:
\begin{itemize}
    \item $\bigcup P_i = \Delta^m = \{x \in \mathbb{R}^m \ | \ \forall i, \ x_i\geq 0, \ \sum_i x_i \leq 1\}$. 
    \item For each $i \neq j$, either $relint(P_i) \cap relint(P_j) = \emptyset$ or $P_i = P_j$, where $relint(H)$ means the relative interior of $H$.
\end{itemize}
\end{definition}

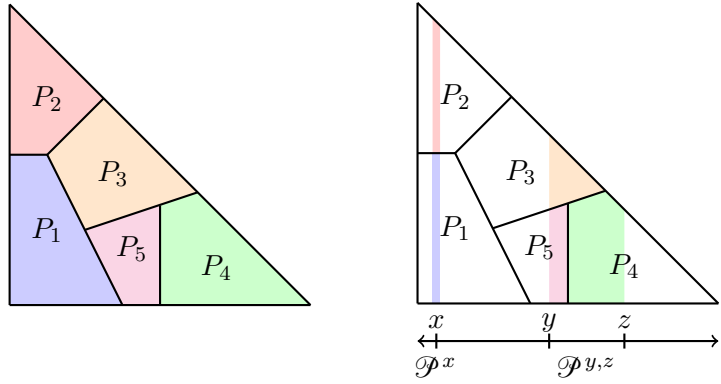
\begin{figure}
\center{
\begin{tikzpicture}[scale=0.5]
\tikzstyle{xxx}=[dashed,thick]

\fill[red!20](-8,6)--(-7,6)--(-5.5,7.5)--(-8,10)--cycle; 
\fill[blue!20](-8,6)--(-7,6)--(-5,2)--(-8,2)--cycle; 
\fill[orange!20](-6,4)--(-3,5)--(-5.5,7.5)--(-7,6)--cycle; 
\fill[magenta!20](-5,2)--(-6,4)--(-4,4.65)--(-4,2)--cycle;
\fill[green!20](-4,2)--(-4,4.65)--(-3,5)--(0,2)--cycle;
\fill[white!20](-8,2)--(0,2)--(0,-0.2)--(-8,-0.2)--cycle;

\draw[thick, -](-8,6)--(-7,6)--(-5.5,7.5);

\draw[thick, -](-7,6)--(-5,2); 

\draw[thick, -](-6,4)--(-3,5);

\draw[thick, -](-4,2)--(-4,4.65);

\draw[thick, -](-8,2)--(-8,10); 
\draw[thick, -](-8,2)--(0,2); 
\draw[thick, -](-8,10)--(0,2);

\node at (-7,4){$P_1$};
\node at (-7,7.5){$P_2$};
\node at (-5.25,5.5){$P_3$};
\node at (-2.5,3){$P_4$};
\node at (-4.75,3.5){$P_5$};

\end{tikzpicture}
\hspace{1cm}
\begin{tikzpicture}[scale=0.5]
\tikzstyle{xxx}=[dashed,thick]

\fill[red!20](-7.6,6)--(-7.4,6)--(-7.4,9.4)--(-7.6,9.6)--cycle; 
\fill[blue!20](-7.6,2)--(-7.4,2)--(-7.4,6)--(-7.6,6)--cycle; 
\fill[orange!20](-4.5,4.5)--(-3,5)--(-4.5,6.5)--cycle; 
\fill[magenta!20](-4.5,2)--(-4.5,4.5)--(-4,4.65)--(-4,2)--cycle;
\fill[green!20](-4,2)--(-4,4.65)--(-3,5)--(-2.5,4.5)--(-2.5,2)--cycle;

\draw[thick, -](-8,6)--(-7,6)--(-5.5,7.5);

\draw[thick, -](-7,6)--(-5,2); 

\draw[thick, -](-6,4)--(-3,5);

\draw[thick, -](-4,2)--(-4,4.65);

\draw[thick, -](-8,2)--(-8,10); 
\draw[thick, -](-8,2)--(0,2); 
\draw[thick, -](-8,10)--(0,2);

\node at (-7,4){$P_1$};
\node at (-7,7.5){$P_2$};
\node at (-5.25,5.5){$P_3$};
\node at (-2.5,3){$P_4$};
\node at (-4.75,3.5){$P_5$};

\draw[thick, <->](-8,1)--(0,1); 

\draw[thick, -](-7.5,0.8)--(-7.5,1.2);

\draw[thick, -](-4.5,0.8)--(-4.5,1.2);
\draw[thick, -](-2.5,0.8)--(-2.5,1.2);

\node at (-7.5,1.5){$x$};
\node at (-4.5,1.5){$y$};
\node at (-2.5,1.5){$z$};

\node at (-7.5,0.3){$\mathscr{P}^x$};
\node at (-3.5,0.3){$\mathscr{P}^{y,z}$};

\end{tikzpicture}

\caption{Polytope partition, cross-section and slices.}\label{fig:poly-part}
}
\end{figure}

\begin{definition}[Cross-sections and Slices]\label{def:cross-section}
Let $P \subset \mathbb{R}^m$ be a polytope and $\pi: \mathbb{R}^m \rightarrow \mathbb{R}$ the projection function into the first coordinate. For $x \in \mathbb{R}$, we define the {\em $x$-cross-section} of $P$ as $P^x = \pi^{-1}(x)\cap P$. For any $I = [x,y] \subset \mathbb{R}$ we define the {\em $[x,y]$-slice} of $P$ as $P^I = P^{x,y} = \pi^{-1}([x,y])\cap P$. Suppose that $\mathscr{P} = \{P_i\}_i$ is an $(m,n)$-polytope partition. The definitions of cross-sections and slices extend to $\mathscr{P}^x = \{P_i^x\}_i$ and $\mathscr{P}^I = \mathscr{P}^{x,y} = \{P_i^{x,y}\}_i$.
\end{definition}

Figure \ref{fig:poly-part} gives a visualisation of these two definitions. Notice that in the same figure, $\mathscr{P}^x$ is essentially a lower-dimensional polytope partition linearly scaled by a factor of $(1-x)$. This however, is not the case in general, as visible in Figure \ref{fig:degenerate-cross}, where $\mathscr{P}^x$ fails the second condition of Definition \ref{def:polytope-partition}. We distinguish between these two scenarios with the following formal definition:

\begin{definition}[Non-Degenerate and Degenerate cross-sections]\label{def:degenerate-cross-section}

Let $\mathscr{P}$ be an $(m,n)$-polytope partition. For $x \in [0,1)$ let $f_x: \mathscr{P}^x \rightarrow \Delta^{m-1}$ be defined as  $f_x(v_1,...,v_m) = \frac{1}{1-x}(v_2,...,v_m)$. If $f_x(\mathscr{P}^x)$ is an  $(m-1,n)$-polytope partition, we say that $\mathscr{P}^x$ is a non-degenerate cross-section. Otherwise we say that $\mathscr{P}^x$ is a degenerate cross-section. 

\end{definition}

The recursive structure of polytope partitions on non-degenerate cross-sections will be crucial to our constructions. Luckily, for any polytope partition, there are only a finite number of points $x \in [0,1)$ that give rise to degenerate cross-sections. Before showing this, we define an important discrete subset of $[0,1]$ given by the projections of vertices of polytopes under $\pi$. 

\begin{definition}[Vertex Critical Coordinates]\label{def:vertex-critical-pts}
For a given polytope $P \subset \mathbb{R}^m$ let $V_P \subset \mathbb{R}^m$ be the vertex set of $P$. Define the set of {\em vertex critical coordinates} as $C_P = \pi(V_P) \subset \mathbb{R}$. If $\mathscr{P} = \{P_i\}_{i=1}^n$ is an $(m,n)$-polytope partition, then we extend our definition to define $C_{\mathscr{P}} = \bigcup_{i=1}^n C_{P_i} \subset[0,1]$ as the vertex critical coordinates of $\mathscr{P}$. 

\end{definition}

\begin{lemma}\label{lemma:degenerate-cross-section}
Suppose that $\mathscr{P}$ is an $(m,n)$-polytope partition and that $x \in [0,1) \setminus C_{\mathscr{P}}$. Then $\mathscr{P}^x$ is non-degenerate.
\end{lemma}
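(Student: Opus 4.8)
The plan is to show that if $x$ avoids all vertex critical coordinates, then the cross-section $\mathscr{P}^x$ inherits the defining properties of a polytope partition (after the rescaling $f_x$), so that $f_x(\mathscr{P}^x)$ is a genuine $(m-1,n)$-polytope partition. The statement has two things to verify: that the pieces $f_x(P_i^x)$ cover $\Delta^{m-1}$, and that any two of them are either relative-interior-disjoint or equal; the coverage part is essentially immediate from $\bigcup_i P_i = \Delta^m$ together with the fact that $f_x$ is an affine bijection from the hyperplane slice $\pi^{-1}(x)\cap\Delta^m$ onto $\Delta^{m-1}$, so the real content is the second condition.

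First I would record the basic geometric facts about $f_x$: for $x\in[0,1)$, $f_x$ restricted to $\pi^{-1}(x)$ is an invertible affine map, hence it takes polytopes to polytopes, preserves inclusions, preserves relative interiors (affine bijections commute with $\mathrm{relint}$), and preserves the property of being full-dimensional within the relevant affine subspace. Consequently it suffices to prove the following statement directly about the unscaled cross-sections: for $x\notin C_{\mathscr{P}}$ and $i\neq j$, either $\mathrm{relint}(P_i^x)\cap\mathrm{relint}(P_j^x)=\emptyset$ or $P_i^x=P_j^x$. (One should also note the degenerate-looking possibility that some $P_i^x$ is lower-dimensional or empty; I would argue that for $x\notin C_{\mathscr P}$ this cannot create a problem, because if $P_i^x$ is a proper face of the slice it is forced to lie in the boundary, and I will handle such pieces as part of the dimension bookkeeping below.)

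The key step is the following claim: for $x\notin C_{\mathscr{P}}$ and any $i$, the cross-section $P_i^x$ is either empty or full-dimensional inside the hyperplane $\pi^{-1}(x)$ (equivalently, $\dim P_i^x = m-1$ whenever nonempty). Indeed, since $x$ is not the $\pi$-image of any vertex of $P_i$, the hyperplane $\{\pi = x\}$ meets $P_i$ either not at all or transversally through its interior; more precisely, if it met $P_i$ only in a proper face $F$ of $P_i$ then, $F$ being a polytope, $F$ would have a vertex, and that vertex is a vertex of $P_i$ with $\pi$-coordinate $x$, contradicting $x\notin C_{P_i}\subseteq C_{\mathscr{P}}$. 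Hence the hyperplane, if it meets $P_i$, meets its interior, and therefore $P_i^x$ is $(m-1)$-dimensional and $\mathrm{relint}(P_i^x)\supseteq \mathrm{int}(P_i)\cap\{\pi=x\}$, and in fact $\mathrm{relint}(P_i^x) = \mathrm{relint}(P_i)\cap\{\pi = x\}$ since intersecting a full-dimensional polytope with a hyperplane through its interior intersects relative interiors with relative interiors. Granting this, take $i\neq j$ with $\mathrm{relint}(P_i^x)\cap\mathrm{relint}(P_j^x)\neq\emptyset$; by the previous sentence this gives a point in $\mathrm{relint}(P_i)\cap\mathrm{relint}(P_j)$, so by Definition \ref{def:polytope-partition} we get $P_i = P_j$, whence $P_i^x = P_j^x$. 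Combining this with the coverage observation and pushing everything through $f_x$ yields that $f_x(\mathscr{P}^x)$ is an $(m-1,n)$-polytope partition, i.e. $\mathscr{P}^x$ is non-degenerate.

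I expect the main obstacle to be the dimension bookkeeping around degenerate-looking pieces: one must be careful that Definition \ref{def:polytope-partition} does not require the polytopes to be full-dimensional or even distinct, so a slice could a priori contain lower-dimensional pieces, and one must confirm that for $x\notin C_{\mathscr P}$ every nonempty $P_i^x$ is in fact $(m-1)$-dimensional — this is exactly the vertex-critical-coordinate argument above, and making the "proper face has a vertex" step fully rigorous (including the case $m=1$, where the cross-section is a point and the claim degenerates appropriately) is where the care is needed. The rescaling step and the coverage step are routine.
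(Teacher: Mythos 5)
Your proposal follows the same route as the paper: both reduce the lemma to the identity $relint(P_i^x) = relint(P_i) \cap \pi^{-1}(x)$ for $x \notin C_{P_i}$, from which the second condition of Definition~\ref{def:polytope-partition} for $\mathscr{P}^x$ follows at once. But there is a gap exactly where you flag the ``main obstacle.'' Your key claim --- that for $x \notin C_{\mathscr{P}}$ every nonempty $P_i^x$ is $(m-1)$-dimensional --- is false: Definition~\ref{def:polytope-partition} permits lower-dimensional members $P_i$ (for instance a facet of $\Delta^m$ could itself appear as one of the $P_i$), and then $P_i^x$ has dimension at most $\dim(P_i)-1 < m-1$. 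Your vertex argument only shows that $\pi^{-1}(x)$ meets $relint(P_i)$ when $P_i^x \neq \emptyset$, hence $\dim(P_i^x)=\dim(P_i)-1$, not $m-1$. The subsequent appeal to ``intersecting a full-dimensional polytope with a hyperplane through its interior intersects relative interiors with relative interiors'' applies only to $P_i$ that are full-dimensional in $\mathbb{R}^m$, so the very case you explicitly worried about is left unaddressed.

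The good news is that full-dimensionality is never actually used; only the relative-interior identity is, and that identity holds for every $P_i$ with $x \notin C_{P_i}$. The correct fix, and what the paper does, is to carry out the argument inside the affine hull $H_i$ of $P_i$, where $P_i$ \emph{is} full-dimensional. Your vertex argument already establishes both that $\pi^{-1}(x)\cap H_i$ is a proper hyperplane of $H_i$ (if it contained $H_i$, every vertex of $P_i$ would project to $x$) and that it meets $relint(P_i)$; the two inclusions of $relint(P_i^x) = relint(P_i)\cap\pi^{-1}(x)$ then follow by the boundary-hyperplane and inscribed-ball arguments performed in $H_i$. With that repair the rest of your argument --- the reduction via $f_x$, the coverage step, and the final deduction from Definition~\ref{def:polytope-partition} --- is correct as stated.
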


\begin{proof}
First we show that if $P \subset \mathbb{R}^m$ is an arbitrary polytope and $x \in \mathbb{R} \setminus C_P $ then $relint(P^x) = relint(P)\cap \pi^{-1}(x)$.

First of all, we notice that $P \neq P^x$ since we have assumed that $x$ is not the projection of a vertex of $P$. Suppose that the affine dimension of $P$ is $k \leq m$ so that $P$ is full dimensional in the affine subspace $H$ of dimension $k$. Let $z \in relint(P^x) \subset P^x$. Clearly $\pi(z) = x$, hence we simply need to show that $z \in relint(P)$. Suppose that this is not the case, then $z$ lies on some boundary hyperplane to $P$ in $H$. Call this boundary hyperplane $D$. $D$ cannot lie entirely in $\pi^{-1}(x)$ due to the fact that $x$ is not a critical coordinate. It follows that $D \cap \pi^{-1}(x)$ is thus a boundary hyperplane to $P^x$. This contradicts the fact that $z \in relint(P^x)$, thus establishing the fact that $z \in relint(P) \cap \pi^{-1}(x)$.

Suppose that $z \in relint(P) \cap \pi^{-1}(x)$. Since $relint(P) \subset P$, we know that $z \in P_x$. Furthermore, $z \in relint(P)$ means that for some $\epsilon > 0$, the $k$-dimensional ball $B_\epsilon(z) \cap H$ is entirely contained in $P$. Clearly this also holds for the $k-1$ dimensional ball $B_\epsilon(z) \cap H \cap \pi^{-1}(x)$, establishing the fact that $z \in relint(P^x)$. 

Let us return to the lemma statement which involves a polytope partition $\mathscr{P}$ instead of a single polytope $P$. If $x \in [0,1) \setminus C_{\mathscr{P}}$ then we have shown $relint(P_i^x) = relint(P_i)\cap \pi^{-1}(x)$ for all $P_i \in \mathscr{P}$, which from the fact that $\mathscr{P}$ satisfies the second condition of Definition \ref{def:polytope-partition} establishes the fact that $\mathscr{P}^x$ also satisfies this second condition. The fact that $\mathscr{P}^x$ satisfies the first condition of Definition \ref{def:polytope-partition} trivially follows from the fact that $\mathscr{P}$ covers $\Delta^m$ as per the first condition of Definition \ref{def:polytope-partition}. 

\end{proof}

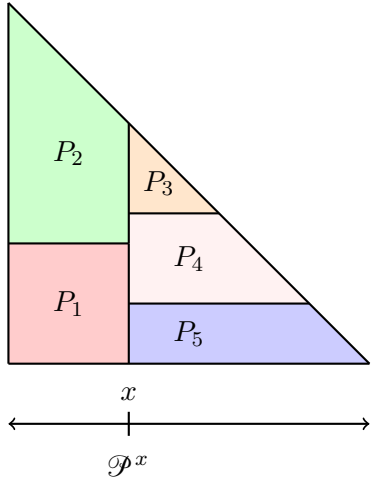
\begin{figure}
\center{
\begin{tikzpicture}[scale=0.8]
\tikzstyle{xxx}=[dashed,thick]

\fill[red!20](2,2)--(2,4)--(4,4)--(4,2)--cycle; 
\fill[green!20](2,4)--(4,4)--(4,6)--(2,8)--cycle; 
\fill[blue!20](4,2)--(8,2)--(7,3)--(4,3)--cycle; 
\fill[pink!20](4,3)--(7,3)--(5.5,4.5)--(4,4.5)--cycle;
\fill[orange!20](4,4.5)--(5.5,4.5)--(4,6)--cycle;

\draw[thick, -](2,2)--(8,2);

\draw[thick, -](2,2)--(2,8);

\draw[thick, -](2,8)--(8,2);

\draw[thick, -](4,4)--(4,6);
\draw[thick, -](4,4)--(4,2);
\draw[thick, -](2,4)--(4,4);
\draw[thick, -](4,4.5)--(5.5,4.5);
\draw[thick, -](4,3)--(7,3);

\node at (3,3){$P_1$};
\node at (3,5.5){$P_2$};
\node at (4.5,5){$P_3$};
\node at (5,3.75){$P_4$};
\node at (5,2.5){$P_5$};

\draw[thick, <->](2,1)--(8,1); 

\draw[thick, -](4,0.8)--(4,1.2);

\node at (4,1.5){$x$};

\node at (4,0.3){$\mathscr{P}^x$};

\end{tikzpicture}
\caption{Degenerate cross-section at $x$}\label{fig:degenerate-cross}
}
\end{figure}

\subsection{Query Oracle Models}

We study two natural query oracle models for polytope membership in any $\mathscr{P}$.

\begin{definition}[Membership Query Oracles for Polytope Partitions]\label{def:query-oracle}
Any $(m,n)$-polytope partition, $\mathscr{P}$ has the following membership query oracles:
\begin{itemize}
    \item Lexicographic query oracle: $Q_\ell: \Delta^m \rightarrow [n]$, which for a given $y$ returns the smallest index of polytope to which $y$ belongs, namely $Q_\ell(y) = \min \{i \in [n] \ | \ y \in P_i\}$.
    \item Adversarial query oracle(s): $Q_A: \Delta^m \rightarrow [n]$, which can return any polytope to which $y$ belongs. Namely $Q_A$ is any function such that $Q_A(y) \in \{i \in [n] \ | \ y \in P_i\}$ for all $y \in \Delta^m$.     
\end{itemize}
\end{definition}
When we wish to refer to an arbitrary oracle from the above models, we use the notation $Q$. Before continuing, we also clarify that for $A,B \subseteq \mathbb{R}^n$, we denote $Conv(A,B) \subseteq \mathbb{R}^m$ as the convex combination of the two sets. In addition, if $A_i \subseteq \mathbb{R}^m$ is an indexed family of sets with $i = 1,...,r$, we denote $Conv(A_i \ | i =1,...,r) \subseteq \mathbb{R}^n$ as the convex combination of all $A_i$.

\subsection{\texorpdfstring{$\epsilon$}{}-close Labellings}
Upon making queries to $Q$, we can infer labels of $x \in \Delta^m$ by taking convex combinations. We abstract this notion in the following definition. 

\begin{definition}[Empirical Polytopes and Labellings]\label{def:implicit-labelling}
Suppose that $\mathscr{P}$ is an $(m,n)$-polytope partition
and $S \subset \Delta^m$ is a finite set for which queries to $Q$ have been made. Let $\hat{P}_i = Conv(\{x \in S \ | \ Q(x) = i\}) \subset P_i$. We say each $\hat{P}_i$ is an empirical polytope of $P_i$ and that $\hat{\mathscr{P}} = \{\hat{P}_i\}$ is an empirical labelling of $\mathscr{P}$. Furthermore, we use the notation $\hat{P}_\bot = \Delta^m \setminus \cup_{i=1}^n \hat{P}_i$. to refer to points in $\Delta^m$ unlabelled under $\hat{\mathscr{P}}$. 
\end{definition}

An $\epsilon$-net in the $\ell_2$ norm for $ \Delta^m \subset \mathbb{R}^m$ is a set $N^m_\epsilon \subseteq \Delta^m$ with the property that for all $x \in \Delta^m$, there exists a $y \in N^m_\epsilon$ such that $\|x - y\|_2 \leq \epsilon$. Our learning goal is to use query access to an oracle, $Q$, to compute an empirical labelling $\hat{\mathscr{P}}$ such that $\cup_{i=1}^n \hat{P}_i$ is an $\epsilon$-net of $\Delta^m$. 

\begin{definition}[$\epsilon$-close Labelling]\label{def:eps-close-label-thin}
Suppose that $\epsilon \geq 0$ and that $\hat{\mathscr{P}}$ is an empirical labelling for $\mathscr{P}$. If $\cup_{i=1}^n \hat{P}_i$ is an $\epsilon$-net of $\Delta^m \subset \mathbb{R}^m$ in the $\ell_2$ norm, we say that $\hat{\mathscr{P}}$ is an {\em $\epsilon$-close labelling} of $\mathscr{P}$.
\end{definition}

Although $\epsilon$-close labellings are defined for polytope partitions, we extend our terminology to also encompass slices of polytope partitions. As such, when we mention computing an $\epsilon$-close labelling of $\mathscr{P}^{x,y}$, we mean an empirical labelling of $\mathscr{P}^{x,y}$ (in the same vein as Definition \ref{def:implicit-labelling}) with the property that the union of its empirical polytopes forms an $\epsilon$-net of $(\Delta^m)^{x,y}$.

\subsection{Learning in Thickness to Learning in Distance}

\begin{definition}[Thickness of Sets]\label{def:thickness}
Suppose that $Z \subseteq \mathbb{R}^m$ is a set. We define the {\em thickness} of $Z$ as the radius of the largest $\ell_2$ ball fully contained in $Z$ and we denote it by $\tau(Z) = \sup \{\delta \geq 0 \ | \ \exists x \in Z \text{ with } B_\delta(x) \subseteq Z\}$ where $B_\delta(x) = \{ y \in \mathbb{R}^m \ | \ \|x - y\|_2 \leq \delta\}$. In the language of convex geometry, $\tau(Z)$ is the depth of the Chebyshev centre of $Z$.
\end{definition}

For a polytope partition $\mathscr{P}$, if $\hat{\mathscr{P}}$ is an $\epsilon$-close labelling, then $\tau(\hat{P}_\bot)  \leq \epsilon$, but the converse does not hold in general.
Even though $\hat{P}_\bot$ may be of small thickness, if it contains vertices of $\Delta^m$, these vertices may be far from labelled points. The following results lead up to Lemma \ref{lemma:thickness-to-distance}, a slightly weaker version of the converse. Lemma \ref{lemma:thickness-to-distance} shows that if we are able to learn an empirical labelling where the set of unlabelled points is of small enough thickness, then we will in fact have succeeded in learning an $\epsilon$-close labelling, where any unlabelled point is close in distance to a labelled point.

\begin{lemma}\label{lemma:cones-nets}
Let $P \subset \mathbb{R}^m$ be a full-dimensional polytope with $Diam(P) = \sup_{x,y \in P} \|x - y\|_2$.
\begin{itemize}
    \item If $A \subsetneq P$ and $\gamma > \left(\frac{Diam(P)}{\tau(P)} \right) \tau(A)$, then $B_\gamma(x) \cap \left( P \setminus A \right) \neq \emptyset$ for all $x \in A$.
    \item If  $A \subseteq P$ is such that $int(P) \setminus A \neq \emptyset$ ($int(P)$ refers to the interior of $P$) and $\gamma > \left(\frac{Diam(P)}{\tau(P)} \right) \tau(A)$, then $B_\gamma(x) \cap \left( int(P) \setminus A \right) \neq \emptyset$ for all $x \in A$. 
\end{itemize}
\end{lemma}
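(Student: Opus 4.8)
The plan is to exploit the fact that a full-dimensional polytope $P$ contains a ball of radius $\tau(P)$ around its Chebyshev centre $c$, and that the "aspect ratio" $Diam(P)/\tau(P)$ controls how far a point can be pushed inside $P$ along a chosen direction. The key geometric observation is a \emph{dilation} argument: if $B_{\tau(P)}(c) \subseteq P$, then for any $x \in P$ and any scaling factor $t \in [0,1]$, the scaled ball $B_{t\,\tau(P)}\big((1-t)x + t c\big)$ is contained in $P$, because it is the image of $B_{\tau(P)}(c)$ under the contraction $z \mapsto (1-t)x + t z$ toward $x$, and $P$ is convex. In other words, around the point $x_t := (1-t)x + tc$ (which lies on the segment from $x$ to the centre) there is a ball of radius $t\,\tau(P)$ lying entirely in $P$.

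For the first bullet, fix $x \in A$. Choose $t$ small enough that $t\,\tau(P) > \tau(A)$, i.e. $t > \tau(A)/\tau(P)$; this is possible as long as $\tau(A) < \tau(P)$, which follows since $A \subsetneq P$ is a proper subset (if $\tau(A) \ge \tau(P)$ one shows $A$ would have to be all of $P$, contradicting properness — this needs a short argument using that $P$ is the closure of its interior and $A$ contains a maximal ball). With such a $t$, the ball $B_{t\tau(P)}(x_t) \subseteq P$ has radius strictly larger than $\tau(A)$, hence it cannot be contained in $A$ (by definition of $\tau(A)$ as the sup of radii of balls inside $A$), so it meets $P \setminus A$. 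It remains to bound $\|x - x_t\| = t\|x - c\| \le t\,Diam(P)$. Taking $t$ to approach $\tau(A)/\tau(P)$ from above, we get a point of $P\setminus A$ within distance approaching $\frac{Diam(P)}{\tau(P)}\tau(A)$ of $x$; since $\gamma$ is strictly larger than this quantity, for $t$ sufficiently close to the threshold the witnessing point of $B_{t\tau(P)}(x_t) \cap (P\setminus A)$ lies within distance $t\,Diam(P) + t\,\tau(P) < \gamma$ of $x$ — one has to be slightly careful here and absorb the $t\,\tau(P)$ radius term as well, but since everything scales with $t$ and the strict inequality gives slack, choosing $t$ close enough to $\tau(A)/\tau(P)$ makes $t(Diam(P) + \tau(P)) < \gamma$ precisely because $\gamma/\tau(A) > Diam(P)/\tau(P)$ leaves room; more cleanly, pick the witness point $w$ in $B_{t\tau(P)}(x_t)\setminus A$ and note $\|x-w\| \le \|x - x_t\| + \|x_t - w\| \le t\,Diam(P) + t\,\tau(P)$, then verify this is $<\gamma$ for the chosen $t$.

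For the second bullet, the argument is the same, but we additionally need the witnessing ball to meet $int(P) \setminus A$ rather than just $P \setminus A$. This is where the hypothesis $int(P)\setminus A \neq \emptyset$ comes in: the ball $B_{t\tau(P)}(x_t)$ with $x_t$ a strict convex combination (for $t \in (0,1)$) of $x$ and the Chebyshev centre $c \in int(P)$ actually lies in $int(P)$, not merely in $P$ — shrinking $t$ slightly if necessary so that the closed ball is inside the open interior, which is possible since $c$ has a neighbourhood in $int(P)$. Then $B_{t\tau(P)}(x_t) \subseteq int(P)$ has radius exceeding $\tau(A)$, so it cannot lie inside $A$, hence meets $int(P)\setminus A$, and the same distance bound applies.

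I expect the main obstacle to be the bookkeeping around the strict inequalities: one must choose $t$ in the narrow window $\big(\tfrac{\tau(A)}{\tau(P)},\ t_0\big)$ for suitable $t_0$ so that simultaneously (i) $t\,\tau(P) > \tau(A)$, (ii) $t(Diam(P)+\tau(P)) < \gamma$, and (in the second bullet) (iii) the ball stays inside $int(P)$. The existence of such $t$ is exactly what the hypothesis $\gamma > \frac{Diam(P)}{\tau(P)}\tau(A)$ guarantees, but making the estimate tight enough to absorb the extra radius term $t\,\tau(P)$ requires either a slightly cleverer choice of the centre of the witnessing ball (e.g. perturbing $x_t$ back toward $x$) or simply noting that the bound in the lemma is stated with a strict inequality giving the needed slack. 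A secondary point needing care is the claim $\tau(A) < \tau(P)$ when $A \subsetneq P$: this follows because any ball of radius $\tau(P)$ inside $P$, together with convexity and $A \subsetneq P$, forces $A$ to omit an interior point, and a maximal ball in $A$ must avoid that point, but one should confirm this does not degenerate when $A$ is, say, dense in $P$ — in fact $A$ need not be closed, so $\tau(A)$ is defined via a supremum and the argument should be phrased in terms of: for every $\delta > \tau(A)$ no ball of radius $\delta$ fits in $A$, which is all we use.
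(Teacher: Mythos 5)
Your approach is, at its core, the same as the paper's: pull a copy of the inscribed ball toward $x$ by a dilation centred at $x$, and argue that a dilated ball of radius strictly greater than $\tau(A)$ survives inside $B_\gamma(x)\cap P$. The paper phrases this via the cone $Conv(x, B_{\tau(P)}(v))$ and scales that whole cone by $\gamma/z$ where $z=\sup_{a\in B_{\tau(P)}(v)}\|x-a\|\leq Diam(P)$; you phrase it via the family of balls $B_{t\tau(P)}(x_t)$ with $x_t=(1-t)x+tc$. These are the same picture.

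The main gap is exactly the one you flag and then defer: your estimate
$\|x-w\| \leq t\,Diam(P) + t\,\tau(P)$ is too loose, and the strict inequality in the hypothesis does \emph{not} give enough slack to absorb the $t\,\tau(P)$ term, since $\gamma$ may exceed $\frac{Diam(P)}{\tau(P)}\tau(A)$ by an arbitrarily small amount. The correct observation is that you should not bound $\|x-x_t\|$ by $t\,Diam(P)$ and then add $t\,\tau(P)$; instead, note that $\|x-x_t\| = t\|x-c\|$, so the farthest point of $B_{t\tau(P)}(x_t)$ from $x$ is at distance exactly $t\bigl(\|x-c\|+\tau(P)\bigr)$. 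Since $x$ and the antipodal point of $B_{\tau(P)}(c)$ (seen from $x$) both lie in $P$, we have $\|x-c\|+\tau(P)\leq Diam(P)$, so the entire ball $B_{t\tau(P)}(x_t)$ lies within distance $t\,Diam(P)$ of $x$. One then needs $t$ in the open interval $\bigl(\tau(A)/\tau(P),\ \gamma/Diam(P)\bigr)$, which is nonempty precisely because $\gamma>\frac{Diam(P)}{\tau(P)}\tau(A)$ — this is where the hypothesis is used, with no slack to spare and none needed. This is the paper's similarity step, and it is what makes the constant come out right.

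Two smaller points. First, your side-claim that $\tau(A)\geq\tau(P)$ forces $A=P$ is false (remove a single corner of a square: $\tau$ is unchanged but $A\subsetneq P$). You do not actually need it: if $\gamma\geq Diam(P)$ then $P\subset B_\gamma(x)$ and the conclusion is immediate, and otherwise $\tau(A) < \gamma\tau(P)/Diam(P) < \tau(P)$ follows from the hypothesis, which is all you use. Second, for the second bullet, ``shrinking $t$'' does not cleanly put the closed ball inside $int(P)$, since shrinking $t$ also moves $x_t$ back toward $x$, which may sit on $\partial P$. The clean fix is: for any $t\in(0,1]$ the open ball $B^\circ_{t\tau(P)}(x_t)$ is contained in $int(P)$ (every point is a nontrivial convex combination of a point of $P$ and a point of $int(P)$), so pick a closed ball of some radius $r$ with $\tau(A)<r<t\tau(P)$ inside it; this has radius exceeding $\tau(A)$ and is contained in $B_\gamma(x)\cap int(P)$, so it must meet $int(P)\setminus A$.
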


\begin{proof}
The proof of the first claim follows by considering the picture given in Figure \ref{fig:thickness-to-distance}. Pick an arbitrary $x \in A$. Due to the definition of thickness, there exists some $v \in P$ such that $B_{\tau(P)}(v) \subset P$. Consider the convex combination, $Conv(x, B_{\tau(P)}(v)) \subset P$. The furthest point in this convex combination from $x$ is at the other extreme of $B_{\tau(P)}(v)$ from $x$, and we denote the distance between these two points by $z = \sup_{a \in B_{\tau(P)}(v)} \|x - a\|_2 \leq Diam(P)$.
By similarity however, it now follows that if we consider $B_\gamma(x)$, and the fact that $Diam(P) \geq z$, a similar inscribed sphere of radius strictly greater than $\tau(A)$ will exist within $F = B_{\gamma}(x) \cap Conv(x, B_{\tau(P)}(v)) \subset B_{\gamma}(x) \cap P$. By definition, $\tau(F) > \tau(A)$. It follows that $F \not\subset A$, which proves the claim as $F \subset B_{\gamma}(x)$.  

As for a proof of the second claim, it follows by considering the same picture above and noticing that $int(Conv(x, B_{\tau(P)}(v))) \subset int(P)$ as well as the fact that the former set is non-empty since $P$ is of full dimension. 
\end{proof}

\begin{figure}[h]
\center{
\begin{tikzpicture}[scale=0.8]
\tikzstyle{xxx}=[dashed,thick]

\fill[blue!20](-3,-0)--(4.5,2.5)--(4.5,-2.5)--cycle; 
\draw[red,fill=blue!20] (5,0) circle (2.5cm);

\filldraw (-3,0) circle (2pt);


\draw (5,0) circle (2.5cm);

\draw[thick, -](-3,0)--(4.5,2.5);
\draw[thick, -](-3,0)--(4.5,-2.5);

\draw (-3,0) circle (3cm);

\draw (-0.69,0) circle (0.7cm);

\draw[thick, <->](-3,-3.5)--(7.5,-3.5);
\draw[thick, <->](-3,3.5)--(0,3.5);

\draw[thick, <->] (5,0)--(5,2.5);
\draw[thick, <->] (-0.69,0)--(-0.69,0.7);

\node at (-1.4,0.95) {$\tau(A)$};
\node at (5.6,1.2) {$\tau(P)$};
\node at (2,-4) {$z$};
\node at (-1.5,4) {$\gamma = \frac{z\tau(A)}{\tau(P)}$};

\node at (8.5,0) {$\subset P$};

\node at (-3.5,0) {$x$};
\node at (5,-0.25) {$v$};

\end{tikzpicture}
\caption{Proof of Lemma \ref{lemma:cones-nets}
\label{fig:thickness-to-distance}
}
}
\end{figure}
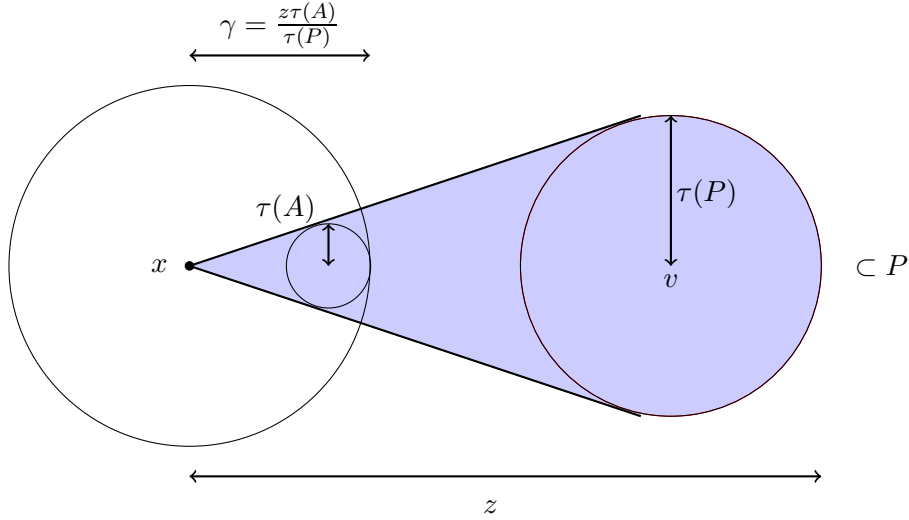

\begin{lemma}\label{lemma:thick-diam-simplex}
$Diam(\Delta^m) = \sqrt{2}$ and $\tau(\Delta^m) \geq \frac{1}{m + \sqrt{m}}$. 
\end{lemma}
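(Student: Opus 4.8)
The plan is to verify the two claims in Lemma~\ref{lemma:thick-diam-simplex} separately, both by elementary geometric computations in $\mathbb{R}^m$.

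\textbf{Diameter.} For $Diam(\Delta^m) = \sqrt{2}$, I would note $\Delta^m$ is the convex hull of the $m+1$ points $\{0, e_1, \ldots, e_m\}$, where $e_i$ are the standard basis vectors. Since the diameter of a polytope is attained at a pair of vertices, it suffices to check pairwise distances among these: $\|0 - e_i\|_2 = 1$ and $\|e_i - e_j\|_2 = \sqrt{2}$ for $i \neq j$. Hence the maximum is $\sqrt{2}$, attained whenever $m \geq 2$ (and for $m=1$ the simplex is a segment of length $1$, but the stated bound is an equality claim that presumably is intended for $m\ge 2$, or one reads it as the relevant case; in any event the argument gives $Diam(\Delta^m)=\sqrt 2$ for $m\ge2$). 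The only thing to justify carefully is the standard fact that the diameter of a compact convex set equals the maximum distance between extreme points, which follows because $x \mapsto \|x - y\|_2$ is convex and so maximised over $\Delta^m$ at a vertex, for each fixed $y$, and then symmetrically in $y$.

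\textbf{Thickness.} For the lower bound $\tau(\Delta^m) \geq \frac{1}{m+\sqrt m}$, I would exhibit an explicit inscribed ball. The natural candidate centre is the centroid $c = \frac{1}{m+1}(1,1,\ldots,1)$, and I would compute its distance to each of the $m+1$ bounding hyperplanes of $\Delta^m$. There are $m$ facets of the form $\{x_i = 0\}$, at distance $c_i = \frac{1}{m+1}$ from $c$, and one facet on the hyperplane $\{\sum_i x_i = 1\}$, whose distance from $c$ is $\frac{1 - \sum_i c_i}{\sqrt m} = \frac{1 - m/(m+1)}{\sqrt m} = \frac{1}{(m+1)\sqrt m}$. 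The minimum of these distances is $\frac{1}{(m+1)\sqrt m}$, and the largest ball about $c$ inside $\Delta^m$ has exactly this radius. So $\tau(\Delta^m) \geq \frac{1}{(m+1)\sqrt m}$. Finally I would check $\frac{1}{(m+1)\sqrt m} \geq \frac{1}{m + \sqrt m}$, i.e. $m + \sqrt m \geq (m+1)\sqrt m = m\sqrt m + \sqrt m$, i.e. $m \geq m \sqrt m$, i.e. $1 \geq \sqrt m$ --- which fails for $m \geq 2$. So the centroid is \emph{not} the right centre; I should instead optimise the inscribed ball properly.

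\textbf{Correcting the thickness argument.} The Chebyshev centre of $\Delta^m$ is not the centroid because the facet $\{\sum x_i = 1\}$ "counts" differently from the coordinate facets. Instead I would look for the centre $x^\star = (t, t, \ldots, t)$ on the diagonal, at distance $t$ from each coordinate hyperplane and distance $\frac{1 - mt}{\sqrt m}$ from the sum-hyperplane, and equalise: $t = \frac{1-mt}{\sqrt m}$, giving $t\sqrt m = 1 - mt$, so $t(m + \sqrt m) = 1$, i.e. $t = \frac{1}{m+\sqrt m}$. Then the ball of radius $\frac{1}{m+\sqrt m}$ about $x^\star$ is contained in $\Delta^m$ (one must check $x^\star \in \Delta^m$, which holds since $mt = \frac{m}{m+\sqrt m} < 1$), yielding $\tau(\Delta^m) \geq \frac{1}{m+\sqrt m}$ as claimed. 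The main (very mild) obstacle is simply to not fall into the centroid trap and to set up the correct symmetric point; after that it is a one-line computation. I would also remark that in fact this ball is the largest inscribed ball, so the inequality is an equality, though the lemma only asserts the lower bound and that is all the later arguments need.
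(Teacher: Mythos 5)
Your proof is correct and takes essentially the same approach as the paper: the diameter claim via convexity of $z\mapsto\|x-z\|_2$ (so the maximum is attained at a pair of vertices, giving $\sqrt 2$ between two distinct $e_i$), and the thickness bound by exhibiting the inscribed ball of radius $\frac{1}{m+\sqrt m}$ centred at $\frac{1}{m+\sqrt m}\vec 1$, tangent to the coordinate facets and to $\{\sum x_i=1\}$. Your side-remark that the diameter equality really requires $m\ge 2$ (since $\Delta^1=[0,1]$ has diameter $1$) is a fair unstated caveat in the paper's statement, though it is harmless because downstream uses only need $Diam(\Delta^m)\le\sqrt 2$.
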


\begin{proof}
 For the first part of the statement, let us fix an $x \in \Delta^m$. If we consider the function $f_x(z) = \|x - z\|_2^2$, then this function is differentiable and clearly achieves local maximum when $z$ is a vertex of $\Delta^m$. It thus follows that the distance between two points in $\Delta^m$ is maximised when both are vertices. This in turn is maximal when both points are vertices not equal to the zero vector, in which case they are at distance $\sqrt{2}$ from each other.

As for the second part of the claim, we explicitly construct an inscribed sphere of the desired radius. Let $\lambda = \frac{1}{m + \sqrt{m}}$, and define $x = \lambda \vec{1} \subset \Delta^m$. Clearly $B_\lambda(x)$ is tangent to $\Delta^m$ on axis-aligned faces (defined by the set of all $z$ such that  $\pi_i(z) = 0$ in the positive orthant). The remaining face is given by the set of $z$ in the positive orthant such that $\|z\|_1 = 1$. The most extremal point of $B_\lambda(x)$ in the direction of this face is given by $\frac{1}{m}\vec{1}$, hence the sphere is inscribed in $\Delta^m$.
\end{proof}

\begin{figure}[h]
\center{
\begin{tikzpicture}[scale=1]
\tikzstyle{xxx}=[dashed,thick]

\draw[thick, -] (0,0)--(0,5)--(5,0)--(0,0);
\draw[dashed] (0,0)--(2.5,2.5);

\filldraw (1.464,1.464) circle (2pt);
\draw (1.464,1.464) circle (1.464cm);

\draw[thick, <->] (1.464,1.464)--(1.464,0);
\draw[thick, <->] (0,1.464)--(1.464,1.464);

\node at (-1,1.464) {$\frac{1}{m + \sqrt{m}}$};
\node at (1.464,-0.5) {$\frac{1}{m + \sqrt{m}}$};
\node at (2.85,2.85) {$\frac{1}{m}\vec{1}$};

\end{tikzpicture}
\caption{Proof of Lemma \ref{lemma:thick-diam-simplex}
\label{fig:simplex-thickness}
}
}
\end{figure}

\begin{lemma}\label{lemma:thickness-to-distance}
Suppose that $\mathscr{P}$ is an $(m,n)$-polytope partition. Furthermore suppose that $\hat{\mathscr{P}}$ is an empirical labelling with $\tau(\hat{P}_\bot) < \epsilon$. For any $\gamma > \sqrt{2}(m + \sqrt{m})\epsilon$, it follows that $\hat{\mathscr{P}}$ is a $\gamma$-close labelling. In particular, if $\gamma > 4m\epsilon$, the claim also holds.\footnote{An identical result which may be of separate interest holds if we consider partitions of arbitrary $m$-dimensional convex polytopes (not just $\Delta^m$ as per the definition of polytope partitions). As long as we can bound the thickness and diameter of the ambient convex polytope, learning in thickness translates to learning in distance.}
\end{lemma}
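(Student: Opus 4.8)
The plan is to reduce the statement to the first part of Lemma~\ref{lemma:cones-nets}, applied with the ambient polytope $P=\Delta^m$ and the ``bad set'' $A=\hat{P}_\bot$, after checking that its hypotheses hold with the right constant. Fix an arbitrary $x\in\Delta^m$; we must produce a point of $\cup_{i=1}^n\hat{P}_i$ within $\ell_2$-distance $\gamma$. If $x\notin\hat{P}_\bot$, then $x$ itself is labelled and we are done at distance $0\le\gamma$, so we may assume $x\in\hat{P}_\bot$. We may also assume $\emptyset\neq\hat{P}_\bot\subsetneq\Delta^m$: if $\hat{P}_\bot=\emptyset$ the labelling is trivially a $\gamma$-net, and if $\hat{P}_\bot=\Delta^m$ then no query has been answered and there is nothing to approximate. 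In particular $\hat{P}_\bot\subsetneq\Delta^m$, which supplies the set-inclusion hypothesis $A\subsetneq P$ of Lemma~\ref{lemma:cones-nets}.

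Next I would verify the quantitative hypothesis $\gamma>\bigl(\frac{Diam(\Delta^m)}{\tau(\Delta^m)}\bigr)\tau(\hat{P}_\bot)$. By Lemma~\ref{lemma:thick-diam-simplex} we have $Diam(\Delta^m)=\sqrt{2}$ and $\tau(\Delta^m)\ge\frac{1}{m+\sqrt{m}}$, hence $\frac{Diam(\Delta^m)}{\tau(\Delta^m)}\le\sqrt{2}\,(m+\sqrt{m})$; combining this with the hypothesis $\tau(\hat{P}_\bot)<\epsilon$ gives $\bigl(\frac{Diam(\Delta^m)}{\tau(\Delta^m)}\bigr)\tau(\hat{P}_\bot)<\sqrt{2}\,(m+\sqrt{m})\,\epsilon<\gamma$. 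Thus Lemma~\ref{lemma:cones-nets} applies and yields $B_\gamma(x)\cap(\Delta^m\setminus\hat{P}_\bot)\neq\emptyset$. Since $\Delta^m\setminus\hat{P}_\bot=\cup_{i=1}^n\hat{P}_i$ by the definition of $\hat{P}_\bot$, there is a labelled point within distance $\gamma$ of $x$; as $x\in\Delta^m$ was arbitrary, $\cup_{i=1}^n\hat{P}_i$ is a $\gamma$-net of $\Delta^m$, i.e.\ $\hat{\mathscr{P}}$ is a $\gamma$-close labelling.

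For the ``in particular'' clause I would invoke the elementary inequality $\sqrt{2}\,(m+\sqrt{m})\le 4m$, valid for every integer $m\ge1$ (it rearranges to $\sqrt{2}\,(1+\tfrac{1}{\sqrt{m}})\le 2\sqrt{2}\le 4$), so any $\gamma>4m\epsilon$ also satisfies $\gamma>\sqrt{2}\,(m+\sqrt{m})\epsilon$ and the argument above applies verbatim. I do not expect a genuine obstacle here: the only points requiring care are (i) keeping the chain of inequalities strict, since Lemma~\ref{lemma:cones-nets} needs a strict inequality, and (ii) the degenerate cases $\hat{P}_\bot\in\{\emptyset,\Delta^m\}$ noted above. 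The footnote's generalisation is immediate from the same argument, replacing $\sqrt{2}\,(m+\sqrt{m})$ by $Diam(K)/\tau(K)$ for the ambient convex polytope $K$ whenever both quantities are bounded.
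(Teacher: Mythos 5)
Your proof is correct and follows essentially the same route as the paper's: apply the first clause of Lemma~\ref{lemma:cones-nets} with $P=\Delta^m$, $A=\hat{P}_\bot$, and bound $\mathrm{Diam}(\Delta^m)/\tau(\Delta^m)$ via Lemma~\ref{lemma:thick-diam-simplex}. You are in fact slightly more careful than the paper in flagging the degenerate possibilities and in spelling out the elementary inequality $\sqrt{2}(m+\sqrt{m})\le 4m$ for $m\ge 1$.
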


\begin{proof}
From Lemma \ref{lemma:thick-diam-simplex}, we know that $\tau(\Delta^m) \geq \frac{1}{m + \sqrt{m}}$ and $Diam(\Delta^m) = \sqrt{2}$. Suppose that $x \in \hat{P}_\bot$. From Lemma \ref{lemma:cones-nets} our choice of $\gamma$ implies $B_\gamma(x) \cap (\Delta_m \setminus \hat{P}_\bot) \neq \emptyset$. This in turn means that $\hat{\mathscr{P}}$ is a $\gamma$-close labelling. As for the final claim, this holds since $m \geq 1$.
\end{proof}

\section{Constant-Dimension Generalised Binary Search for \texorpdfstring{$Q_\ell$}{}} 

Let us first build some intuition for why generalisations of binary search lead to query efficient algorithms for computing $\epsilon$-close labellings of $(m,n)$-polytope partitions. 

Finding an $\epsilon$-close labelling of a $(1,n)$-polytope partition using a lexicographic oracle is the same as approximately learning $n$ sub-intervals of $[0,1]$. Using binary search techniques and an optimal $O(n \log (\frac{1}{\epsilon}))$ queries, we can compute an $\epsilon$-close labelling.

Query efficiency comes from the fact that if $x,y$ have the same label, it becomes unnecessary to further query any point in $[x,y]$. To be more specific, unless $[x,y]$ contains the boundary of a sub-interval, all labels can be inferred within $[x,y]$. Boundary points of intervals thus serve as ``critical points'' with respect to the query oracle $Q_\ell$, where the information it provides changes.

We will use a higher-dimensional analogue of this property at the core of CD-GBS. At a high level, suppose that we have an $(m,n)$-polytope partition that we want to learn via queries and an algorithm for computing arbitrary $\epsilon$-close labellings of $(m-1,n)$-polytope partitions. We can use this algorithm as a subroutine on the cross-sections of two coordinates $x \neq y$ and ask whether the convex combination of these two $\epsilon$-close labellings will itself result in a $g(\epsilon)$-close labelling of $\mathscr{P}^{x,y}$ (recall Definition \ref{def:cross-section}) for a reasonable $g$. 

Suppose that we could compute $0$-close labellings (i.e. perfectly recover a polytope partition), it is clear that if we let $\mathscr{P}_V$ be the set of all vertices of all $P_i$ in the polytope partition, then $\pi(\mathscr{P}_V)$ is a suitable set of critical points (not necessarily the smallest one though) in the sense that if $[x,y] \cap \pi(\mathscr{P}_V) = \emptyset$, then the convex combination of both lower-dimensional $0$-close labellings for $\mathscr{P}^x$ and $\mathscr{P}^y$ will result in a $0$-close labelling for $\mathscr{P}^{x,y}$. Taking the contrapositive of this, if the convex combination does not result in a $0$-close labelling ---a condition which can be verified--- then we know there is a critical point in $[x,y]$. Thus we recover a binary search mechanism, whereby we can isolate critical points up to a desired tolerance $\epsilon$.

\subsection{Warm-up: Learning Slices of Single Polytopes}
We set up important groundwork by focusing on arbitrary polytopes $P \subset \mathbb{R}^m$. We let $\pi:\mathbb{R}^m \rightarrow \mathbb{R}$ be the projection function introduced in Definition \ref{def:cross-section}, and we recall Definition \ref{def:vertex-critical-pts} regarding the vertex critical coordinates of $P$ denoted by $C_P$.

\begin{lemma} \label{lemma:perfect-fleshing}
Suppose that $x,y \in \mathbb{R}$ are such that $[x,y] \bigcap C_P = \emptyset$. Then taking convex hulls of cross-sections we get $Conv(P^x, P^y) = P^{x,y}$.
\end{lemma}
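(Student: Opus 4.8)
The plan is to establish the two inclusions separately; the containment $Conv(P^x,P^y)\subseteq P^{x,y}$ is routine, and essentially all the work lies in the reverse inclusion $P^{x,y}\subseteq Conv(P^x,P^y)$. Throughout I may assume $x<y$: if $x=y$ then $P^{x,y}=P^x$ and $Conv(P^x,P^x)=P^x$, and if $P^{x,y}=\emptyset$ then $P^x=P^y=\emptyset$ and both sides are empty.

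For $Conv(P^x,P^y)\subseteq P^{x,y}$: any point $w=\lambda a+(1-\lambda)b$ with $a\in P^x$, $b\in P^y$, $\lambda\in[0,1]$ lies in $P$ by convexity of $P$, and $\pi(w)=\lambda x+(1-\lambda)y\in[x,y]$ since $\pi$ is linear; hence $w\in P^{x,y}$.

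For the reverse inclusion I would use the representation $P=Conv(V_P)$. Fix $z\in P^{x,y}$ and write $z=\sum_i\lambda_i v_i$ with $v_i\in V_P$, $\lambda_i>0$, $\sum_i\lambda_i=1$. Because $C_P=\pi(V_P)$ and $[x,y]\cap C_P=\emptyset$, every $v_i$ satisfies $\pi(v_i)\notin[x,y]$, so the index set partitions as $L\sqcup R$ where $\pi(v_i)<x$ for $i\in L$ and $\pi(v_i)>y$ for $i\in R$. Setting $t=\pi(z)=\sum_i\lambda_i\pi(v_i)\in[x,y]$, a convexity-of-averages argument forces both $L$ and $R$ to be nonempty: if $R=\emptyset$ then $t<x$, and if $L=\emptyset$ then $t>y$, both impossible. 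Put $\lambda_L=\sum_{i\in L}\lambda_i$, $\lambda_R=\sum_{i\in R}\lambda_i$, $p=\lambda_L^{-1}\sum_{i\in L}\lambda_i v_i$, and $q=\lambda_R^{-1}\sum_{i\in R}\lambda_i v_i$. Then $p$ and $q$ are convex combinations of vertices of $P$, so $p,q\in P$, and one checks $z=\lambda_L p+\lambda_R q$ with $\pi(p)<x<y<\pi(q)$.

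Finally I would work along the segment $[p,q]\subseteq P$. On this segment $\pi$ is affine and, since $\pi(p)\neq\pi(q)$, strictly monotone, so there exist unique points $p',q'\in[p,q]$ with $\pi(p')=x$ and $\pi(q')=y$; as $[p,q]\subseteq P$ we get $p'\in P^x$ and $q'\in P^y$. Since $\pi(z)=t\in[x,y]$ and $\pi$ is monotone along the segment, $z$ lies on the subsegment $[p',q']$, so $z\in Conv(P^x,P^y)$, as required. The one genuinely load-bearing step is the partition of $V_P$ into a ``left'' block and a ``right'' block, which is precisely where the hypothesis $[x,y]\cap C_P=\emptyset$ enters; an alternative would be to invoke the description of the vertices of $P\cap\{x\le\pi\le y\}$ (all of which must lie on the hyperplanes $\pi=x$ or $\pi=y$, since no vertex of $P$ lies strictly inside the slab), but the barycentric argument above avoids importing that machinery.
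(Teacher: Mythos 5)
Your proof is correct, and it takes a genuinely different (and more self-contained) route than the paper's. The paper's argument is a two-liner: it asserts that because no vertex of $P$ projects into $[x,y]$, every vertex of the slab polytope $P^{x,y}$ must lie on one of the two bounding hyperplanes $\pi^{-1}(x)$ or $\pi^{-1}(y)$, i.e.\ in $P^x\cup P^y$; then $P^{x,y}=Conv(\text{its vertices})\subseteq Conv(P^x,P^y)$. That is exactly the ``alternative'' you mention and consciously avoid. Your barycentric version starts instead from the vertex representation of $P$ itself, splits the supporting vertices of $z$ into a left block and a right block, and then slides along the segment joining their barycenters to land on $P^x$ and $P^y$. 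What the paper's version buys is brevity, at the cost of silently invoking the (standard, but unstated) structure theorem for vertices of a polytope intersected with a slab. What yours buys is that every step is elementary and verifiable from $P=Conv(V_P)$ alone; the nonemptiness of both blocks, the bound $\pi(p)<x<y<\pi(q)$, and the final intersection with the two hyperplanes are all checked explicitly. One tiny remark: it's worth noting (as your setup implicitly guarantees) that $z$ cannot itself be a vertex of $P$, since $\pi(z)\in[x,y]$ while $[x,y]\cap C_P=\emptyset$; this is why the barycentric split never degenerates to a single vertex. Otherwise the argument is complete.
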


\begin{proof}
 $[x,y] \cap C_{P} = \emptyset$ implies the vertices of the polytope $P^{x,y}$ lie in $\mathscr{P}_x$ and $\mathscr{P}_y$. Since the convex hull of the set of all vertices of a bounded polytope is the polytope itself, the claim follows.
\end{proof}

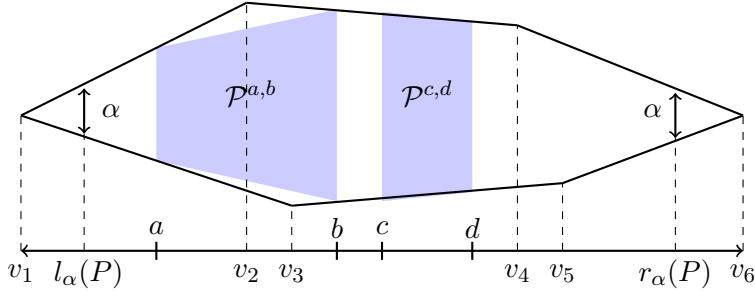
\begin{figure}[h]
\center{
\begin{tikzpicture}[scale=0.6]
\tikzstyle{xxx}=[dashed,thick]

\fill[blue!20](-4,2.1)--(-2,2.3)--(-2,6.1)--(-4,6.3)--cycle; 

\fill[blue!20](-9,3)--(-5,2.1)--(-5,6.35)--(-9,5.5)--cycle; 

\draw[thick, -](-6,2)--(0,2.5); 
\draw[thick, -](0,2.5)--(4,4); 
\draw[thick, -](4,4)--(-1,6); 
\draw[thick, -](-1,6)--(-7,6.5); 
\draw[thick, -](-7,6.5)--(-12,4); 
\draw[thick, -](-12,4)--(-6,2); 

\draw[dashed](-12,1)--(-12,4);
\draw[dashed](-7,1)--(-7,6.5);
\draw[dashed](-6,1)--(-6,2);
\draw[dashed](-1,1)--(-1,6);
\draw[dashed](0,1)--(0,2.5);
\draw[dashed](4,1)--(4,4);

\draw[dashed](-10.6,1)--(-10.6,3.6);
\draw[dashed](2.5,1)--(2.5,3.5);

\draw[thick, <->](-12,1)--(4,1); 

\node at (-12,0.5){$v_1$};
\node at (-7,0.5){$v_2$};
\node at (-6,0.5){$v_3$};
\node at (-1,0.5){$v_4$};
\node at (0,0.5){$v_5$};
\node at (4,0.5){$v_6$};

\node at (-10.5,0.5){$l_\alpha(P)$};
\node at (2.5,0.5){$r_\alpha(P)$};

\draw[thick, -](-4,0.8)--(-4,1.2);
\draw[thick, -](-2,0.8)--(-2,1.2);

\node at (-4,1.5){$c$};
\node at (-2,1.5){$d$};

\node at (-3,4.5){$\mathcal{P}^{c,d}$};

\draw[thick, -](-9,0.8)--(-9,1.2);
\draw[thick, -](-5,0.8)--(-5,1.2);

\node at (-9,1.5){$a$};
\node at (-5,1.5){$b$};

\node at (-6.9,4.5){$\mathcal{P}^{a,b}$};

\draw[thick, <->](-10.6,3.6)--(-10.6,4.6); 
\node at (-10,4.1){$\alpha$};

\draw[thick, <->](2.5,3.5)--(2.5,4.5); 
\node at (2,4.1){$\alpha$};

\end{tikzpicture}
\caption{$Conv(P^a, P^b) \neq P^{a,b}$ and $Conv(P^c,P^d) = P^{c,d}$}
\label{fig:perfect-fleshing-2}
}
\end{figure}

This property of polytopes whereby convex combinations give rise to complete information except when traversing a discrete set of critical points (visualised in Figure 
\ref{fig:perfect-fleshing-2}) is critical to CD-GBS. With query access to polytopes however, we no longer fully recover $P_x$ perfectly, but instead an approximation given by an $\epsilon$-close labelling, $\hat{P}_x$. It becomes more subtle to show that by taking convex hulls of $\hat{P}_x$ and $\hat{P}_y$, we recover the desired information along $[x,y]$.

\subsection{Necessary Machinery}\label{sec:machinery}
We delve into the specifics of CD-GBS by defining some important machinery. We recall our notion of thickness in Definition \ref{def:thickness}, and see that it satisfies a sub-additivity property when the sets being considered are convex polytopes:

\begin{lemma}\label{lemma:subadd}
Let $P_1,..,P_k \subseteq \mathbb{R}^m$ be convex polytopes. Then $\tau \left( \cup_i P_i \right) < \frac{10}{3}(\sum_i \tau(P_i)) (m+1)^{3/2}$.
\end{lemma}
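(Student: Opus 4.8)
The plan is to reduce the union bound to a statement about a single Chebyshev ball meeting finitely many polytopes, and then to control the "local diameter" of the union near the center of that ball. Concretely, suppose $B_\delta(z) \subseteq \bigcup_i P_i$ with $\delta$ close to $\tau(\bigcup_i P_i)$; I want to show $\delta$ cannot exceed $\tfrac{10}{3}(m+1)^{3/2}\sum_i \tau(P_i)$. The ball $B_\delta(z)$ is covered by the $k$ convex pieces $B_\delta(z)\cap P_i$, so by a standard measure/volume argument one of them, say $B_\delta(z)\cap P_j$, has volume at least $\tfrac1k$ times that of $B_\delta(z)$; since this intersection is a convex body of volume $\geq k^{-1}\mathrm{vol}(B_\delta)$ sitting inside $P_j$, I can lower bound $\tau(P_j)$ in terms of $\delta$, $k$, and $m$. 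The standard inequality here is that a convex body $K\subseteq\RR^m$ contains a ball of radius at least $\mathrm{vol}(K)/\big(m\,\mathrm{vol}_{m-1}(\partial K)\big)$, or more crudely $\tau(K)\geq c_m \mathrm{vol}(K)^{1/m}/\mathrm{diam}(K)^{(m-1)/m}$ — either route gives a bound of the form $\tau(P_j)\gtrsim \delta/(m^{3/2} k^{1/m}\cdot\text{something})$, but this has a spurious $k$ dependence that the lemma's bound does not.

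So the volume route alone is too lossy; the correct approach is more geometric and avoids any dependence on $k$. The key observation is the \emph{one-dimensional} principle underlying Lemma \ref{lemma:perfect-fleshing}: along any line $\ell$ through $z$, the chord $\ell\cap B_\delta(z)$ is a segment of length $2\delta$ partitioned into at most $k$ subsegments by the convex pieces $P_i$, so \emph{some} piece $P_j$ contains a subsegment of length at least $2\delta/k$ — but again that is $k$-dependent. To remove $k$, I instead argue that there must exist a piece $P_j$ and a direction such that $P_j$ contains a segment of length $\Omega(\delta)$ in \emph{every} direction within a small solid-angle cone, which forces $\tau(P_j)=\Omega(\delta)$; pigeonholing over a $\delta$-net of directions on the sphere, the number of "direction cells" is bounded independently of $k$ (it depends only on $m$), so some $P_j$ must be responsible for a whole cell's worth of long chords. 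Turning a family of long chords through a common point, covering a cone of half-angle $\Theta(1/\sqrt m)$ say, into an inscribed ball of radius $\Omega(\delta)$ is an elementary convexity argument (take the convex hull of the chord endpoints; it contains a ball whose radius is the chord half-length times $\sin$ of the half-angle). Summing the resulting lower bounds $\tau(P_j)\gtrsim \delta/\sqrt m$ over the cells, and accounting for the $(m+1)^{3/2}$-type factor coming from the number of cells needed to cover the sphere and the loss in the cone-to-ball step, yields $\delta < \tfrac{10}{3}(m+1)^{3/2}\sum_j\tau(P_j)$.

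The main obstacle I anticipate is pinning down the constant $\tfrac{10}{3}$ and the exponent $3/2$ simultaneously: the cone-to-inscribed-ball conversion loses a factor depending on the cone's half-angle, the spherical covering number contributes a factor growing like $m^{(m-1)/2}$ if done naively (far worse than $(m+1)^{3/2}$), so the argument cannot afford a crude net over \emph{all} directions — it must instead use a single well-chosen cone (e.g. the cone of directions in which $B_\delta(z)$'s chord lies mostly in one piece, extracted by an averaging/pigeonhole over just $O(m)$ coordinate-aligned directions, mirroring the $\tfrac{1}{m+\sqrt m}$ bound in Lemma \ref{lemma:thick-diam-simplex}). I would therefore structure the proof so that only an $O(m)$-sized (not exponential) collection of test directions is used, which is exactly what keeps the blow-up polynomial; verifying that $O(m)$ test directions suffice to force a cone of chords into one polytope is the delicate combinatorial-geometric step, and getting its constants to land at $\tfrac{10}{3}(m+1)^{3/2}$ is where the careful bookkeeping lives. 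Everything else — the reduction to one Chebyshev ball, the chord decomposition, and the final summation — is routine.
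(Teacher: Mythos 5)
Your proposal never reaches a complete argument, and the route you sketch has gaps that are unlikely to close. The paper's proof is a single, clean volume computation: take the Chebyshev ball $B_R(x)\subseteq\bigcup_i P_i$ with $R=\tfrac{10}{3}(\sum_i\tau(P_i))(m+1)^{3/2}$; for each $P_i$, Fritz John's theorem gives an ellipsoid $\mathcal{E}\subseteq P_i\subseteq m\mathcal{E}$, whose shortest axis is at most $\tau(P_i)$, so $P_i$ lies in a slab of width at most $2m\tau(P_i)$; hence $V(P_i\cap B_R(x))\leq 2m\tau(P_i)\,S(m-1,R)$ where $S(d,R)$ is the volume of the $d$-ball of radius $R$; summing over $i$ and comparing to $S(m,R)$ via the Stirling estimate $S(m,R)/S(m-1,R)\geq 0.6R(m+1)^{-1/2}$ forces $R<\tfrac{10}{3}(\sum_i\tau(P_i))(m+1)^{3/2}$ to fail, a contradiction. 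Crucially, no piece is singled out: the additivity of volume over the pieces is exactly what makes $\sum_i\tau(P_i)$ appear without any $k$-dependence.

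Your first attempt pigeonholes to a single $P_j$ of large volume, which, as you correctly note, builds in a factor of $k$. But you then abandon the volume route entirely, which is the wrong reaction: the fix is not to pigeonhole at all, but to bound each $V(P_i\cap B_R)$ uniformly by a \emph{slab} estimate, which is precisely what John's theorem supplies and which you do not mention.

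Your second attempt (chords through $z$, a spherical covering by direction cells, pigeonholing a long chord in a whole cone of directions into one piece, then a cone-to-inscribed-ball conversion) has two gaps you do not resolve. First, nothing prevents the same $P_j$ from winning several direction cells, so your final step ``summing the resulting lower bounds $\tau(P_j)\gtrsim\delta/\sqrt m$ over the cells'' can over-count and does not yield a bound on $\sum_j\tau(P_j)$. Second, your fix of using only $O(m)$ coordinate-aligned test directions is not justified: for a fixed line direction $u$, the chord length of $P_j$ through $z$ is controlled by the width of $P_j$ in direction $u$, and a convex body can be arbitrarily wide in any finite set of prescribed directions while having tiny thickness (a thin slab tilted to avoid your $O(m)$ test directions). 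A fixed finite family of directions therefore cannot certify thinness; you need an estimate that averages over \emph{all} directions, which is exactly what the volume/slab argument does and what makes John's theorem the right tool. You also acknowledge openly that you have not pinned down the constants, so even setting aside these structural issues the argument is not complete.
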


\begin{proof}
Let $R = \frac{10}{3}(\sum_i \tau(P_i)) (m+1)^{3/2}$. Suppose that $x \in \cup_i P_i$. We will show that $B_R(x)$ cannot be a subset of $\cup_i P_i$ via a volume argument. For this proof, we will let $V(A)$ denote the volume of the set $A \subset \mathbb{R}^m$. We will also let $S(m,R)$ denote the volume of the hypersphere in $m$ dimensions of radius $R$.

First of all, we need to show that for a given $P_i$, we have the following volume bound:
$$
V(P_i \cap B_R(x)) \leq 2\tau(P_i) S(m-1,R).
$$
This follows from Fritz John's Theorem \cite{Fritz} especially as referenced in  \cite{BallConvex}.
The statement of this theorem says that if $K\subseteq \mathbb{R}^m$ is a convex body, then there exists a unique ellipsoid of maximal volume $\mathcal{E}\subseteq K$, with the property that $\mathcal{E} \subseteq P \subseteq m \mathcal{E}$. Any higher-dimensional ellipsoid has at most $m$ axes of symmetry, and for $\mathcal{E}$, it must be the case that the smallest axis is at most the thickness of the convex body: $\tau(K)$ (Otherwise there would be a sphere of radius larger than $\tau(K)$ inscribed in $\mathcal{E}$, contradicting the definition of thickness). Furthermore, since $K \subseteq m \mathcal{E}$, the projection of $K$ onto this axis must be contained in a segment of length at most $2m \tau(K)$. This means that if we take an arbitrary polytope $P_i$, there exist two parallel supporting hyperplanes to $P_i$, call them $H_1$ and $H_2$ that are at most $2m\tau(P_i)$ apart. Call the convex body between these halfspaces $H$. Since the majority of the mass of a hypersphere is contained around its centre, it follows that the volume of the intersection of $H$ with $B_R(x)$ is maximised when $x$ is equidistant from $H_1$ and $H_2$. Furthermore, the volume of this cross-section is bounded by the distance between $H_1$ and $H_2$ multiplied by $S(m-1,R)$ which is at most $2\tau(P_i) S(m-1,R)$. Since $V(P_i \cap B_R(x)) \leq V(H \cap B_R(x))$, the claim holds.

Now if we take a union bound over all $i$, we get $V((\cup_i P_i) \cap B_R(x)) \leq 2m\sum_i\tau(P_i) S(m-1,R)$. If it were the case that the right hand side were strictly less than $S(m,R)$, we would have found an $R$ such that $B_R(x)$ contains points not contained in any $P_i$. To this end, we use the following ratio:
$$
\frac{S(m,R)}{S(m-1,R)} = R \sqrt{\pi} \frac{\Gamma(\frac{m+1}{2})}{\Gamma(\frac{m+2}{2})} \geq 0.6 R (m+1)^{-1/2}.
$$
The inequality uses Stirling's formula for the gamma function. We can therefore see that with our value $R > \frac{10}{3}(\sum_i \tau(P_i)) (m+1)^{3/2}$, we get the desired volume bound.
\end{proof}

For a given polytope partition $\mathscr{P} = \{P_i\}_i$, it will be important to establish thickness bounds on $P_i$ at specific cross-sections. 

\begin{definition}[$\alpha$-Critical Coordinates]\label{def:critical-thick}
Let $P \subset \mathbb{R}^m$ be a polytope. For $\alpha > 0$, we define $l_\alpha(P) = \inf \{ x\in \mathbb{R} \ | \ \tau(P^x) \geq \alpha\}$ and $r_\alpha(P) = \sup \{ x\in \mathbb{R} \ | \ \tau(P^x) \geq \alpha\}$ so that $\forall z \in \mathbb{R}$, $\tau(P^z) \geq \alpha$ if and only if $z \in [l_\alpha(P), r_\alpha(P)]$ (Here thickness is with respect to the natural embedding of $P^x$ in $\mathbb{R}^{m-1}$). These are called {\em $\alpha$-critical coordinates} for $P$. 
\end{definition}

The previous definition allows us to associate to each polytope $P_i$ a segment of $[0,1]$ within which cross-sections of $P_i$ are thick above a threshold. By combining this with Definition \ref{def:vertex-critical-pts} we get the correct notion of critical coordinates mentioned at the beginning of Section \ref{sec:discrete-nash}. 

\begin{definition}[Critical Coordinates of a $(m,n)$-Polytope Partition]\label{def:critical-partition}
Suppose that $\mathscr{P} = \{P_1,...,P_n\}$ is an $(m,n)$-polytope partition. For $\alpha > 0$, we let $C_\mathscr{P}^\alpha$ be the union of the sets of all vertex critical coordinates of all $P_i$ as defined in Definition \ref{def:vertex-critical-pts}, and the set of all $\alpha$-critical coordinates for all $P_i$ as in Definition \ref{def:critical-thick}. Specifically, $C_\mathscr{P}^\alpha = \left( \cup_i C_{P_i} \right) \bigcup \left( \cup_i \{l_\alpha(P_i), r_\alpha(P_i) \}   \right)$.
\end{definition}

As mentioned in the beginning of this section, CD-GBS clusters queries around critical coordinates (up to a desired tolerance). For this reason it is important to bound the number of critical coordinates in a given $(m,n)$-Polytope partition.

\begin{lemma}\label{lemma:bound-critical-cardinality}
If $\mathscr{P}$ is a $(m,n)$-polytope partition $|C_\mathscr{P}^\alpha| \leq \binom{n + m} {m} + 2n$.
\end{lemma}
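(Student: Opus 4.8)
The plan is to bound the two contributions to $C_\mathscr{P}^\alpha$ separately, according to the defining expression $C_\mathscr{P}^\alpha = \left( \cup_i C_{P_i} \right) \bigcup \left( \cup_i \{l_\alpha(P_i), r_\alpha(P_i)\} \right)$. The second part is immediate: each polytope $P_i$ contributes at most the two points $l_\alpha(P_i)$ and $r_\alpha(P_i)$, so $\left| \cup_i \{l_\alpha(P_i), r_\alpha(P_i)\} \right| \leq 2n$. The whole difficulty is therefore to show $\left| \cup_i C_{P_i} \right| \leq \binom{n+m}{m}$.

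For the vertex critical coordinates, I would first recall that $C_{P_i} = \pi(V_{P_i})$, so $\cup_i C_{P_i} = \pi(V)$ where $V = \cup_i V_{P_i}$ is the set of all vertices of all polytopes in the partition. Thus it suffices to bound $|V|$, or more precisely the number of \emph{distinct} first coordinates among points of $V$; bounding $|V|$ itself by $\binom{n+m}{m}$ would suffice. The key structural fact I would use is that $\mathscr{P}$ is a partition of $\Delta^m$ into convex polytopes with disjoint relative interiors — i.e. a polytopal complex (or at least something close enough to one) subdividing the $m$-simplex. For such a subdivision, the total number of vertices is controlled by the combinatorics of how the supporting hyperplanes of the facets can intersect. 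Concretely, the facets of the $P_i$ lie on some collection of hyperplanes; a vertex of the arrangement is the intersection of $m$ of them (generically), and one then counts the number of such intersection points. If there are $H$ hyperplanes in general position the arrangement has $\binom{H}{m}$ vertices, so I would want to argue that the relevant number of hyperplanes is at most $n+m$: the $m+1$ facet-hyperplanes of $\Delta^m$ itself, together with at most one "new" internal hyperplane introduced per polytope beyond the first — giving roughly $n + m$ — and then invoke $\binom{n+m}{m}$ as the arrangement-vertex bound, noting that vertices of the $P_i$ are among the arrangement vertices. (A cleaner route, if the paper's conventions allow, is a direct induction on $n$: adding one polytope to a partition of a convex region splits one cell with one new hyperplane, increasing the vertex count in a way that telescopes to the binomial bound.)

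The main obstacle I anticipate is making the hyperplane-counting rigorous \emph{without} a general-position assumption: in a genuine partition the hyperplanes are highly degenerate (many facets coplanar, many vertices coincident), and degeneracy only \emph{reduces} the number of distinct vertices, so the general-position count $\binom{H}{m}$ remains a valid upper bound — but one has to say this carefully, e.g. via a perturbation argument that each vertex of each $P_i$ is the limit of (or can be charged to) a distinct vertex of a perturbed generic arrangement on the same hyperplanes. Once $|\cup_i C_{P_i}| \leq |V| \leq \binom{n+m}{m}$ is established, combining with the trivial $2n$ bound from the $\alpha$-critical coordinates gives $|C_\mathscr{P}^\alpha| \leq \binom{n+m}{m} + 2n$, completing the proof.
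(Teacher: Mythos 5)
The trivial $2n$ bound for the $\alpha$-critical coordinates matches the paper, but your route to the $\binom{n+m}{m}$ term has a real flaw: the claim that the partition introduces ``at most one new internal hyperplane per polytope beyond the first,'' giving $\leq n-1$ internal hyperplanes, is false. A new polytope can be adjacent to every previously placed polytope, with each adjacency lying on a distinct supporting hyperplane, so the number of internal facet-hyperplanes can grow like $\binom{n}{2}$ rather than $n-1$ (power diagrams / Voronoi diagrams are the standard example, and so in particular are the upper-envelope partitions that this paper actually cares about). The same false premise undermines the parenthetical ``cleaner'' induction: a general $(m,n)$-polytope partition need not arise from a sequence of single-hyperplane cuts, so the telescoping never gets off the ground. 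If you instead plugged the true hyperplane count $H = O(n^2) + m + 1$ into the arrangement bound $\binom{H}{m}$, you would obtain something much larger than the claimed $\binom{n+m}{m}$, so the approach cannot be patched just by being more careful about degeneracies.

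The paper avoids hyperplanes altogether: it charges each vertex $v$ directly to a size-$m$ subset of the pool of $n$ \emph{polytopes} $P_i$ and $m$ \emph{boundary halfspaces} of $\Delta^m$ (an $(n+m)$-element pool), observing that at least $m$ of these objects meet at $v$ and that each such $m$-subset can give rise to at most one vertex, because $v$ is a point of $\Delta^m$. This sidesteps your degeneracy/perturbation concern entirely, since it never invokes genericity of an arrangement; the crucial move is to count polytopes as objects (of which there are only $n$), not their facet hyperplanes (of which there can be many per polytope). If you want to salvage your sketch, the thing to fix is exactly this: replace ``hyperplanes'' by ``polytopes and boundary halfspaces'' as the objects over which you take $m$-subsets.
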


\begin{proof}
For any given $(m,n)$-polytope partition, $\mathscr{P}$, if a vertex occurs, it must be the case that out of the $n$ polytopes in $\mathscr{P}$ and $m$ boundary halfspaces of $\Delta^m$, $m$ of them meet. Furthermore, each collection of $m$ polytopes and boundary halfspaces can give rise to only one vertex (which can be seen as a consequence of the fact that vertices are points in $\Delta^m$). It follows that the set of all vertex critical coordinates is at most $\binom {n + m} {m}$ and the first part of the bound holds. As for the second half, there are at most two $\alpha$-critical coordinates per $P_i$, which completes the expression above.
\end{proof}

With this machinery in hand, we are in a position to prove the main result necessary to demonstrate correctness of CD-GBS. We show that if $x,y \in [0,1]$ are such that $[x,y]$ contains no critical coordinates, then computing sufficiently fine empirical labellings of $\mathscr{P}^x$ and $\mathscr{P}^y$ with $Q_\ell$ will contain enough information to compute an $\epsilon$-close labelling of $\mathscr{P}^{x,y}$ by simply taking convex combinations of the empirical labellings at both cross-sections.

\begin{lemma}\label{lemma:crux-for-gbs}
Given $m,n,\epsilon>0$ let $\alpha = \frac{\epsilon}{20 n m^{5/2}} $ and $\beta =  \frac{\epsilon^2}{85 n m^{5/2}} $.
Suppose that $\mathscr{P}$ is an $(m,n)$-polytope partition and that the following hold:
\begin{itemize}
    \item $x,y \in [0,1]$ are such that $x < y \leq 1 - \frac{\epsilon}{3}$.
    \item $[x,y] \cap C_{\mathscr{P}}^\alpha = \emptyset$.
    \item $\hat{\mathscr{P}}^x$ and $\hat{\mathscr{P}}^y$ are empirical labellings of $\mathscr{P}^x$ and $\mathscr{P}^y$ computed via $Q_\ell$, such that $\cup_i \hat{P}_i^x$ and $\cup_j \hat{P}_j^y$ are $\beta$-nets for $(\Delta^m)^x$ and $(\Delta^m)^y$ respectively.
\end{itemize}
Then $\bigcup_i Conv(\hat{P}^x_i, \hat{P}^y_i)$ is an $\epsilon$-net of $(\Delta^m)^{x,y}$.
\end{lemma}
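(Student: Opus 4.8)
The plan is to fix a point $p \in (\Delta^m)^{x,y}$ and exhibit a point $q \in \bigcup_i Conv(\hat P^x_i, \hat P^y_i)$ with $\|p - q\|_2 \le \epsilon$. Write $p = (t, p')$ where $t = \pi(p) \in [x,y]$, so $p$ lies on the $t$-cross-section of some $P_i$, i.e. $p \in P_i^t$. The natural candidate for $q$ is obtained by writing $t$ as a convex combination $t = \lambda x + (1-\lambda) y$ and taking $q = \lambda q_x + (1-\lambda) q_y$ where $q_x \in \hat P_i^x$, $q_y \in \hat P_i^y$ are chosen to approximate the ``trace'' of $p$ back to the two end cross-sections. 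Since $[x,y] \cap C_{\mathscr P}^\alpha = \emptyset$, in particular $[x,y]$ misses all vertex critical coordinates, so by Lemma~\ref{lemma:perfect-fleshing} applied to $P_i$ we have $Conv(P_i^x, P_i^y) = P_i^{x,y}$; hence $p$ itself can be written as $\lambda \bar q_x + (1-\lambda)\bar q_y$ with $\bar q_x \in P_i^x$, $\bar q_y \in P_i^y$. The issue is that $\bar q_x, \bar q_y$ need not lie in the \emph{empirical} polytopes $\hat P_i^x, \hat P_i^y$, only in the true ones. We must therefore replace them with nearby empirical points and control the error.

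The key sub-step is to argue that $\bar q_x$ (resp.\ $\bar q_y$) is within $O(\beta \cdot \mathrm{poly}(n,m))$ of $\hat P_i^x$ (resp.\ $\hat P_i^y$) in $\ell_2$ distance. Here is where the $\beta$-net hypothesis and the thickness machinery combine. Because $[x,y]$ also misses the $\alpha$-critical coordinates $l_\alpha(P_i), r_\alpha(P_i)$, either $\tau(P_i^z) \ge \alpha$ for all $z \in [x,y]$ or $\tau(P_i^z) < \alpha$ for all such $z$; the point $\bar q_x \in P_i^x$ lies in a cross-section whose thickness is thus controlled, and I would use Lemma~\ref{lemma:cones-nets} (with $P = P_i^x$, $A$ the unlabelled-within-$P_i^x$ part, which has thickness $\le \beta$) together with the diameter bound $Diam(P_i^x) \le Diam(\Delta^m) = \sqrt{2}$ to conclude that every point of $P_i^x$ is within $\frac{\sqrt 2}{\tau(P_i^x)}\beta$ of a labelled point of $\hat P_i^x$ — provided $P_i^x$ is thick, i.e.\ $\tau(P_i^x) \ge \alpha$. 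When $P_i^x$ is \emph{thin} (thickness $< \alpha$), then the whole slice $P_i^{x,y}$ is sandwiched in a slab of width $\le 2(m{-}1)\alpha$ by the Fritz John argument used in Lemma~\ref{lemma:subadd}, so $p$ is automatically within $O(m\alpha)$ of \emph{some} labelled point in a neighbouring thick region (or the slice is negligibly close to an already-$\beta$-net-covered cross-section). Either way the displacement from $\bar q_x$ to a genuine empirical point is at most roughly $\frac{\sqrt 2 \beta}{\alpha} + O(m\alpha)$.

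Putting it together: $q = \lambda q_x + (1-\lambda) q_y$ with $q_x \in \hat P_i^x$, $q_y \in \hat P_i^y$, and $\|q - p\|_2 \le \lambda\|q_x - \bar q_x\|_2 + (1-\lambda)\|q_y - \bar q_y\|_2 + (\text{error from approximating } t)$. Substituting $\alpha = \frac{\epsilon}{20 n m^{5/2}}$ and $\beta = \frac{\epsilon^2}{85 n m^{5/2}}$ one gets $\frac{\sqrt 2 \beta}{\alpha} = \Theta\!\big(\frac{\epsilon}{n m^{\,?}}\big)$ and $O(m\alpha) = \Theta\!\big(\frac{\epsilon}{n m^{3/2}}\big)$; there is also a factor from the number of pieces $p'$ may need to be split across and from the scaling factor $(1-x)^{-1} \le (\epsilon/3)^{-1}\cdot$\,(bounded, via $y \le 1 - \epsilon/3$) relating cross-section thickness to the normalised $(m{-}1)$-simplex — this is exactly why the hypothesis $y \le 1 - \epsilon/3$ appears and why $\alpha,\beta$ carry those particular polynomial denominators. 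The constants are chosen so the total is $\le \epsilon$.

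\textbf{Main obstacle.} The delicate point is the thin-slice case: when $\tau(P_i^z)$ is below $\alpha$ throughout $[x,y]$, the empirical labelling of $\mathscr P^x$ gives essentially no usable labelled point \emph{inside} $P_i^x$, so one cannot trace $p$ back through $P_i$ at all. One must instead show that such a $p$ is geometrically close to a point of a \emph{different}, thick region whose label \emph{is} recovered — i.e.\ that thin slabs cannot ``insulate'' a point from all thick regions by more than $\epsilon$. This requires a careful packing/volume argument bounding how much of $(\Delta^m)^{x,y}$ can be occupied by the union of thin slices (using Lemma~\ref{lemma:subadd}-style reasoning and $|C_{\mathscr P}^\alpha| \le \binom{n+m}{m} + 2n$ to bound how many such pieces there are), and is where most of the real work — and the tuning of the constants $20$ and $85$ — will go.
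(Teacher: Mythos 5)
Your proposal follows essentially the same line as the paper's proof: split the polytopes into those that are thick throughout $[x,y]$ and those that are thin throughout (possible because $[x,y]$ misses the $\alpha$-critical coordinates), trace points in thick regions back to the end cross-sections via Lemma~\ref{lemma:perfect-fleshing} and repair them using the $\beta$-net hypothesis plus Lemma~\ref{lemma:cones-nets}, and argue that points in thin regions are within $O(\epsilon)$ of a thick region via the subadditivity Lemma~\ref{lemma:subadd}. Your observation that $y\le 1-\epsilon/3$ ensures the ambient cross-section has enough thickness for the thin regions not to fill it is exactly the role that hypothesis plays (the paper's claim that $V\neq\emptyset$). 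So the overall architecture is the paper's.

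However, one step in your sketch is incorrect as stated and would need the paper's fix. You take $A$ to be the unlabelled subset of $P_i^x$ and apply Lemma~\ref{lemma:cones-nets} with $P=P_i^x$ to conclude that every point of $P_i^x$ is within $\tfrac{\sqrt 2}{\tau(P_i^x)}\beta$ of a point of $\hat P_i^x$. But the lemma only delivers a point of $P_i^x\setminus A$, i.e.\ a \emph{labelled} point of $P_i^x$, and under the lexicographic oracle a labelled point on the boundary of $P_i^x$ may carry some other label $k\neq i$ and hence lie in $\hat P_k^x$ rather than $\hat P_i^x$. Likewise, the trace point $\bar q_x$ obtained from Lemma~\ref{lemma:perfect-fleshing} might itself already be labelled but with a different index. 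The paper sidesteps both issues by invoking the second part of Lemma~\ref{lemma:cones-nets} with the target set $int(P_i^x)\setminus A$ (interior points of $P_i^x$ can only ever carry the label $i$ under $Q_\ell$), and by perturbing $\bar q_x$ slightly into $int(P_i^x)$ with an extra slack $\eta^*$ reserved in the $\epsilon/3$ budget. Without some version of this ``go into the interior'' manoeuvre, your estimate lands on the wrong empirical polytope and the convex-combination $q=\lambda q_x+(1-\lambda)q_y$ fails to lie in $Conv(\hat P_i^x,\hat P_i^y)$. This is a real gap rather than a missing computation, so it needs to be repaired before the acknowledged thin-case and constant-tuning work.
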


\begin{proof}
Let us define the following:
$$
U = \{i \in [n] \ | \ [l_\alpha(P_i), r_\alpha(P_i)] \cap [x,y] = \emptyset \}
$$
$$
V = \{i \in [n] \ | \ [x,y] \subsetneq [l_\alpha(P_i), r_\alpha(P_i)]\}
$$
We call $U$ the set of $\alpha$-insignificant polytopes and $V$ the set of $\alpha$-significant polytopes. From the fact that $[x,y]$ contains no critical coordinates, we know that $U \cup V = [n]$ and from Lemma \ref{lemma:degenerate-cross-section}, we also know that for all $z \in [x,y]$, $\mathscr{P}^z$ is non-degenerate. We proceed by proving the following claims:
\begin{enumerate}
    \item $V \neq \emptyset$. 
    \item Any point in the cross-section of an $\alpha$-insignificant polytope is $\frac{2\epsilon}{3}$ close to an $\alpha$-significant polytope (within that same cross-section).
    \item If $e \in P_j^x \setminus \left( \bigcup_{i=1}^n \hat{P}_i^x \right)$, and $j \in V$, then there exists a $e' \in \hat{P}_j^x$ such that $\|e-e'\|_2 < \frac{\epsilon}{3}$. 
    \item If $w \in P_j^z$ for some $j \in V$ and $z \in [x,y]$, then there exists a $w' \in Conv(\hat{P}_j^x, \hat{P}_j^y) \cap \pi^{-1}(z)$ such that $\|w-w'\|_2 < \frac{\epsilon}{3}$.
\end{enumerate} 
(2) and (4) suffice to prove the theorem. To see this, suppose that $w \in (\Delta^m)^{x,y}$. This means that $w \in P_i^z$ for some $i \in [n]$ and $z \in [x,y]$. If $i \in V$, then from (4) $\exists w' \in Conv(\hat{P}_i^x, \hat{P}_i^y) \subset \bigcup_i Conv(\hat{P}^x_i, \hat{P}^y_i)$ such that $\|w-w'\|_2 < \frac{\epsilon}{3}$. On the other hand, if $i \in U$, then by (2) $\exists w' \in P_j^z$ for some $j \in V$ such that $\|w-w'\|_2 < \frac{2\epsilon}{3}$. In turn by (1) again, $\exists w'' \in  Conv(\hat{P}_j^x, \hat{P}_j^y) \subset \bigcup_i Conv(\hat{P}^x_i, \hat{P}^y_i)$ such that $\|w'-w''\|_2 < \frac{\epsilon}{3}$. Using the triangle inequality $\|w-w''\|_2 < \epsilon$, and hence $\bigcup_i Conv(\hat{P}^x_i, \hat{P}^y_i)$ is an $\epsilon$-net of $(\Delta^m)^{x,y}$ in the $\ell_2$ norm as desired.

Let us prove statement (1). We know that if $i \in U$, for all $z\in [x,y]$ it holds that $\tau(P_i^z) \leq \alpha$. Using the union bound from Lemma \ref{lemma:subadd} we see $\tau( \cup_{i \in U} P_i^z ) \leq \frac{10n \alpha m^{3/2}}{3} = \frac{\epsilon}{6m}$.
On the other hand, we also know that $\cup_{i \in U} P_i^z \subset (\Delta^m)^z \cong (1-z) \Delta^{m-1}$. From Lemma \ref{lemma:thick-diam-simplex}, we know $\tau((\Delta^m)^z) \geq \frac{1-z}{((m-1) + \sqrt{m-1})}$. It follows that if $\frac{\epsilon}{6m} < \frac{1-z}{((m-1) + \sqrt{m-1})}$, then $\cup_{i \in U} P_i^z \neq (\Delta^m)^z$. The condition $y \leq 1 - \frac{\epsilon}{3}$ ensures that this happens for all $z \in [x,y]$. This in turn implies $V \neq \emptyset$.

Let us prove statement (2). From Lemma \ref{lemma:thick-diam-simplex}, we know $\frac{Diam((\Delta^m)^z))}{\tau((\Delta^m)^z)} \leq \sqrt{2}((m-1) + \sqrt{m-1})$. We can apply Lemma \ref{lemma:cones-nets} in exactly the same fashion as Lemma \ref{lemma:thickness-to-distance} to get $\gamma_1 = \frac{2\epsilon}{3} > \left(\frac{10n \alpha m^{3/2}}{3} \right) 4(m-1)$. We know that if $w \in \cup_{i \in U} P_i^z$, then $\exists w' \in \cup_{i \in V} P_i^z$ such that $\|w - w'\|_2 < \gamma_1 = \frac{2\epsilon}{3}$, which is what we wanted to show. 

Let us prove statement (3). Let us define $(\hat{P}_j^x)_\bot = P_j^x \setminus \left( \bigcup_{i \in [n]} \hat{P}_i^x \right)$. Note that these are the points in $P_j^x$ that do not have any label whatsoever under the empirical labelling at $x$. Importantly, some points could have a label other than $j$ if these points are on the boundary of another polytope with a label that has higher priority in the lexicographic ordering. By the fact that we have a $\beta$-close labelling of $\mathscr{P}^x$, it must hold that $\tau((\hat{P}_j^x)_\bot) \leq \beta$. Also, since $P_j^x \subset (\Delta^m)^x$, we know $Diam(P_j^x) \leq \sqrt{2}$ from Lemma \ref{lemma:thick-diam-simplex}. Since $j \in V$, we also know that $\tau(P_j^x) \geq \alpha$, hence $\tau((\hat{P}_j^x)_\bot ) \leq \beta < \alpha \leq \tau(P_j^x)$ which in turn implies that $int(P_j^x) \setminus  ( \hat{P}_j^x )_\bot \neq \emptyset$. Let $\eta^* = \frac{1}{2} \left( \frac{\epsilon}{3} - \left( \frac{\sqrt{2}}{\alpha} \right) \beta \right) > 0$ and let $\gamma_2 = \frac{\epsilon}{3} - \eta^* > \left(\frac{\sqrt{2}}{\alpha}\right) \beta$ (the addition of the $\eta^*$ gap is to help with the proof of statement (4)). We can use the second part of Lemma \ref{lemma:cones-nets} to see $B_{\gamma_2}(e) \cap \left( int(P_j^x) \setminus (\hat{P}_j^x)_\bot  \right) \neq \emptyset$. Since all points in $int(P_j^x)$ only belong to $P_j$, it follows that under the lexicographic oracle one only sees the label $j$ for those points. This implies that $B_{\gamma_2}(e) \cap \hat{P}_j^x \neq \emptyset$, which in turn implies $\exists e' \in \hat{P}_j^x$ such that $\|e-e'\|_2 < \gamma_2 = \frac{\epsilon}{3} - \eta^* < \frac{\epsilon}{3}$ as desired.

Finally, we prove statement (4). Since $[x,y]$ has no critical points, from Lemma \ref{lemma:perfect-fleshing} we know that $Conv(P_j^x,P_j^y) = P_j^{x,y}$, which in turn means that there exist $a \in P_j^x$ and $b \in P_j^y$ such that $w \in Conv(a,b)$. To be precise $w = Conv(a,b) \cap \pi^{-1}(z)$. Now if $a \in \hat{P}_j^x$ and $b \in \hat{P}_j^y$, then we are done. Let us suppose that this is not the case. We focus on $a$. If $a \in (\hat{P}_j^x)_\bot$, the previously proved statement says there is some $a' \in \hat{P}_j^x$ such that $\|a-a'\|<\frac{\epsilon}{3}$. If $a \notin (\hat{P}_j^x)_\bot \cup \hat{P}_j^x$, then it must be the case that $a \in \hat{P}_k^x \cap P_j^x$ for some other $k \in [n]$. This however only happens if $a \in P_j\cap P_k$ for some $P_k \neq P_j$, from the second property of polytope partitions from Definition \ref{def:polytope-partition} and the fact that using the lexicographic query oracle means that if $P_j = P_k$ and $j<k$ then $\hat{P}_k = \emptyset$ always. Invoking the second property of Definition \ref{def:polytope-partition} again, we see that $a$ lies on a bounding hyperplane of $P_j$. This in turn means that for every $\delta > 0$, $B_\delta(a) \cap int(P_j^x) \neq \emptyset$. Let us thus consider $\delta^* = \min\{ \frac{\epsilon}{3}, \frac{\eta^*}{2} \}$, where $\eta^*$ is defined as in the previous paragraph. Let $x^*$ be a point in $B_{\delta^*}(a) \cap int(P_j^x)$. Either $x^* \in \hat{P}_j^x$ or $x^* \in (\hat{P}_j^x)_{\bot}$. In the former case, since $\delta^* < \frac{\epsilon}{3}$ we are done, we have succeeded in finding $a' = x^* \in \hat{P}_j^x$ such that $\|a-a'\|_2 < \frac{\epsilon}{3}$. In the latter case, from the previous paragraph, since $x^* \in (\hat{P}_j^x)_{\bot}$ we know that $\exists a' \in \hat{P}_j^x$ such that $\|x^*-a'\|_2 < \frac{\epsilon}{3} - \eta^*$. Since $\delta^* < \frac{\eta^*}{2}$, we can use the triangle inequality to conclude that $\|a-a'\|_2 < \frac{\epsilon}{3}$. In either case, we have proven what we wanted.

The same argumentation as the previous paragraph shows us that $\exists b' \in \hat{P}_j^y$ such that $\|b-b'\|_2 < \frac{\epsilon}{3}$. If we let $w' = Conv(a',b') \cap \pi^{-1}(z)$, then $w'$ satisfies the requirements of statement (4) and we have finished our proof.
\end{proof}

For the following corollary, suppose that $\mathscr{P}$ is an $(m,n)$ polytope partition and that $0 = t_0 < t_1,...,<t_k = 1$ are points in $[0,1]$. Furthermore suppose that $\beta =  \frac{\epsilon^2}{85 n m^{5/2}}$ as in Lemma \ref{lemma:crux-for-gbs}. For each $t_i$, if $t_i \notin C^\alpha_{\mathscr{P}}$, let $\hat{\mathscr{P}}^{t_i}$ be a $\beta$-close labelling of $\mathscr{P}^x$, otherwise let $\hat{\mathscr{P}}^{t_i} = \emptyset$. Let $\hat{\mathscr{P}} = Conv_i(\hat{\mathscr{P}}^{t_i})$ and for $i = 1,..,k$, let $I_j = [t_{j-1}, t_j]$. If $\hat{\mathscr{P}}^{t_{i-1},t_i}$ is an $\epsilon$-close labelling of $\mathscr{P}^{t_{i-1},t_i}$, we say that $I_j$ is covered, otherwise we say $I_j$ is uncovered. 

\begin{corollary}
For any collection of $\{t_i\}_{i=1}^k$, there are no more than $2C^\alpha_{\mathscr{P}}$ intervals $I_j$ that are uncovered. 
\end{corollary}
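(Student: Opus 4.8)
The plan is to reduce the statement to one structural fact: an interval $I_j=[t_{j-1},t_j]$ can fail to be covered only if $I_j\cap C_\mathscr{P}^\alpha\neq\emptyset$. Granting this, the bound is immediate: the $I_j$ are consecutive closed subintervals of $[0,1]$ sharing only endpoints, so each point of $[0,1]$ — in particular each of the finitely many points of $C_\mathscr{P}^\alpha$ (bounded by Lemma~\ref{lemma:bound-critical-cardinality}) — lies in at most two of them, whence at most $2|C_\mathscr{P}^\alpha|$ of the $I_j$ are uncovered. So the whole content is showing: if $I_j\cap C_\mathscr{P}^\alpha=\emptyset$ then $I_j$ is covered.

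So assume $I_j\cap C_\mathscr{P}^\alpha=\emptyset$. Then in particular $t_{j-1},t_j\notin C_\mathscr{P}^\alpha$, so $\hat{\mathscr{P}}^{t_{j-1}}$ and $\hat{\mathscr{P}}^{t_j}$ are genuine $\beta$-close labellings (not $\emptyset$), and since an $\epsilon$-net of a non-empty set is non-empty and $(\Delta^m)^{t_{j-1}},(\Delta^m)^{t_j}$ are non-empty, each labelling contains at least one labelled point. I then branch on the position of $t_{j-1}$ relative to the threshold $1-\epsilon/3$. If $t_{j-1}<1-\epsilon/3$, I invoke Lemma~\ref{lemma:crux-for-gbs} on the slice $\mathscr{P}^{t_{j-1},t_j}$: the hypotheses $[t_{j-1},t_j]\cap C_\mathscr{P}^\alpha=\emptyset$ and "$\cup_i\hat P_i^{t_{j-1}},\cup_i\hat P_i^{t_j}$ are $\beta$-nets" hold, and the remaining stated hypothesis $t_j\le 1-\epsilon/3$ may be weakened to $t_{j-1}<1-\epsilon/3$: inspecting the proof of Lemma~\ref{lemma:crux-for-gbs}, the bound on the right endpoint is used only in claim~(1), and there only to exhibit some cross-section $z\in[x,y]$ with $z<1-\epsilon/3$ (forcing $\cup_{i\in U}P_i^z\neq(\Delta^m)^z$, hence $V\neq\emptyset$); claims~(2)--(4) are uniform in $z$. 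Taking $z=t_{j-1}$ does the job, so $\bigcup_i Conv(\hat P_i^{t_{j-1}},\hat P_i^{t_j})$ is an $\epsilon$-net of $(\Delta^m)^{t_{j-1},t_j}$, and since this set lies inside $\bigcup_i\hat P_i^{t_{j-1},t_j}$, the interval $I_j$ is covered.

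If instead $t_{j-1}\ge 1-\epsilon/3$, I bound the slice directly: any point of $(\Delta^m)^{t_{j-1},t_j}$ has first coordinate in $[t_{j-1},t_j]$ and nonnegative remaining coordinates summing to at most $1-t_{j-1}\le\epsilon/3$, so splitting $\|x-y\|_2^2$ into the first-coordinate term plus the rest and using nonnegativity of coordinates yields $Diam\left((\Delta^m)^{t_{j-1},t_j}\right)\le\sqrt{3}\,(1-t_{j-1})<\epsilon$. Any single labelled point of $\hat{\mathscr{P}}^{t_j}$ lies in $\pi^{-1}(t_j)\subseteq(\Delta^m)^{t_{j-1},t_j}$ and is therefore within $\epsilon$ of every point of the slice, so $\bigcup_i\hat P_i^{t_{j-1},t_j}$ is an $\epsilon$-net and $I_j$ is covered. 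In both branches we contradict $I_j$ being uncovered, establishing the structural claim.

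The main obstacle is exactly this boundary behaviour near $x=1$, where Lemma~\ref{lemma:crux-for-gbs} as stated does not apply: a careless argument fails to control the (potentially many) intervals whose right endpoint exceeds $1-\epsilon/3$, or at least picks up a spurious additive $+1$ from the single interval straddling $1-\epsilon/3$. The two observations that remove this — that the proof of Lemma~\ref{lemma:crux-for-gbs} needs only one sub-threshold cross-section, so it still applies to the straddler (whose left endpoint is below $1-\epsilon/3$), and that beyond $1-\epsilon/3$ every slice has $\ell_2$-diameter below $\epsilon$ and so is covered by a single queried label — are what make the count come out to exactly $2|C_\mathscr{P}^\alpha|$. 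A minor convenience worth noting is that $0,1\in C_\mathscr{P}^\alpha$ (each is the $\pi$-image of a vertex of $\Delta^m$, hence of whichever $P_i$ contains it), so the two extreme intervals $I_1$ and $I_k$ automatically meet $C_\mathscr{P}^\alpha$ and require no special handling.
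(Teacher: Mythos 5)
Your proof is correct and follows the same skeleton as the paper's: show that an uncovered $I_j$ must contain a point of $C_\mathscr{P}^\alpha$, then observe that each critical coordinate lies in at most two consecutive closed intervals, giving the $2|C_\mathscr{P}^\alpha|$ bound. Where you go further is in the care taken with the first step. The paper invokes ``the contrapositive of Lemma~\ref{lemma:crux-for-gbs}'' to conclude $I_j \cap C_\mathscr{P}^\alpha \neq \emptyset$, but that contrapositive, read literally, only says that \emph{some} hypothesis of the lemma fails, and the failing hypothesis could be $t_j \le 1-\epsilon/3$ rather than the absence of a critical coordinate. You explicitly close this hole: the unique straddling interval with $t_{j-1} < 1-\epsilon/3 < t_j$ is handled by noting the lemma's threshold condition is needed only to produce one sub-threshold cross-section establishing $V\neq\emptyset$, for which $z=t_{j-1}$ suffices; and any interval with $t_{j-1} \ge 1-\epsilon/3$ is handled directly, since the slice has $\ell_2$-diameter at most $\sqrt{3}(1-t_{j-1}) < \epsilon$ and is therefore covered by any single labelled point of $\hat{\mathscr{P}}^{t_j}$. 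The observation that $0,1 \in C_\mathscr{P}^\alpha$ (projections of vertices of $\Delta^m$) is also correct and tidies the endpoints.

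One imprecision worth flagging before you formalize: you write that the bound $y\le 1-\epsilon/3$ ``is used only in claim~(1),'' but the proof of claim~(2) in Lemma~\ref{lemma:crux-for-gbs} also relies on it implicitly, via the application of Lemma~\ref{lemma:cones-nets} with $A=\bigcup_{i\in U}P_i^z$ and $P=(\Delta^m)^z$, which requires $A\subsetneq P$ at \emph{every} $z\in[x,y]$, not just at one. In the paper this strict containment is established for all $z$ precisely because the thickness comparison $\tau(A)\le\epsilon/(6m) < \tau((\Delta^m)^z)$ holds for all $z\le 1-\epsilon/3$. Your argument still goes through because once $V\neq\emptyset$ is secured via the single cross-section $z=t_{j-1}$, the strict containment at every $z\in[x,y]$ can be re-derived independently: take $j\in V$ with $\tau(P_j^z)\ge\alpha>0$, and use non-degeneracy of $\mathscr{P}^z$ (which holds since $z\notin C_\mathscr{P}\subseteq C_\mathscr{P}^\alpha$) to see that the $\alpha$-insignificant polytopes can meet $relint(P_j^z)$ only in measure-zero boundary sets, so $relint(P_j^z)\not\subseteq\bigcup_{i\in U}P_i^z$. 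Making this step explicit would make your extension of the lemma airtight.
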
\label{cor:width-bound}

\begin{proof}
Suppose that $I_j$ is uncovered, then one of the following holds:

\begin{itemize}
    \item Either $t_{j-1}$ or $t_j$ are in $C^\alpha_{\mathscr{P}}$
    \item $t_{j-1}, t_j \notin C^\alpha_{\mathscr{P}}$ yet $Conv(\hat{P}^{t_{j-1}}, \hat{P}^{t_j})$ is not an $\epsilon$-close labelling of $\mathscr{P}^{t_{j-1}, t_j}$.       
\end{itemize}
From the contrapositive of Lemma \ref{lemma:crux-for-gbs}, the latter case implies $I_j \cap C^\alpha_{\mathscr{P}} \neq \emptyset$, hence in either case there is a critical coordinate in $I_j$. In the worst case each $x \in C^\alpha_{\mathscr{P}}$ lies on a $t_j$, causing both $I_j$ and $I_{j+1}$ to be uncovered. This implies that there are at most $2C^\alpha_{\mathscr{P}}$ intervals $I_j$ that are uncovered. 
\end{proof}

\subsection{Specification of CD-GBS and Query Usage}\label{sec:SBS}

\paragraph{Terms and Notation:} The details of CD-GBS are presented in Algorithm \ref{alg:GBS}. We recall our notation from Definition \ref{def:degenerate-cross-section} where for $x \in [0,1)$ we defined $f_x: (\Delta^m)^x \rightarrow \Delta^{m-1}$ given by $f_x(x,...,v_m) = \frac{1}{1-x}(v_2,...,v_m)$. We note that this is a bijection between both polytopes, hence it is well-defined to use $f_x^{-1}$. In addition, we let $\mathcal{D}^k = \{\frac{i}{2^k} \ | 1 \leq i \leq 2^i \}$ be the dyadic fractions of $k$-th power in the unit interval (excluding 0). For every $x \in \mathcal{D}^k$ we can associate the interval $I_x^k = [x - \frac{1}{2^k}, x]$. For each of these intervals $midpoint(I^k_x)$ denotes its midpoint. We also use the same language as Corollary \ref{cor:width-bound} when we talk about whether $I^k_x$ is covered or not (with respect to the current empirical labelling, $\hat{\mathscr{P}}$, obtained from taking convex hulls of labels in $\Delta^m$). We note that in order to have a well-defined base case of CD-GBS (which is equivalent to binary search), we let $\Delta^0 = \mathbb{R}^0 = \{0\}$. Finally, we say that a point $x \in [0,1]$ is an uncovered critical point if $\hat{\mathscr{P}}^x$ is computed via a recursive call to CD-GBS and for $(a,b) = B_{\epsilon/2}(x) \cap [0,1]$, it holds that $\hat{\mathscr{P}}^{a,b}$ is not an $\epsilon$-close labelling of $\mathscr{P}^{a,b}$.  

\begin{theorem}\label{thm:genbinsrch-correct}
If CD-GBS is given access to $Q_\ell$ for a $(m,n)$-polytope partition, it computes an $\epsilon$-close labelling of $\mathscr{P}$ using at most $\left( \prod_{i=1}^m \left( \binom{n+i}{i} + 2n \right) \right) 2^{2m^2}\log^m \left( \frac{170 n m^{5/2}}{\epsilon}  \right)$ membership queries. For constant $m$ this constitutes $O(n^{m^2}\log^m \left( \frac{n}{\epsilon}  \right) ) = poly(n,\log \left( \frac{1}{\epsilon} \right))$ queries \footnote{CD-GBS runs in polynomial time for constant $m$. The time-intensive operation consists of identifying uncovered intervals, but since the dimension of the ambient simplex is constant, each empirical polytope $\hat{P}_i$ has at most a constant number of bounding hyperplanes. These hyperplanes can each be extruded by $\epsilon$, and checking whether there exists a point outside all these extrusions can be done in time polynomial in $n$ via brute force. In fact, all other algorithms in this paper have efficient runtimes (in their relevant parameters) due to similar reasoning.}.
\end{theorem}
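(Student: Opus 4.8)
The plan is to read CD-GBS as a recursion on the dimension: the dimension-$m$ routine runs a generalised binary search along the first coordinate $\pi$, and for each candidate coordinate $x$ it invokes the dimension-$(m-1)$ routine on the rescaled cross-section $f_x(\mathscr{P}^x)$, which is an $(m-1,n)$-polytope partition whenever $x\notin C_{\mathscr{P}}$ by Lemma~\ref{lemma:degenerate-cross-section}. The recursion bottoms out at $m=1$, i.e. ordinary binary search on $[0,1]$ with $0$-dimensional cross-sections $\Delta^0=\{0\}$. I would prove correctness by induction on $m$, and read off the query bound by multiplying, over the $m$ levels of the recursion, the size of the binary-search tree at each level.

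For the inductive step at dimension $m$ with target accuracy $\epsilon$, set $\alpha,\beta$ as in Lemma~\ref{lemma:crux-for-gbs}. At each dyadic depth $k$ the binary search keeps the intervals $I^k_x$ and, for every uncovered one, queries its midpoint's cross-section by a recursive call; by the inductive hypothesis this returns a $\beta$-close labelling of that cross-section (a $\beta$-net of $\Delta^{m-1}$ pulls back under $f_x^{-1}$, which only contracts, to a $\beta$-net of $(\Delta^m)^x$). Lemma~\ref{lemma:crux-for-gbs} then says that any interval $[x,y]$ with $y\le 1-\tfrac{\epsilon}{3}$ that avoids $C_{\mathscr{P}}^\alpha$ is covered once the labellings at its endpoints are in place; contrapositively (Corollary~\ref{cor:width-bound}), every uncovered interval not lying inside the apex cap $\{\pi>1-\tfrac{\epsilon}{3}\}$ meets $C_{\mathscr{P}}^\alpha$, so at most $2|C_{\mathscr{P}}^\alpha|$ of them persist at any depth, and by Lemma~\ref{lemma:bound-critical-cardinality} this is at most $2\big(\binom{n+m}{m}+2n\big)$. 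Running the search to depth $O\!\big(\log(\poly(n,m)/\epsilon)\big)$ makes every surviving uncovered interval have $\pi$-width $O(\epsilon)$; the apex cap is disposed of separately, since it has diameter $O(\epsilon)$ and contains the queried vertex $e_1$ (the $x=1$ cross-section). Hence the unlabelled set $\hat P_\bot$ sits inside at most $2|C_{\mathscr{P}}^\alpha|$ slabs of $\pi$-width $O(\epsilon)$, the $\epsilon$-close residue inside each covered slice, and the apex cap; accounting for these — using that each covered slice's labelling is an $\epsilon$-net of that slice, not merely that its residue is individually thin — gives $\tau(\hat P_\bot)=O(\epsilon)$, and Lemma~\ref{lemma:thickness-to-distance} upgrades this to an $O(m\epsilon)$-close labelling. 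Carrying the constants through (this is where the $170\,n\,m^{5/2}$ enters) yields an $\epsilon$-close labelling.

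For the query count, at the dimension-$i$ level the binary-search tree has, by the above, at most $2\big(\binom{n+i}{i}+2n\big)$ uncovered intervals at each depth and is run to depth $O\!\big(\log(170\,n\,m^{5/2}/\epsilon)\big)$, where the blow-up from the accuracy parameter being roughly squared at each descent ($\beta\sim\epsilon^2/\poly$) contributes an extra $2^{O(i)}$ per level while keeping each logarithmic factor below $\log(170\,n\,m^{5/2}/\epsilon)$. Each uncovered interval triggers one recursive dimension-$(i-1)$ call per bisection, so the number of dimension-$i$ nodes is at most $\big(\binom{n+i}{i}+2n\big)\cdot 2^{O(i)}\cdot\log(170\,n\,m^{5/2}/\epsilon)$. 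Multiplying over $i=1,\dots,m$ (the $i=1$ level being an ordinary binary search, one point query per bisection) gives $\big(\prod_{i=1}^m(\binom{n+i}{i}+2n)\big)\,2^{2m^2}\,\log^m(170\,n\,m^{5/2}/\epsilon)$; for constant $m$ this is $O\!\big(n^{m^2}\log^m(n/\epsilon)\big)=\poly(n,\log(1/\epsilon))$.

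I expect the main obstacle to be the correctness bookkeeping at the interfaces between covered and uncovered regions: a ball witnessing $\tau(\hat P_\bot)$ may straddle several adjacent slices and the apex cap, so bounding $\tau(\hat P_\bot)$ by $O(\epsilon)$ requires the $\epsilon$-net property of each covered slice combined with the $\pi$-width control on uncovered slabs — essentially a re-run of the argument behind Lemma~\ref{lemma:thickness-to-distance}, but relative to slices rather than $\Delta^m$. A secondary difficulty is the clean propagation of accuracy through the dimension recursion: the cross-section target $\beta$ is of order $\epsilon^2/\poly$, so the innermost binary search runs to depth $\Theta(2^m\log(1/\epsilon))$, and one must check that this only inflates the count by the harmless $2^{2m^2}$ factor while each $\log$ factor stays below $\log(170\,n\,m^{5/2}/\epsilon)$.
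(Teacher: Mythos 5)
Your inductive structure, query accounting, and reliance on Corollary~\ref{cor:width-bound} and Lemma~\ref{lemma:bound-critical-cardinality} all match the paper. There are, however, two places where your argument departs from the paper in ways that leave real gaps.

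The main gap is that you never deal with degenerate cross-sections. You correctly note that $f_x(\mathscr{P}^x)$ is an $(m-1,n)$-polytope partition \emph{whenever} $x\notin C_{\mathscr{P}}$, but you then invoke the inductive hypothesis at \emph{every} midpoint $t$ of an uncovered interval, with no argument that $t\notin C_{\mathscr{P}}$. Nothing prevents a dyadic midpoint from being a vertex critical coordinate; in that case $f_t(\mathscr{P}^t)$ may violate the second condition of Definition~\ref{def:polytope-partition}, the recursive call has no correctness guarantee, and the convex-combination step can produce empirical polytopes with overlapping interiors. This failure mode is precisely why Algorithm~\ref{alg:GBS} carries (i) an internal early-halt when $\exists i,j$ with $int(\hat P_i)\cap\hat P_j\neq\emptyset$ or the uncovered-interval count exceeds $2\bigl(\binom{n+m}{m}+2n\bigr)$, and (ii) a final \textbf{while} loop that detects ``uncovered critical coordinates'' and re-runs the recursion at a perturbed, non-degenerate coordinate $z\in B_{\epsilon/2}(x)$. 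The paper's proof spends a paragraph on termination of that loop (at most $|C^\alpha_{\mathscr{P}}|$ bad perturbations) and folds the resulting $+2|C^\alpha_{\mathscr{P}}|$ overhead into the factor $3$ in the recurrence $T(m,n,\epsilon)\le 3f(m)\log(2/\epsilon)\,T(m-1,n,\cdot)$; neither the mechanism nor the accounting for it appears in your write-up.

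The secondary issue is that your correctness argument routes through thickness and Lemma~\ref{lemma:thickness-to-distance}, deducing $\tau(\hat P_\bot)=O(\epsilon)$ and then upgrading to an $O(m\epsilon)$-close labelling; you flag yourself that the bookkeeping at slice interfaces is delicate, and that a further ``carrying the constants'' step is needed to land on $\epsilon$. The paper avoids all of this: once the cross-sections at consecutive dyadic coordinates (spacing $\le\epsilon/2$ at depth $\lceil\log(2/\epsilon)\rceil$) are $\beta$-close with $\beta\ll\epsilon/2$, a direct triangle-inequality argument already gives an $\epsilon$-close labelling of $\mathscr{P}$, with no thickness-to-distance loss and no $m$-factor to reabsorb. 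Your route is not obviously unrepairable, but it takes on exactly the loose end you worry about, while the paper's triangle-inequality argument sidesteps it. The $170\,n\,m^{5/2}$ constant in the statement does not ``enter'' through a top-level thickness argument; it arises purely from unrolling the $\epsilon_k$ recursion ($\epsilon_{m-1}\ge(\epsilon/(85nm^{5/2}))^{2^m}$), which your query-count paragraph already captures correctly.
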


\begin{proof}
We first prove that CD-GBS indeed computes an $\epsilon$-close labelling when given access to a valid $Q_\ell$ by inducting on $m$. It is straightforward to see that in the case $m=1$, if CD-GBS is given access to a valid $Q_\ell$ for a $(1,n)$ polytope partition (a partition of the unit interval into conected subintervals), then it simply performs binary search on the interval $[0,1] \cong \Delta^1$. 

As for the inductive step, for $k = \lceil \log (2/\epsilon) \rceil$, any two contiguous points of $\mathcal{D}^k$ are less than $\epsilon/2$ away from each other. For now suppose that every recursive call to CD-GBS was along a non-degenerate cross section $\mathscr{P}^t$. From the inductive assumption, this means that CD-GBS computes an $\epsilon/2$-close labellings of those cross-sections, using the triangle inequality, we know that $\hat{\mathscr{P}}$ is an $\epsilon$-close labelling of $\mathscr{P}$.

We note however that there is no guarantee for what a recursive call to CD-GBS does on a degenerate cross section $\hat{\mathscr{P}}^t$. For this reason, it could be the case that at the end of the loop over $\mathcal{D}^i$, $\hat{\mathscr{P}}$ is not an $\epsilon$-close labelling. This can only happen if there is some $t \in C^\alpha_{\mathscr{P}} \cap \mathcal{D}^k$ which is an uncovered critical coordinate.      

If $t$ is an uncovered critical coordinate we can rectify the situation. If we find a $z \in B_{\epsilon/2}$ that is not a critical coordinate, then $\mathscr{P}^z$ is non-degenerate and computing CD-GBS along the cross-section gives us an $\frac{\epsilon}{2}$-close labelling of $\mathscr{P}^z$. Using the triangle inequality, we see that this in turn removes $t$ from the set of uncovered critical coordinates, and we say that $t$ is ``fixed''.  Thus the final while loop of the algorithm eliminates the set of uncovered critical coordinates so that $\hat{\mathscr{P}}$ is indeed an $\epsilon$-close labelling.

It thus remains to show that the final while loop terminates. However, there are at most $|C^\alpha_\mathscr{P}|$ uncovered critical coordinates, and over the course of fixing all uncovered critical coordinates, there are at most $|C^\alpha_\mathscr{P}|$ bad guesses for $z \in B_{\epsilon/2}(x)$ where $\mathscr{P}^z$ is degenerate. Therefore the final while loop makes at most $2|C^\alpha_\mathscr{P}|$ invocations to CD-GBS along cross-sections. This concludes the proof of correctness for CD-GBS.

Let us bound the total query usage of CD-GBS. For all values of $k$ in the first for loop, we know from Corollary \ref{cor:width-bound} that since $Q_\ell$ is a valid lexicographic oracle for $\mathscr{P}$, that the number of uncovered $I_x^k$ will not exceed $2 \left( \binom {n+m}{m} + 2n \right)$, and since CD-GBS is called once per uncovered interval, it follows that for each $k$ there at most $2 \left( \binom {n+m}{m} + 2n \right)$ recursive calls to CD-GBS. Furthermore, since $Q_\ell$ is a valid lexicographic oracle for $\mathscr{P}$, it will also never be the case that $\exists i,j \in [n], \ z \in \Delta^m$ such that $dim(\hat{P}_i) = m$ and $z \in int(\hat{P}_i)$.

In the worst case, $k$ loops from 1 to $\lceil \log(2/\epsilon) \rceil$ and makes an extra $2|C^\alpha_{\mathscr{P}}|$ recursive calls to CD-GBS to fix all uncovered critical coordinates. In total if we let $T(m,n,\epsilon)$ denote the query cost of running CD-GBS on a valid lexicographic oracle, we get the following recursion: 
$$
T(m,n,\epsilon) \leq 2 |C^\alpha_{\mathscr{P}}| \log \left( \frac{2}{\epsilon} \right) T \left(m-1, n, \frac{\epsilon^2}{85 n m^{5/2}} \right) + 2|C^\alpha_{\mathscr{P}}|
$$
In order to make this more amenable, we define $f(m) = \left(\binom{n+m}{m} + 2n \right)$ and use Lemma \ref{lemma:bound-critical-cardinality} to bound this expression as follows:
$$
T(m,n,\epsilon) \leq 3 f(m) \log \left( \frac{2}{\epsilon} \right) T \left(m-1, n, \frac{\epsilon^2}{85 n m^{5/2}} \right)
$$
Furthermore, from the fact that the base case is binary search, we know $T(1,n,\epsilon) \leq n \log \left( \frac{2}{\epsilon} \right)$.

To unpack the recursion. Let us define $\epsilon_0 = \epsilon$ and $\epsilon_{k+1} = \frac{\epsilon_k^2}{85n(m-k)^{5/2}}$ for $k = 1,...,m-1$. With this in hand, we can unroll the recursion to obtain:
$$
T(m,n,\epsilon) \leq  \left( 3^{m-1} \prod_{i=1}^{m-1} f(i) \right) 
\left( \prod_{k=1}^{m-1} \log \left( \frac{2}{\epsilon_k} \right) \right)
$$
Since each $\epsilon_{k+1} < \epsilon_k$, we can upper bound the right-hand product by bounding each term with $\epsilon_{m-1}$. If we first solve for this value, we obtain: 
$$
\epsilon_{m-1} = \frac{\epsilon^{2^{m-1}}}{\prod_{j=1}^{m-1}(85nj^{5/2})^{2^j}}
\geq \frac{\epsilon^{2^{m-1}}}{\prod_{j=1}^{m-1}(85nm^{5/2})^{2^j}}
\geq \left( \frac{\epsilon}{85nm^{5/2}} \right)^{2^m}.
 $$
In the first inequality we bounded the denominator product in the base by $j \leq m$, as for the second inequality, we evaluated the geometric series in 2 for the exponent to bound the exponent by $2^m$. With this in hand we obtain the desired bounds: 
$$
T(m,n,\epsilon) \leq  3^{m} 2^{m^2} \prod_{i=1}^{m} f(i)
\log^m \left( \frac{170nm^{5/2}}{\epsilon} \right)
\leq \left( \prod_{i=1}^m \left( \binom{n+i}{i} + 2n \right) \right) 2^{2m^2}\log^m \left( \frac{170 n m^{5/2}}{\epsilon}  \right)
$$

Finally, For large enough $n$, every term in $\prod_{i=1}^m \left( \binom{n+i}{i} + 2n \right)$ is bounded by $(n+m)^m +2n$. It follows that this product is $O(n^{m^2})$, and thus for constant $m$, this constitutes $O(n^{m^2}\log^m \left( \frac{n}{\epsilon}  \right) ) = poly(n,\log \left( \frac{1}{\epsilon} \right))$ queries.
\end{proof}

The previous results show that for constant dimension, $m$, CD-GBS is query efficient in $n$ and $\frac{1}{\epsilon}$. In the following section we use this algorithm as a building block to construct a method for computing efficient $\epsilon$-close labellings when the number of regions, $n$, is held constant instead. 

\begin{algorithm}[h] 
\caption{CD-GBS$(m,n,\epsilon, Q)$} 
\label{alg:GBS} 
\begin{algorithmic} 
	\INPUT $m \geq 0, \  n,\epsilon > 0$, query access to function $Q: \Delta^m \rightarrow [n]$.
	\OUTPUT $\hat{\mathscr{P}}$: an $\epsilon$-close labelling of $\mathscr{P}$.
    \IF{$m=0$}
        \STATE Query $Q(0)$ 
    \ELSE
	\STATE $\hat{\mathscr{P}}^0 \leftarrow f_0^{-1} \left( \text{CD-GBS} \left( m-1,n,\frac{\epsilon^2}{85 n m^{5/2}}, Q \circ f_0^{-1} \right) \right)$, $\hat{\mathscr{P}}^1 \leftarrow Q(\vec{e}_1)$.
	
	\FOR{$k = 1$ to $\lceil \log(2/\epsilon) \rceil$}
	    \IF{Number of uncovered $I_x^k$ exceeds $2 \left( \binom{n + m} {m} + 2n \right)$}
	        \STATE Halt
	    \ENDIF
	    \FOR{$x \in \mathscr{D}^k$}
	        \IF{$I_x^k$ is uncovered}
	            \STATE $t \leftarrow midpoint(I_x)$
	            \STATE $\hat{\mathscr{P}}^t \leftarrow f_t^{-1} \left( \text{CD-GBS} \left( m-1,n,\frac{\epsilon^2}{85(1-t) n m^{5/2}}, Q \circ f_t^{-1} \right) \right)$
	        \ENDIF
	        \STATE Recompute $\hat{\mathscr{P}}$ by taking convex hulls of labels
	        \IF{$\exists i,j \in [n]$ such that $int(\hat{P}_i) \cap \hat{P}_j \neq \emptyset$ or $\hat{\mathscr{P}}$ is an $\epsilon$-close labelling}
		        \STATE Halt
		    \ENDIF
	    \ENDFOR
	\ENDFOR
	\WHILE{$\exists x \in [0,1]$ an uncovered critical point }
	    \STATE $t \leftarrow z$ for arbitrary $z \in B_{\epsilon/2}(x)$
	    \STATE $\hat{\mathscr{P}}^t \leftarrow f_t^{-1} \left( \text{CD-GBS} \left( m-1,n,\frac{\epsilon^2}{85(1-t) n m^{5/2}}, Q \circ f_t^{-1} \right) \right)$
	    \STATE Recompute $\hat{\mathscr{P}}$ by taking convex hulls of labels
	\ENDWHILE
    \ENDIF
	\RETURN $\hat{\mathscr{P}}$    
\end{algorithmic}
\end{algorithm}

\section{Constant-Region Generalised Binary Search for \texorpdfstring{$Q_\ell$}{}}\label{sec:GBS}

In this section we introduce Constant-Region Generalised Binary Search, (CR-GBS), which as the name suggests, is a query-efficient algorithm for computing $\epsilon$-close labellings of $(m,n)$-polytope partitions when $n$ is constant and $m$ and $\epsilon$ are allowed to vary. 

The intuition behind the algorithm lies in the fact that if $m$ is much greater than $n$ (it suffices for $m > \binom {n} {2}$ ), then any vertex of a given $P_i$ cannot lie in the interior of the ambient simplex $\Delta^m$. This is because a vertex in $\Delta^m$ must consist of the intersections of at least $m$ half-spaces, all of which cannot arise from adjacencies between different $P_i$.

Not only do all vertices lie on the boundary of $\Delta^m$, but one can easily show that they are all contained in faces of the boundary of $\Delta^m$ that have dimension $O(n^2)$ which is presumed to be constant. The number of such faces in the boundary of $\Delta^m$ is thus polynomial in $m$, and moreover if we could compute $0$-close labellings of these faces we could take convex combinations and recover a $0$-close labelling of the entire polytope partition. 

We will demonstrate that for an appropriate value of $\epsilon'$, if we compute $\epsilon'$-close labellings of such faces in the boundary, we can recover an $\epsilon$-close labelling of the entire polytope partition over all of $\Delta^m$ by taking convex combinations. CR-GBS computes the necessary $\epsilon'$-close labellings of lower dimensional faces by using CD-GBS as a subroutine, which as we shall see results in our desired query efficiency for $n$ constant.

We note however, that not all faces in the boundary of $\Delta^m$ are axis-aligned, which poses a problem if we are to use CD-GBS as a subroutine. As we show in the following section, this is not an issue since we can translate such simplices into axis-aligned simplices via a simple transformation.

\subsection{Non-axis-aligned Simplices} \label{subsec:face-to-simplex}
So far we have focused on the case where $\Delta^m = \{x \in \mathbb{R}^m \ | \ \|x\|_1 \leq 1,$ $x_i \geq 0\}$. In a straightforward fashion we transform our results to the equivalent simplex $\Lambda^m = \{x \in \mathbb{R}^{m+1} \ | \ \|x\|_1 = 1$, $x_i \geq 0\}$. To do so, we define the invertible linear map $\phi_m: \Delta^m \rightarrow \Lambda^{m+1}$ given by $\phi_m(x_1,...,x_m) = \left( \left( 1-\sum_{i=1}^m{x_i} \right), x_1,...,x_m \right)$. It is straightforward to see that $\phi_m$ is $\sqrt{m+1}$-Lipschitz continuous. Via standard Lipschitz continuity arguments we get the following:

\begin{lemma}
Suppose that $\mathscr{P}$ is an $(m,n)$-polytope partition of $\Lambda^{m+1}$. If $\hat{\mathscr{P}}$ is an $\frac{\epsilon}{\sqrt{m+1}}$-close labelling of $\phi^{-1}_m(\mathscr{P})$, then $\phi_m (\hat{\mathscr{P}})$ is an $\epsilon$-close labelling of $\mathscr{P}$.
\end{lemma}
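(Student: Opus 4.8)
The plan is to use the fact that $\phi_m$ is a linear bijection that is $\sqrt{m+1}$-Lipschitz, together with a straightforward observation about how $\epsilon$-nets transform under Lipschitz maps. First I would observe that since $\phi_m$ is linear and invertible, it maps convex sets to convex sets, maps polytopes to polytopes, and commutes with taking convex hulls; in particular, if $\hat{\mathscr{P}} = \{\hat P_i\}$ is an empirical labelling of $\phi_m^{-1}(\mathscr{P})$ arising from a query set $S$, then $\phi_m(\hat{\mathscr{P}}) = \{\phi_m(\hat P_i)\}$ is an empirical labelling of $\mathscr{P}$ arising from the query set $\phi_m(S)$, and $\phi_m(\hat P_i) \subseteq \phi_m(\phi_m^{-1}(P_i)) = P_i$. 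So the only thing that needs checking is the $\epsilon$-net condition.

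The key step is the net transformation. Let $y \in \Lambda^{m+1}$ be arbitrary; I want to find a point of $\bigcup_i \phi_m(\hat P_i)$ within $\ell_2$-distance $\epsilon$ of $y$. Set $x = \phi_m^{-1}(y) \in \Delta^m$. By hypothesis, $\bigcup_i \hat P_i$ is an $\frac{\epsilon}{\sqrt{m+1}}$-net of $\Delta^m$, so there is a point $z \in \bigcup_i \hat P_i$ with $\|x - z\|_2 \le \frac{\epsilon}{\sqrt{m+1}}$. Then $\phi_m(z) \in \bigcup_i \phi_m(\hat P_i)$, and by the $\sqrt{m+1}$-Lipschitz property of $\phi_m$,
\[
\|y - \phi_m(z)\|_2 = \|\phi_m(x) - \phi_m(z)\|_2 \le \sqrt{m+1}\,\|x - z\|_2 \le \sqrt{m+1}\cdot \frac{\epsilon}{\sqrt{m+1}} = \epsilon.
\]
Hence $\bigcup_i \phi_m(\hat P_i)$ is an $\epsilon$-net of $\Lambda^{m+1}$, which is exactly what it means for $\phi_m(\hat{\mathscr{P}})$ to be an $\epsilon$-close labelling of $\mathscr{P}$.

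I expect there is no real obstacle here; the only things to be careful about are (i) confirming the claimed Lipschitz constant, i.e. that $\|\phi_m(x) - \phi_m(x')\|_2 \le \sqrt{m+1}\|x-x'\|_2$ — this follows because the first coordinate of $\phi_m(x) - \phi_m(x')$ is $-\sum_i (x_i - x_i')$, whose square is at most $m \|x-x'\|_2^2$ by Cauchy–Schwarz, and the remaining coordinates contribute exactly $\|x - x'\|_2^2$, giving a total of at most $(m+1)\|x-x'\|_2^2$ — and (ii) making sure the set-containment and convex-hull bookkeeping for empirical labellings is respected under the linear map, which is immediate from linearity. So the proof is essentially the two short paragraphs above, with the Lipschitz bound either quoted from the preceding discussion or re-derived in one line.
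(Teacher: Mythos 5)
Your proof is correct and takes essentially the same route as the paper: pull a point of $\Lambda^{m+1}$ back to $\Delta^m$ via $\phi_m^{-1}$, find a nearby labelled point using the $\frac{\epsilon}{\sqrt{m+1}}$-net hypothesis, and push forward, using the $\sqrt{m+1}$-Lipschitz property of $\phi_m$ to bound the distance by $\epsilon$. The only difference is that you spell out the Lipschitz bound and the empirical-labelling bookkeeping, which the paper takes as given from the preceding discussion.
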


\begin{proof}
Suppose that $x\in \Lambda^{m+1}$ has no label under $\phi_m(\hat{\mathscr{P}})$. Since $\hat{\mathscr{P}}$ is an $\frac{\epsilon}{\sqrt{m+1}}$-close labelling, there must be some $y \in \Delta^m$ with the property that $\|\phi_m^{-1}(x) - y\|_2 < \frac{\epsilon}{\sqrt{m+1}}$. If we consider $\phi_m^{-1}(x)$, since $\hat{\mathscr{P}}$ is $\frac{\epsilon}{\sqrt{m+1}}$-close, there must be some $y$ from an empirical polytope $\hat{P}_i \subset \Delta^m$ with the property that $\|\phi_m(x) - y\|_2 < \frac{\epsilon}{\sqrt{m+1}}$. It follows that $\phi_m(y) \in \phi_m(\hat{\mathscr{P}})$ and by Lipschitz continuity of $\phi_m$, $\|x - \phi_m(y)\|_2 < \epsilon$ as desired.
\end{proof}

\subsection{Necessary Machinery for CR-GBS}
\label{subsec:overview-CR-GBS}

Suppose that $\mathscr{P}$ is an $(m,n)$-polytope partition with the property that $m > \binom{n}{2}$. Furthermore, let $k = \binom{n}{2}$ and let $\partial_k(\Delta^m)$ denote all $k$-dimensional faces of $\Delta^m$. For each face $F$, let $\mathscr{P}_F$ be the restriction of $\mathscr{P}$ to $F$. If $F$ is axis-aligned (equivalently, if $F$ contains the origin), then it is an isometric embedding of $\Delta^k$ in $\Delta^m$, so we let $\phi_F$ be a canonical isomorphism from $F$ to $\Delta^k$. If $F$ is not axis-aligned, we let $\phi_F$ be any canonical isomorphism from $F$ to $\Delta^k$ as per Section \ref{subsec:face-to-simplex}.

As mentioned previously, computing empirical labellings of every face in $\partial_k(\Delta^m)$ via CD-GBS will be enough to compute an empirical labelling for $\mathscr{P}$. The only issue with this strategy however, is that CD-GBS is only guaranteed to return an $\epsilon$-close labelling if it is given access to a valid lexicographic membership oracle for a polytope partition, and for an arbitrary polytope partition, it is not always the case that $\phi_F(\mathscr{P}_F)$ is a $(k,n)$-polytope partition for all $F \in \partial_k(\Delta^m)$. As an example, consider a polytope partition with an arbitrary $m-1$-dimensional polytope $P_i$ contained in $F = \mathscr{P}^0$ (the $0$-cross-section of $\mathscr{P}$). Any full-dimensional $P_j \in \mathscr{P}$ must have the property that $0 \notin \pi(relint(P_j))$, hence it still holds that $relint(P_i) \cap relint(P_j) = \emptyset$. However, when restricted to $\mathscr{P}_F$, relative interiors are with respect to $\mathscr{P}^0$, and it can be the case that $relint((P_i)_F) \cap relint((P_j)_F) \neq \emptyset$. For this reason, we slightly refine our notion of polytope partition.

\begin{definition}\label{def:proper-polytope-partition}
Suppose that $\mathscr{P}$ is an $(m,n)$-polytope partition such that for all $0 \leq k \leq m$ and $F \in \partial_k(\Delta^m)$, $\phi_F(\mathscr{P}_F)$ is a $(k,n)$-polytope partition. Then we say that $\mathscr{P}$ is a {\em proper polytope partition}. 
\end{definition}

For the remainder of this section, we focus on proper polytope partitions. In addition, in order to prove correctness of CR-GBS we define a robust approximation of any $P_i \in \mathscr{P}$. 

\begin{definition}\label{def:robust-clipping}
Suppose that $P \subset \Delta^m$ is a polytope. We define $int_\gamma(P)$ as
$$
int_\gamma(P) = \{x \in P \ | \ B_\gamma(x) \cap \Delta^m \subset P \}
$$
We call this the {\em $\gamma$-interior} of $P$.
\end{definition}

Intuitively, the $\gamma$-interior of $P$ consists of points that are ``robustly'' within $P$ by a margin of $\gamma$ relative to the interior of $\Delta^m$, as visualised in Figure \ref{fig:gamma-interior}. In Lemma~\ref{lemma:robust-vertex-faces} we show that $int_\gamma(P)$ is a sub-polytope of $P$ with certain supporting hyperplanes translated towards the interior of $P$ by a margin of $\gamma$. We also show that if $P_i$ is an element of an $(m,n)$-polytope partition where $m > k$, then the vertices of $int_\gamma(P_i)$ also lie in some $F \in \partial_k(\Delta^m)$.

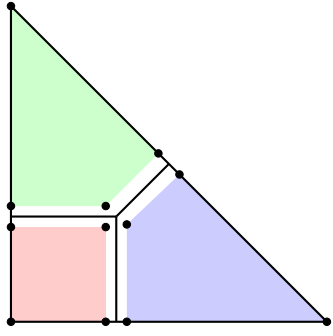
\begin{figure}
\center{
\begin{tikzpicture}[scale=0.7, every node/.style={draw,shape=circle,fill=blue}]
\tikzstyle{xxx}=[dashed,thick]



\fill[red!20](2,2)--(2,3.8)--(3.8,3.8)--(3.8,2)--cycle; 
\fill[blue!20](4.2,2)--(8,2)--(5.2,4.8)--(4.2,3.85)--cycle; 
\fill[green!20](2,4.2)--(3.8,4.2)--(4.8,5.2)--(2,8)--cycle; 

\draw[thick, -](2,2)--(8,2);

\draw[thick, -](2,2)--(2,8);

\draw[thick, -](2,8)--(8,2);

\draw[thick, -](4,4)--(5,5);
\draw[thick, -](4,4)--(4,2);
\draw[thick, -](2,4)--(4,4);

\filldraw (2,2) circle (2pt);
\filldraw (2,3.8) circle (2pt);
\filldraw (3.8,3.8) circle (2pt);
\filldraw (3.8,2) circle (2pt);

\filldraw (4.2,2) circle (2pt);
\filldraw (8,2) circle (2pt);
\filldraw (5.2,4.8) circle (2pt);
\filldraw (4.2,3.85) circle (2pt);

\filldraw (2,4.2) circle (2pt);
\filldraw (3.8,4.2) circle (2pt);
\filldraw (4.8,5.2) circle (2pt);
\filldraw (2,8) circle (2pt);

\end{tikzpicture}
\caption{$\gamma$-Interiors of Polytopes in a Partition }\label{fig:gamma-interior}
}
\end{figure}

\begin{lemma}\label{lemma:robust-vertex-faces}
Suppose that $\mathscr{P}$ is an $(m,n)$-polytope partition with $m > k = \binom{n}{2}$. For each $P_i \in \mathscr{P}$, and any $\gamma > 0$, $int_\gamma(P_i)$ is a sub-polytope of $P_i$. Furthermore, each vertex of $int_\gamma(P_i)$ lies in some $F \in \partial_k(\Delta^m)$. 
\end{lemma}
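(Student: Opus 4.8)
The plan is to handle the two assertions separately. For the first assertion — that $int_\gamma(P_i)$ is a sub-polytope of $P_i$ — I would start from the $H$-representation of $P_i$. Write $P_i = \Delta^m \cap \{x : \langle a_j, x\rangle \le b_j,\ j = 1,\dots,r\}$, where we may normalise $\|a_j\|_2 = 1$ and take the $a_j$ to be exactly those facet normals that are \emph{not} among the bounding halfspaces of $\Delta^m$ (the halfspaces coming from $\Delta^m$ need no shrinking, since $B_\gamma(x)\cap\Delta^m$ automatically stays inside $\Delta^m$). The key claim is then the identity
\[
int_\gamma(P_i) \;=\; \Delta^m \cap \bigcap_{j=1}^r \{x : \langle a_j, x\rangle \le b_j - \gamma\}.
\]
The inclusion ``$\supseteq$'' is immediate: if $\langle a_j,x\rangle \le b_j-\gamma$ for every non-simplex facet, then every $y$ with $\|y-x\|_2\le\gamma$ satisfies $\langle a_j,y\rangle \le b_j$, and if additionally $y\in\Delta^m$ then $y\in P_i$. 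The inclusion ``$\subseteq$'' is the slightly more careful direction: if $x\in int_\gamma(P_i)$ but $\langle a_j,x\rangle > b_j-\gamma$ for some $j$, I would exhibit a point $y = x + \gamma a_j$ that is "close to $P_i$ from outside" and, crucially, still lies in $\Delta^m$. This last point needs a short argument — one wants $x$ far enough from the relevant simplex facets. The cleanest route is: since $P_i$ is full-dimensional in $\Delta^m$ near the facet $\langle a_j,\cdot\rangle = b_j$ (that facet is interior to $\Delta^m$, as it is not a simplex facet), moving in direction $a_j$ by a small amount stays in $\Delta^m$; formally one perturbs and uses that the facet hyperplane meets $relint(\Delta^m)$. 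I expect this "stays in $\Delta^m$'' verification to be the main technical fiddle, though it is elementary. The upshot is that $int_\gamma(P_i)$ is an intersection of finitely many halfspaces contained in the bounded set $\Delta^m$, hence a polytope, and it is contained in $P_i$ since $b_j - \gamma \le b_j$.

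For the second assertion — every vertex of $int_\gamma(P_i)$ lies in some $F\in\partial_k(\Delta^m)$ with $k = \binom n2$ — I would run the counting argument sketched in Section \ref{subsec:overview-CR-GBS}. A vertex $v$ of $int_\gamma(P_i)$ is the unique solution of $m$ linearly independent tight constraints drawn from (a) the translated facets $\langle a_j,\cdot\rangle = b_j-\gamma$ and (b) the simplex facets $x_\ell = 0$. The point is that the translated facets $\langle a_j,\cdot\rangle = b_j - \gamma$ of $int_\gamma(P_i)$ that are \emph{interior} facets of $P_i$ arise from adjacencies of $P_i$ with other regions $P_{i'}$: each supporting hyperplane of $P_i$ inside $relint(\Delta^m)$ separates $P_i$ from some neighbour, and there are $n-1 \le \binom n2$ other regions, so there are at most $\binom n2$ such facets. (This is the same mechanism as in Lemma \ref{lemma:bound-critical-cardinality}: a facet of $P_i$ interior to $\Delta^m$ must coincide with a facet of a neighbouring $P_{i'}$.) Hence among the $m$ tight constraints defining $v$, at most $k = \binom n2$ are of type (a), so at least $m-k$ are of type (b), i.e. at least $m-k$ coordinates of $v$ vanish. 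The set of points of $\Delta^m$ with a fixed set of $m-k$ vanishing coordinates is exactly a $k$-dimensional face $F\in\partial_k(\Delta^m)$, so $v\in F$.

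The main obstacle I anticipate is making the adjacency-counting rigorous for $int_\gamma$ rather than for $P_i$ itself: one must be sure that the \emph{only} new facets introduced by the $\gamma$-shrinking are translates of existing interior facets of $P_i$ (the identity in the first part delivers exactly this), and that the simplex facets of $int_\gamma(P_i)$ are genuine simplex facets $x_\ell = 0$ so that tightness of $m-k$ of them forces $m-k$ coordinates to be $0$. Both follow from the explicit $H$-representation above, so once the first part is nailed down the second is bookkeeping. I would also remark that degenerate vertices (more than $m$ tight constraints) only help, since then even more constraints must be tight and the counting is only more favourable.
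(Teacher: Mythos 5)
Your plan follows the same route as the paper: write $int_\gamma(P_i)$ explicitly as the intersection of the simplex halfspaces (left alone) with the $\gamma$-shifted internal-facet halfspaces, and then count. The ``$\supseteq$'' inclusion and the final vertex-counting step ($|A|\le m$ simplex facets, $|B|\le\binom n2<m$ internal facets, so any vertex must saturate at least $m-k$ simplex constraints) are fine and match what the paper does.

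However, the concern you flag about the ``$\subseteq$'' direction is not a ``technical fiddle that is elementary'' --- it is a genuine hole, and in fact the claimed identity is false. The proposed repair (``moving in direction $a_j$ by a small amount stays in $\Delta^m$ because the facet hyperplane meets $relint(\Delta^m)$'') does not work: the facet may meet $relint(\Delta^m)$ far away from $x$, while near $x$ the normal direction $a_j$ points straight out of $\Delta^m$. Concretely, take $m=n=2$ (so $k=1$), $P_1=\Delta^2\cap\{x_2\le 0.1\}$, $P_2=\Delta^2\cap\{x_2\ge 0.1\}$, and $\gamma=0.12$. The shifted-halfspace intersection $C$ is $\Delta^2\cap\{x_2\le -0.02\}=\emptyset$. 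Yet the simplex vertex $(1,0)$ does lie in $int_\gamma(P_1)$: any $y\in B_{0.12}((1,0))\cap\Delta^2$ satisfies $y_2\le 1-y_1$ and $(1-y_1)^2+y_2^2\le 0.0144$, hence $2y_2^2\le 0.0144$ and $y_2\le 0.085<0.1$. So $int_\gamma(P_1)\ne C$. Worse, one can check that near that vertex the boundary of $int_\gamma(P_1)$ is the arc $(0.9-x_1)^2+(x_2-0.1)^2=\gamma^2$, so $int_\gamma(P_1)$ is not even a polytope, and the lemma as literally stated is false. You should know that the paper's own proof has the same flaw: it derives a contradiction from ``$B_\gamma(x)\not\subset P$'', but $int_\gamma$ is defined via $B_\gamma(x)\cap\Delta^m\subset P$, which does not force $B_\gamma(x)\subset P$. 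The good news is that everything downstream only needs the one-sided inclusion: $C$ is a genuine sub-polytope of $P_i$, $C\subseteq int_\gamma(P_i)$, its vertices lie in $\partial_k(\Delta^m)$ by your counting argument, and $P_i\setminus C$ is a union of at most $k$ slabs of thickness $\le\gamma$. So the correct fix is to state Lemma~\ref{lemma:robust-vertex-faces} (and use it in Lemma~\ref{lemma:face-to-thickness}) with the explicit polytope $C$ in place of $int_\gamma(P_i)$, rather than trying to prove the false equality.
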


\begin{proof}
Since $P_i \subset \mathbb{R}^m$ is a polytope, it can be expressed as the intersection of finitely many half-spaces: $P_i = \bigcap_{j=1}^q H_j$, such that $H_j = \{x \in \mathbb{R}^m \ | \ a_j \cdot x \geq b_j, \text{ where } a_j \in \mathbb{R}^m, \ \|a_j\|_2 = 1, \ b_j \in \mathbb{R}\}$. As mentioned before, each half-space, $H_j$, can either arise as an adjacency of $P_i$ with the boundary of $\Delta^m$, or as an adjacency of $P_i$ with some other $P_r \in \mathscr{P}$. Let us call the former set of half-spaces $A$ and the latter $B$. We abuse notation slightly and also let $A$ refer to the sets of indices $j \in [q]$ such that $H_j \in A$ (similarly for $B$).

For each $H_j \in B$, let $H_j' = \{x \in \mathbb{R}^m \ | \ a_j \cdot x \geq b_j + \gamma, \text{ where } a_j \in \mathbb{R}^m, \ \|a_j\|_2 = 1, \ b_j \in \mathbb{R}\}$. Clearly $H_j' \subset H_j$, and in fact the boundary hyperplane of $H_j'$ is parallel to that of $H_j$ (and translated by a margin of $\gamma$ towards the interior of $H_j$). We now define $C = \left( \bigcap_{j \in A} H_j \right) \cap \left( \bigcap_{j \in B} H_j' \right)$ and we show that $int_\gamma(P_i) = C$, which proves the first part of the lemma. 

Suppose that $x \in C$. By virtue of the construction of all $H_j'$, it must be the case that $B_\gamma(x)$ does not intersect the boundary of any $H_j \in B$. Since all $H_j \in A$ are unchanged in $C$, we obtain $B_\gamma(x) \cap \Delta^m \subset P_i$, therefore $x \in int_\gamma(P_i)$. 

Now suppose that $x \in int_\gamma(P_i)$. Since $int_\gamma(P_i) \subset P_i$, it is clear that $x \in H_j$ for all $H_j \in A$. As for $H_j \in B$, we know that $x \in H_j$ from the fact that  $int_\gamma(P_i) \subset P_i$. If $x \notin H_j'$, then $B_\gamma(x) \not \subset H_j$, which in turn implies $B_\gamma(x) \not \subset P$, contradicting our assumption that $x \in int_\gamma(P_i)$. This proves the claim that $C = int_\gamma(P_i)$.

As for the final claim of the lemma, we note that since each $H_j \in A$ arises as an adjacency of $P_i$ with the boundary of $\Delta^m$, it must be the case that $|A| \leq m$. Furthermore, since each $H_j \in B$ arises as an adjacency of two polytopes in $\mathscr{P}$, it follows that $|B| \leq \binom {n} {2} = k < m$. Since at least $m$ half-spaces need to meet in $\mathbb{R}^m$ to make a vertex, it must be the case that any vertex of $C = int_\gamma(P_i)$ lies on some $F \in \partial_k(\Delta^m)$.
\end{proof}

Suppose that $\mathscr{P}$ is a proper $(m,n)$-polytope partition with $m > k = \binom{n}{2}$. Furthermore, suppose that $P_i \in \mathscr{P}$ is of full affine dimension and consider a vertex, $v$, of $int_\gamma(P_i)$ which is ``robustly'' in the interior of $P_i$ by definition. From the previous lemma we know that $v$ lies in some $F \in \partial_k(\Delta^m)$. We now show that due to the margin $\gamma$ with which $v$ lies within $P_i$, we can recover a label of $v$ by computing a suitable empirical-labelling of $F$.

\begin{lemma} \label{lemma:face-to-thickness}
Suppose that $\mathscr{P}$ is a proper $(m,n)$-polytope partition with $m > k = \binom{n}{2}$. Furthermore, suppose that $P_i \in \mathscr{P}$ is of full affine dimension and that $v$ is a vertex of $int_\gamma(P_i)$ that lies on some face $F \in \partial_k (\Delta^m)$. It follows that any $\frac{2\gamma}{5}$-close labelling of $F$ that correctly labels the vertices of $F$ gives $v$ the label $i$. Furthermore, suppose that for all $F \in \partial_k(\Delta^m)$ we compute a $\frac{2\gamma}{5}$-close labelling. By taking convex combinations of these empirical labellings, we get $\tau(\hat{P}_\bot) \leq \frac{10}{3} n^2 \gamma (m+1)^{3/2}$.
\end{lemma}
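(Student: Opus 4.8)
The plan is to prove the two assertions in turn: first (a), that the vertex $v$ receives label $i$ in any $\tfrac{2\gamma}{5}$-close labelling of $F$ that is correct on the vertices of $F$, and then (b), the bound $\tau(\hat P_\bot)\le\tfrac{10}{3}n^2\gamma(m+1)^{3/2}$ once such labellings have been computed on every $F\in\partial_k(\Delta^m)$ and combined by convex hulls. Part (b) will follow almost formally from (a) together with Lemma~\ref{lemma:robust-vertex-faces} and subadditivity of thickness (Lemma~\ref{lemma:subadd}); part (a) is where the real work lies.

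For (a) I would start from Definition~\ref{def:robust-clipping}: $v$ being a vertex of $int_\gamma(P_i)$ means $B_\gamma(v)\cap\Delta^m\subseteq P_i$, and since $F$ is a face of $\Delta^m$ this gives $B_\gamma(v)\cap F\subseteq(P_i)_F:=P_i\cap F$, so $(P_i)_F$ is full-dimensional in $\mathrm{aff}(F)$ and coincides with $F$ on a relative neighbourhood of $v$. The empirical polytopes of a $\tfrac{2\gamma}{5}$-close labelling of $F$ are convex hulls of queried points (Definition~\ref{def:implicit-labelling}), so ``$v$ gets label $i$'' means $v$ lies in the convex hull of the queries that $Q_\ell$ assigns label $i$; by a separating-hyperplane argument it suffices to show that for every unit $u\in\mathrm{aff}(F)$ there is such a query $y$ with $u\cdot y\ge u\cdot v$. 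If $v+\tfrac{3\gamma}{5}u\in F$, then $v+\tfrac{3\gamma}{5}u\in B_{3\gamma/5}(v)\cap F\subseteq(P_i)_F$; the net property supplies a query $y$ within $\tfrac{2\gamma}{5}$ of it, so $y\in B_\gamma(v)\cap F\subseteq(P_i)_F$, $y$ is robustly interior to $(P_i)_F$ and hence (for a proper partition) labelled $i$, and $u\cdot y\ge u\cdot v+\tfrac{3\gamma}{5}-\tfrac{2\gamma}{5}=u\cdot v+\tfrac{\gamma}{5}$. The constant $\tfrac{2\gamma}{5}$ is exactly what makes ``push out by $\tfrac{3\gamma}{5}$, then approximate within $\tfrac{2\gamma}{5}$'' stay inside $B_\gamma(v)$ with a strict slack $\tfrac{\gamma}{5}$ to spare, analogous to the $\eta^*$ gap in Lemma~\ref{lemma:crux-for-gbs}. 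For the remaining directions $u$ --- those with $v+\tfrac{3\gamma}{5}u\notin F$, where $v$ lies within $\tfrac{3\gamma}{5}$ of a proper face of $F$ along $u$ --- I would descend to the lowest-dimensional face of $\Delta^m$ containing $v$ (whose vertices lie among those of a suitable $F$, are labelled correctly by hypothesis, and on which $v$ is relatively interior and still $\gamma$-robustly inside $P_i$) and either rerun the push-out argument there or invoke the correct vertex labels directly. Ranging over all $u$ puts $v$ in the convex hull of the label-$i$ queries.

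For (b), it suffices to treat full-dimensional $P_i$, since $int_\gamma(P_j)=\emptyset$ for lower-dimensional $P_j$ while the union of the full-dimensional pieces already equals $\Delta^m$. By Lemma~\ref{lemma:robust-vertex-faces} every vertex of $int_\gamma(P_i)$ lies on some $F\in\partial_k(\Delta^m)$, and the $\tfrac{2\gamma}{5}$-close labellings computed there are correct on simplex vertices (CD-GBS queries all of them), so part (a) labels each such vertex $i$; taking convex hulls of labels then gives $int_\gamma(P_i)\subseteq\hat P_i$. Hence $\hat P_\bot=\Delta^m\setminus\bigcup_i\hat P_i\subseteq\bigcup_i\bigl(P_i\setminus int_\gamma(P_i)\bigr)$. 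By the facet description in the proof of Lemma~\ref{lemma:robust-vertex-faces}, $int_\gamma(P_i)$ is $P_i$ with only its internal facets $H_j=\{x:a_j\cdot x\ge b_j\}$ (those shared with another region) translated inward by $\gamma$, so $P_i\setminus int_\gamma(P_i)$ is the union over those facets of the convex polytopes $P_i\cap\{x:b_j\le a_j\cdot x\le b_j+\gamma\}$, each a slab of width $\gamma$ and hence of thickness at most $\gamma$. Since each unordered adjacent pair contributes at most two such slabs, $\hat P_\bot$ lies in a union of at most $2\binom{n}{2}<n^2$ convex polytopes of thickness at most $\gamma$, and Lemma~\ref{lemma:subadd} gives $\tau(\hat P_\bot)<\tfrac{10}{3}\bigl(\sum\tau(\text{slabs})\bigr)(m+1)^{3/2}\le\tfrac{10}{3}n^2\gamma(m+1)^{3/2}$.

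I expect the main obstacle to be part (a), and within it: (i) ruling out that $Q_\ell$ returns an index smaller than $i$ at the queries we use near $v$, so they genuinely count towards $\hat P_i$; and (ii) handling directions along which $v$ sits on (or very near) the relative boundary of $F$, including the case $v$ is a vertex of $F$. Both are meant to be handled by combining properness of the partition --- forcing $B_\gamma(v)\cap F$ into the relative interior of $(P_i)_F$, so the relevant queries are interior to $P_i$ --- with the vertex-correctness hypothesis; making that airtight, rather than the slab count of (b), is where the effort concentrates.
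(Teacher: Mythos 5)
Your part (a) takes a genuinely different route from the paper's: you argue by separating hyperplanes (for every unit $u$ tangent to $F$, exhibit a label-$i$ query $y$ with $u\cdot y\ge u\cdot v$), while the paper fixes an $\ell_1$-ball $A_1$ of radius $\tfrac{3\gamma}{5}$ centred at $v$ inside $A_2\subseteq B_\gamma(v)\cap P_i\cap F$, takes the $\tfrac{2\gamma}{5}$-balls $V_j$ around its corners, finds label-$i$ queries $z_j\in V_j$, and concludes $v\in\mathrm{Conv}(z_1,\dots,z_s)$. Both are the same ``push out by $\tfrac{3\gamma}{5}$, approximate within $\tfrac{2\gamma}{5}$'' computation packaged differently (dual hyperplanes vs.\ a primal simplex). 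Your part (b), including the count of at most $2\binom{n}{2}<n^2$ slabs (one per side of each adjacent pair), is essentially the paper's argument and cleanly justifies the $n^2$ factor.

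The step you flag as the hard part is, however, a genuine gap, not housekeeping. If the minimal face $F_0$ of $\Delta^m$ containing $v$ has dimension $r_0$ with $1\le r_0<k$ (allowed: Lemma~\ref{lemma:robust-vertex-faces} only gives $r_0\le k$), then $F_0$ is a proper face of $F$, and any representation $v=\sum_j\lambda_j s_j$ with $\lambda_j>0$ and $s_j\in F$ forces every $s_j\in F_0$, by intersecting with the supporting hyperplane of $F$ along $F_0$. So $v$ can receive label $i$ only from label-$i$ queries lying \emph{on} $F_0$. A $\tfrac{2\gamma}{5}$-close labelling of $F$ is merely an $\ell_2$-net of the $k$-dimensional $F$ and need not place a single query on $F_0$ beyond its vertices; and the vertex-correctness hypothesis only guarantees those vertices receive \emph{some} correct label, not the label $i$. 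Consequently ``descend to $F_0$ and rerun the push-out'' has no queries to land on, and ``invoke the correct vertex labels directly'' assumes more than is given. (The paper's own $V_j=B_{2\gamma/5}(x_j)\cap A_2$ also sits inside the $r$-dimensional sub-face and therefore leans on the same unstated assumption, so the weakness is shared; but a blind attempt has to own it.) A fix would either restrict the first assertion to $v$ in the relative interior of $F$, handling boundary vertices of $int_\gamma(P_i)$ by a separate induction on face dimension, or strengthen the hypothesis to require $\tfrac{2\gamma}{5}$-close labellings of all sub-faces of $F$ as well.
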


\begin{proof}
If $v$ is a vertex as in the statement of the lemma, it must either be a vertex of the original simplex, or $B_\gamma(v) \cap P_i\cap F$ must contain a $r$-dimensional $\ell_2$ ball of radius $\gamma$  which we call $A_2$ (where $r$ corresponds to the dimension of the sub-face of $F$ that $v$ lies on, implying $1 \leq r \leq k$). If $v$ is a vertex of the original simplex, then it is correctly labelled by assumption, so we focus on the the latter case.

Let $A_1$ be any $r$-dimensional $\ell_1$ ball of radius $\frac{3\gamma}{5}$ such that $A_1 \subsetneq A_2$, and denote the corners of $A_1$ by $x_1,...,x_s$. For $i = 1,...,s$, let $V_i = B_{2\gamma/5}(x_i) \cap A_2 \subset F$. We note that $V_i \cap V_j = \emptyset$ for all $i \neq j$.

By the conditions of empirical labellings and the fact that $P_i$ is of full affine dimension, there must exist $z_1,...,z_s$ such that $z_r \in V_r$ and $z_r$ gets its correct label, $i$, under $Q_\ell$. Furthermore, it is straightforward to see that $v \in Conv(z_1,...,z_s)$, hence $v$ gets its correct label, $i$, as visualised in Figure \ref{fig:face-thickness} for $r = 2$. 

Along with Lemma \ref{lemma:robust-vertex-faces}, this shows that if for all $F \in \partial (\Delta^m)^k$ we compute $\frac{2\gamma}{5}$-close labellings that correctly label the vertices of $F$, then we will have correctly labelled all vertices of the polytope $int_\gamma(P_i)$. Consequently, by taking convex combinations of these labellings, the entirety of $int_\gamma(P_i)$ will be labelled correctly for an arbitrary full-dimensional $P_i$.

For a given full-dimensional $P_i \subset \mathscr{P}$, it is the case that $P_i \setminus int_\gamma(P_i)$ can be expressed as a disjoint union of at most $k \leq n^2$ polytopes of thickness bounded by $\gamma$ (using the notation from the proof of Lemma \ref{lemma:robust-vertex-faces}, these polytopes are all of the form $(H_j \setminus H_j') \cap P_i$, of which there are  at most $|B| = k$ ). For a given $P_j$ that is not full-dimensional, it trivially holds that $\tau(P_j) = 0$ Thus we can use Lemma \ref{lemma:subadd} to see that $\tau(\hat{P}_\bot) = \tau(\cup_i \left(P_i \setminus int_\gamma(P_i)\right) ) \leq \frac{10}{3} n^2 \gamma (m+1)^{3/2} $. 
\end{proof}

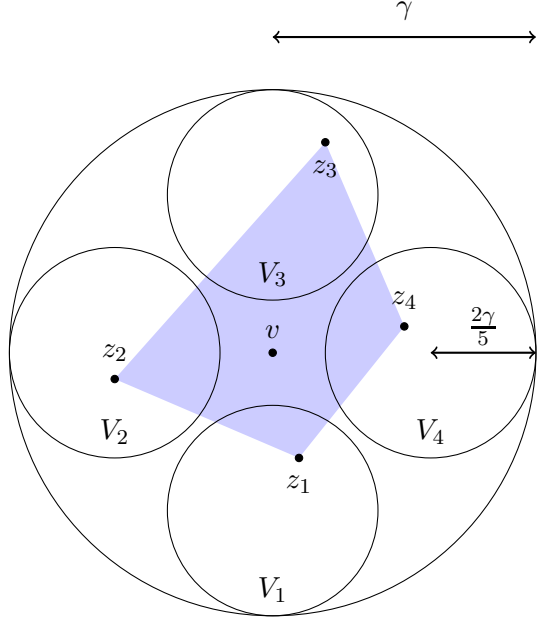
\begin{figure}
\center{
\begin{tikzpicture}[scale=0.7]
\tikzstyle{xxx}=[dashed,thick]


\fill[blue!20](-3,-0.5)--(0.5,-2)--(2.5,0.5)--(1,4)--cycle; 

\filldraw (0,0) circle (2pt);
\draw (0,0) circle (5cm);

\filldraw (1,4) circle (2pt);
\filldraw (0.5,-2) circle (2pt);
\filldraw (2.5,0.5) circle (2pt);
\filldraw (-3,-0.5) circle (2pt);

\draw (0,3) circle (2cm);
\draw (0,-3) circle (2cm);
\draw (3,0) circle (2cm);
\draw (-3,0) circle (2cm);

\draw[thick, <->](3,0)--(5,0);
\draw[thick, <->](0,6)--(5,6);

\node at (2.5,6.5) {$\gamma$};
\node at (4,0.5) {$\frac{2\gamma}{5}$};

\node at (0,0.4) {$v$};

\node at (0,-4.5) {$V_1$};
\node at (-3,-1.5) {$V_2$};
\node at (0,1.5) {$V_3$};
\node at (3,-1.5) {$V_4$};

\node at (0.5,-2.5) {$z_1$};
\node at (-3,0) {$z_2$};
\node at (1,3.5) {$z_3$};
\node at (2.5,1) {$z_4$};

\end{tikzpicture}
\caption{Proof of Lemma \ref{lemma:face-to-thickness}}
\label{fig:face-thickness}
}
\end{figure}

\begin{corollary}\label{ref:corollary-for-gbs-correct}
Suppose that $\mathscr{P}$ is a proper $(m,n)$-polytope partition. Let $\gamma = \frac{3\epsilon}{40n^2(m+1)^{5/2}}$, and suppose that for all $F \in \partial_k(\Delta^m)$, a $\frac{2\gamma}{5}$-close labelling that correctly labels the vertices of $F$ is computed with $Q_\ell$. Taking a convex combination of these empirical labellings results in an $\epsilon$-close labelling of $\mathscr{P}$.
\end{corollary}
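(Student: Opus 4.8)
The plan is to reduce the statement to the ``thickness implies distance'' machinery already developed, namely Lemma~\ref{lemma:face-to-thickness} followed by Lemma~\ref{lemma:thickness-to-distance}. All the geometric work (the fact that $\gamma$-interiors of the full-dimensional regions are correctly labelled once every $k$-face is, and that the leftover set is a bounded-thickness union of thin polytopes) has been packaged into those two results, so what remains is just to check that the stated choice of $\gamma$ makes the constants line up.

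First I would apply Lemma~\ref{lemma:face-to-thickness} verbatim. We are in the setting of a proper $(m,n)$-polytope partition with $m>k=\binom{n}{2}$ (the standing assumption of this section), and the hypothesis of the corollary is exactly that for every $F\in\partial_k(\Delta^m)$ we have a $\tfrac{2\gamma}{5}$-close labelling that correctly labels the vertices of $F$. Hence Lemma~\ref{lemma:face-to-thickness} gives that the empirical labelling $\hat{\mathscr{P}}$ obtained by taking convex combinations satisfies $\tau(\hat{P}_\bot)\le \tfrac{10}{3}n^2\gamma(m+1)^{3/2}$. Substituting $\gamma=\tfrac{3\epsilon}{40n^2(m+1)^{5/2}}$ collapses the right-hand side to $\tfrac{\epsilon}{4(m+1)}$, so in particular $\tau(\hat{P}_\bot)\le \tfrac{\epsilon}{4(m+1)}<\tfrac{\epsilon}{4m}$.

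Second, I would feed this thickness bound into Lemma~\ref{lemma:thickness-to-distance} applied with ambient simplex $\Delta^m$. Choose any threshold $\epsilon_0$ with $\tfrac{\epsilon}{4(m+1)}<\epsilon_0<\tfrac{\epsilon}{4m}$; such a value exists since $\tfrac{1}{4(m+1)}<\tfrac{1}{4m}$. Then $\tau(\hat{P}_\bot)<\epsilon_0$ and $\epsilon>4m\epsilon_0$, so the final (``in particular'') clause of Lemma~\ref{lemma:thickness-to-distance} applies and yields that $\hat{\mathscr{P}}$ is an $\epsilon$-close labelling of $\mathscr{P}$, which is precisely the claim.

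I do not expect a genuine obstacle here: the proof is a two-step composition of earlier lemmas, and the only points requiring a little care are the bookkeeping of strict versus non-strict inequalities — this is why I route through the auxiliary $\epsilon_0$ rather than trying to invoke Lemma~\ref{lemma:thickness-to-distance} at the boundary value — and the reliance on $m>\binom{n}{2}$, without which Lemma~\ref{lemma:face-to-thickness}, and hence the whole face-decomposition strategy, is unavailable. (For $m\le\binom{n}{2}$ one would instead run CD-GBS directly on $\Delta^m$, so there is no loss in restricting to $m>\binom{n}{2}$ here.)
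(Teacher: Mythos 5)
Your proof follows exactly the paper's route: Lemma~\ref{lemma:face-to-thickness} gives the thickness bound $\tau(\hat{P}_\bot)\le\frac{10}{3}n^2\gamma(m+1)^{3/2}$, and substituting the stated $\gamma$ and invoking Lemma~\ref{lemma:thickness-to-distance} finishes. Your routing through an auxiliary $\epsilon_0$ to handle the strict-versus-non-strict inequality boundary, and your explicit flag of the $m>\binom{n}{2}$ hypothesis, are small tightenings of bookkeeping the paper leaves implicit.
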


\begin{proof}
This follows from the fact that $\tau(\hat{P}_\bot) \leq \frac{10}{3} n^2 \gamma (m+1)^{3/2}$ from the previous theorem. We can therefore use Lemma \ref{lemma:thickness-to-distance} and obtain the desired result.
\end{proof}

The previous result gives us precisely what we need to prove the correctness of CR-GBS. In fact, it shows that CR-GBS can use any algorithm as a sub-routine (not just CD-GBS) as long as it computes empirical labellings of polytope partitions along all faces $F \in \partial (\Delta^m)^k$ while correctly labelling the vertices of $\Delta^m$.

\subsection{Specification of CR-GBS and Query Usage}

\paragraph{Terms and Notation:} For $F \in \partial_k(\Delta^m)$, we let $\phi_F$ denote a canonical isomorphism from $F$ to $\Delta^k$ as per Section \ref{subsec:overview-CR-GBS}. Furthermore, for each such $F$, we let $\hat{\mathscr{P}}_F$ empirical labelling returned by CD-GBS on a given face, $F$. 

\begin{algorithm} 
\caption{CR-GBS$(m,n,\epsilon, Q)$} 
\label{alg:FGBS} 
\begin{algorithmic} 
	\INPUT $m,n,\epsilon > 0$, query access to membership oracle $Q$ for $(m,n)$-polytope partition $\mathscr{P}$.
	\OUTPUT $\epsilon$-close labelling of $\mathscr{P}$.
	\STATE $k \leftarrow \binom{n} {2}$
	\FOR{$F \in \partial_k(\Delta^m)$}
        \STATE $\hat{\mathscr{P}}^F \leftarrow \phi_F^{-1} \left( \text{CD-GBS}\left(k, n, \frac{3\epsilon}{100n^2\sqrt{k+1}(m+1)^{5/2}}, Q \circ \phi^{-1}_F \right) \right)$.
	\ENDFOR
	\STATE $\hat{\mathscr{P}} \leftarrow Conv_F(\hat{\mathscr{P}}_F)$
	\RETURN $\hat{\mathscr{P}}$  
\end{algorithmic}
\end{algorithm}

\begin{theorem}\label{thm:full-gbs-guarantee}
Let $\mathscr{P}$ be a proper $(m,n)$-polytope partition where $n$ is constant and $m > k = \binom{n}{2}$. CR-GBS computes an $\epsilon$-close labelling of $\mathscr{P}$ and uses $O \left( m^k \log^k \left( \frac{m}{\epsilon} \right) \right) = poly(m,\log \left( \frac{1}{\epsilon} \right) )$ queries.
\end{theorem}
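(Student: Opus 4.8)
The plan is to prove the two halves of the statement in turn --- correctness of the returned labelling, then the query bound --- and for correctness to reduce everything to Corollary~\ref{ref:corollary-for-gbs-correct}. Set $\gamma = \tfrac{3\epsilon}{40 n^2 (m+1)^{5/2}}$, the value appearing in that corollary. It suffices to show that for each $F \in \partial_k(\Delta^m)$ the labelling $\hat{\mathscr{P}}_F$ that CR-GBS obtains by pulling the output of CD-GBS back to $F$ through $\phi_F^{-1}$ is a $\tfrac{2\gamma}{5}$-close labelling of $\mathscr{P}_F$ that labels the vertices of $F$ correctly; Corollary~\ref{ref:corollary-for-gbs-correct} then says that $\hat{\mathscr{P}} = Conv_F(\hat{\mathscr{P}}_F)$ is an $\epsilon$-close labelling of $\mathscr{P}$, and this is exactly what the algorithm returns. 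This step is where the hypothesis $m > k = \binom{n}{2}$, and with it Lemmas~\ref{lemma:robust-vertex-faces} and~\ref{lemma:face-to-thickness} underlying the corollary, are used.

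To establish the per-face claim I would argue as follows. Because $\mathscr{P}$ is \emph{proper} (Definition~\ref{def:proper-polytope-partition}), $\phi_F(\mathscr{P}_F)$ is a genuine $(k,n)$-polytope partition of $\Delta^k$; hence $Q_\ell \circ \phi_F^{-1}$ is a valid lexicographic oracle for it and Theorem~\ref{thm:genbinsrch-correct} applies. That theorem gives that CD-GBS, run with the accuracy $\epsilon' = \tfrac{3\epsilon}{100 n^2 \sqrt{k+1}(m+1)^{5/2}}$ used in Algorithm~\ref{alg:FGBS}, returns an $\epsilon'$-close labelling of $\phi_F(\mathscr{P}_F)$. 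Pulling this back through $\phi_F^{-1}$, which is $\sqrt{k+1}$-Lipschitz (Section~\ref{subsec:face-to-simplex}), produces a $(\sqrt{k+1}\,\epsilon')$-close labelling of $\mathscr{P}_F$ on $F$, and one checks that $\sqrt{k+1}\,\epsilon' = \tfrac{2\gamma}{5}$ precisely. Finally one needs that CD-GBS also labels every vertex of $\Delta^k$ correctly; this is not spelled out in Theorem~\ref{thm:genbinsrch-correct} but follows by induction on dimension --- CD-GBS queries $Q(\vec{e}_1)$ explicitly, its $m=0$ base case queries the origin, and the recursive call on the $0$-cross-section (via $f_0$, which carries $\vec{e}_i$ to $\vec{e}_{i-1}$) queries the remaining vertices --- and under $\phi_F^{-1}$ this yields correct labels for the vertices of $F$.

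For the query bound, the number of $k$-dimensional faces of $\Delta^m$ is $\binom{m+1}{k+1}$, which is polynomial in $m$ since $k = \binom{n}{2}$ is a constant, and CR-GBS makes exactly one CD-GBS call per face. Each call is in dimension $k$ with $n$ regions and accuracy $\epsilon'$; since $\log(1/\epsilon') = O(\log(m/\epsilon))$ and $k,n$ are constant, the bound of Theorem~\ref{thm:genbinsrch-correct} --- in which the $\prod_{i=1}^k\bigl(\binom{n+i}{i}+2n\bigr)$ and $2^{2k^2}$ factors become absolute constants --- gives $O(\log^k(m/\epsilon))$ queries for that call. Multiplying the two estimates yields a total of $\poly(m,\log(1/\epsilon))$ queries, as claimed.

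The bulk of the work, and the main place where care is needed, is the arithmetic chaining of accuracy parameters: from $\epsilon$ to $\gamma$ in Corollary~\ref{ref:corollary-for-gbs-correct}, from $\gamma$ to the $\tfrac{2\gamma}{5}$ demanded on each face, through the $\sqrt{k+1}$ distortion of the face-to-simplex map, down to the $\epsilon'$ handed to CD-GBS --- all of which must match the constant hard-coded in Algorithm~\ref{alg:FGBS}. The only genuinely new ingredient beyond the cited results is the small observation that CD-GBS queries (hence correctly labels) all vertices of the simplex, which is required to satisfy the ``correctly labels the vertices'' hypothesis of Corollary~\ref{ref:corollary-for-gbs-correct}. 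No fixed-point machinery or new geometry is needed here; everything rests on Theorem~\ref{thm:genbinsrch-correct}, Corollary~\ref{ref:corollary-for-gbs-correct}, and the Lipschitz lemma of Section~\ref{subsec:face-to-simplex}.
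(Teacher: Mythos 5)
Your proof follows essentially the same route as the paper's: correctness is reduced to Corollary~\ref{ref:corollary-for-gbs-correct} by checking that each face gets a sufficiently accurate labelling, and the query bound is the number of $k$-faces times the per-face CD-GBS cost. You spell out several steps the paper leaves implicit, and these are the right things to check: that properness of $\mathscr{P}$ is what licenses invoking Theorem~\ref{thm:genbinsrch-correct} on each $\phi_F(\mathscr{P}_F)$; that $\sqrt{k+1}\,\epsilon'$ chains to exactly $\tfrac{2\gamma}{5}$; and that CD-GBS labels all simplex vertices (your induction via the explicit $Q(\vec{e}_1)$ query, the $m=0$ base case, and the $f_0$-pullback of the $0$-cross-section recursion is correct and is a genuine gap in the paper's exposition, since the corollary's hypothesis requires it).

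One small numerical point worth noting: your count of $\binom{m+1}{k+1}$ $k$-faces of $\Delta^m$ is the right one (a simplex on $m+1$ vertices has $\binom{m+1}{k+1}$ faces spanned by $k+1$ of them), whereas the paper's proof writes $\binom{m}{k}$. Since $\binom{m+1}{k+1}=\Theta(m^{k+1})$ for constant $k$, your argument actually yields $O\bigl(m^{k+1}\log^k(m/\epsilon)\bigr)$ rather than the $O\bigl(m^{k}\log^k(m/\epsilon)\bigr)$ stated in the theorem; you sensibly claim only $\poly(m,\log(1/\epsilon))$, which is what the correct count supports. This is a slip in the paper, not in your proposal, and it does not affect the headline polynomial bound.
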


\begin{proof}
The correctness follows from Corollary \ref{ref:corollary-for-gbs-correct}. In the worst case faces are of the form $\Lambda^k$, which incur an extra cost of $\sqrt{k+1}$ in the approximation factor of empirical labellings. We use this as a worst case bound.

For simplicity in notation, we define $m_0 = k$, $\epsilon_0 = \frac{3\epsilon}{100n^2\sqrt{k+1}(m+1)^{5/2}}$. From Theorem  \ref{thm:genbinsrch-correct}, the CD-GBS subroutine uses at most $\left( \prod_{i=1}^{m_0} \left( \binom{n+i}{i} + 2n \right) \right) 2^{2m_0^2}\log^{m_0} \left( \frac{170 n {m_0}^{5/2}}{\epsilon_0}  \right)$ queries. Since $k = \binom{n}{2}$ is constant, this expression can be written as $O \left( \log^k \left( \frac{m^{5/2}}{\epsilon} \right) \right) = O \left( \log^k \left( \frac{m}{\epsilon} \right) \right)$. Finally, there are $\binom{m}{k}$ possible faces upon which CD-GBS can be called as a subroutine, hence the total query usage is indeed $O \left( m^k \log^k \left( \frac{m}{\epsilon} \right) \right) = poly(m,\log \left( \frac{1}{\epsilon} \right) )$.
\end{proof}

\section{Upper Envelope Polytope Partitions}\label{sec:uepp}

Up until now we have focused completely on the lexicographic query oracle $Q_\ell$, creating algorithms CD-GBS and CR-GBS that compute $\epsilon$-close labellings of $(m,n)$-polytope partitions when given access to $Q_\ell$. If these algorithms are given access to an adversarial oracle $Q_A$ however, they may fail. It suffices to see this for CD-GBS since CR-GBS uses it as a subroutine. 

To see why CD-GBS may fail under $Q_A$ we recall that the algorithm recursively computes $\epsilon$-close labellings of cross-sections $\mathscr{P}^t$ for different values of $t \in [0,1]$. If ever CD-GBS is called on a degenerate cross-section $\mathscr{P}^t$, it has conditions to either tell that it is being called on a degenerate cross-section (when it notices that there exist $i,j \in [n]$ and $z \in \Delta^m$ such that $z \in int(\hat{P}_i) \cap \hat{P}_j$), or in the worst case, prevent it from exceeding its query balance. In both cases however, the algorithm returns a valid empirical labelling, i.e., $\hat{P} = \{ \hat{P}_i\}_{i=1}^n$ such that $\hat{P}_i \subseteq P_i$. 

When an adversarial oracle is used however, we may see $i,j \in [n]$ and $z \in \Delta^m$ such that $z \in int(\hat{P}_i) \cap \hat{P}_j$. Indeed this can occur if $P_i = P_j$ and both are full-dimensional. The natural solution seems to merge $P_i$ and $P_j$ (since the second condition of the definition of polytope partitions tells us that $P_i = P_j$ in this case). The main problem however, is that there is no way of telling when the condition above is an artifice of the adversarial oracle, or simply due to the fact that $\mathscr{P}^t$ is degenerate. If we blindly merge labels, we may in fact be performing an incorrect merge on a degenerate cross-section! This of course may return inconsistent polytope partitions. 

Since the key problem is the existence of degenerate cross-sections, we consider a slightly stronger variant of polytope partitions with the key property that cross-sections are never degenerate. Furthermore, this special type of polytope partition is expressive enough for our game theoretic applications, and best of all, it allows us to prove results in the adversarial query oracle model.

\begin{definition}[Upper Envelope Polytope Partition]\label{def:UE-polytope-partition}
Suppose that $A \in \mathbb{R}^{n \times m}$ is an $n \times m$ real-valued matrix and that $b \in \mathbb{R}^n$. Let $P_i = y \in \Delta^m$ such that $(Ay + b)_i \geq (Ay + b)_j$ for all $j \neq i$. We denote the collection $\mathscr{P}(A,b) = P_1,\ldots,P_n$, as the upper envelope polytope partition (UEPP) arising from $(A,b)$.
\end{definition}

It is straightforward to see that for any $(A,b)$, $\mathscr{P}(A,b)$ is itself an $(m,n)$-polytope partition. Crucially however, it satisfies more properties than the previous definition of polytope partitions.

\begin{lemma}\label{lemma:structure-uepp}
Suppose that $A$ is an $n \times m$ real valued matrix and that $b \in \mathbb{R}^n$. Then $\mathscr{P}(A,b) = \{P_1,\ldots,P_n\}$ has the following properties:
\begin{itemize}
    \item For any $x \in [0,1)$ let $f_x$ be the canonical affine transformation that maps $(\Delta^m)^x$ to $\Delta^{m-1}$. There exists an $n \times (m-1)$ real matrix $A^x$ and $b^x \in \mathbb{R}^n$ such that $\mathscr{P}(A^x,b^x) = f_x(\mathscr{P}(A,b)^x)$.  
    \item $\mathscr{P}(A,b)$ is a proper polytope partition (Definition~\ref{def:proper-polytope-partition}).
    \item If $A_{i,\bullet} = A_{j,\bullet}$ and $b_i = b_j$ then $P_i = P_j$. Conversely if $P_i$ is of full affine dimension and $relint(P_i) \cap P_j \neq \emptyset$, then $A_{i,\bullet} = A_{j,\bullet}$ and $b_i = b_j$; consequently, $P_i = P_j$.  
    \item Suppose that $a_1,\ldots,a_k \in \mathbb{R}$ are such that $\sum_{i=1}^k a_i < 1$ with $k < m$. Let $H = \{ (z_1,\ldots,z_m) \in \Delta^m \ | \ z_i = a_i, i=1,\ldots,k\}$ where $H$ has affine codimension $k$. If $x_1,\ldots,x_{m-k} \in \Delta^m$ are affinely independent points of  $P_i\cap H$ and $y \in Conv(x_1,\ldots,x_{m-k})$ belongs to $P_j$, then $P_i$ and $P_j$ coincide in $H$.   
\end{itemize}
\end{lemma}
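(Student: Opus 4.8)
The plan is to exploit throughout that each $\ell_i(y) := (Ay+b)_i$ is an affine function of $y \in \mathbb{R}^m$, together with the elementary principle that an affine function attaining an interior extremum on a relatively full-dimensional set is constant on that set's affine hull. For the first bullet I would simply substitute: writing $f_x^{-1}(z) = (x,(1-x)z_1,\ldots,(1-x)z_{m-1})$ for $z\in\Delta^{m-1}$, a one-line computation gives $\ell_i(f_x^{-1}(z)) = (1-x)\sum_{j=1}^{m-1}A_{i,j+1}z_j + (A_{i,1}x + b_i)$, so with $A^x_{i,j} := (1-x)A_{i,j+1}$ and $b^x_i := A_{i,1}x + b_i$ the map $\ell_i\circ f_x^{-1}$ equals the $i$-th coordinate of $z\mapsto A^x z + b^x$. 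Since $f_x$ is a bijection preserving which index attains the maximum of $\ell_1,\ldots,\ell_n$, this gives $f_x(\mathscr{P}(A,b)^x) = \mathscr{P}(A^x,b^x)$.

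For the second bullet, fix a face $F\in\partial_k(\Delta^m)$. The map $\phi_F\colon F\to\Delta^k$ is affine (a coordinate restriction for axis-aligned $F$, and the linear map $\phi_m$ of Section~\ref{subsec:face-to-simplex} otherwise), so each $\ell_i\circ\phi_F^{-1}$ is affine on $\Delta^k$ and hence $\phi_F(\mathscr{P}_F) = \mathscr{P}(A^F,b^F)$ is again a UEPP (for axis-aligned $F$ one may instead iterate the first bullet on the fixed coordinates). It therefore suffices to show that every UEPP satisfies the second clause of Definition~\ref{def:polytope-partition}, the covering clause being immediate since some $\ell_i$ is maximal at every point. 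So suppose $z\in relint(P_i)\cap relint(P_j)$ with $i\ne j$ and set $g=\ell_i-\ell_j$. Then $g(z)=0$, while $g\ge 0$ on a relative neighbourhood of $z$ in $P_i$ and $g\le 0$ on a relative neighbourhood of $z$ in $P_j$; by the principle above, $g$ vanishes on $\mathrm{aff}(P_i)$ and on $\mathrm{aff}(P_j)$. Hence every $w\in P_i$ satisfies $\ell_j(w)=\ell_i(w)\ge\ell_l(w)$ for all $l$, so $w\in P_j$, and symmetrically; thus $P_i=P_j$.

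The forward direction of the third bullet is immediate, since $A_{i,\bullet}=A_{j,\bullet}$ and $b_i=b_j$ make $\ell_i\equiv\ell_j$, so $P_i$ and $P_j$ are cut out by identical inequalities. For the converse, if $P_i$ is full-dimensional and $z\in relint(P_i)\cap P_j = int(P_i)\cap P_j$, then $g=\ell_i-\ell_j$ satisfies $g(z)=0$ and $g\ge 0$ on a full-dimensional ball about $z$, so $g\equiv 0$ on $\mathbb{R}^m$; comparing coefficients gives $A_{i,\bullet}=A_{j,\bullet}$ and $b_i=b_j$, whence $P_i=P_j$ by the forward direction.

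For the fourth bullet I would iterate the first bullet over the coordinates $z_1,\ldots,z_k$ to replace $\{P_l\cap H\}$ by a UEPP $\{Q_l\}$ on a scaled copy of $\Delta^{m-k}$ under a canonical affine identification, so that ``$P_i$ and $P_j$ coincide in $H$'' becomes $Q_i=Q_j$; write $g=\ell_i-\ell_j$ for the induced affine functions. Since $y\in Q_i\cap Q_j$ we get $g(y)=0$, and $g\ge 0$ on $Q_i\supseteq\sigma:=Conv(x_1,\ldots,x_{m-k})$; because $y\in\sigma$ and the $x_t$ are affinely independent, the affine function $g$ — nonnegative on $\sigma$ with a zero at $y$ — is forced to vanish on $\mathrm{aff}(\sigma)$, hence on $\mathrm{aff}(H)$, and then $Q_i=Q_j$ exactly as in the third bullet. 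The step I expect to be the main obstacle is this last piece of bookkeeping: one must verify that the zero of $g$ at $y$ actually lies in the relative interior of the simplex on which $g\ge 0$, so that affineness genuinely forces $g$ to vanish on its affine hull, and that this affine hull together with the affine structure of the slice $H$ is rich enough to pin $g$ down to $0$ on all of $\mathrm{aff}(H)$ rather than merely on a proper subspace. The first three bullets, by contrast, are routine once the interior-extremum principle is in hand.
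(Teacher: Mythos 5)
Your treatment of the first three bullets is correct and follows essentially the same route as the paper. Bullet one is an explicit-coordinates version of the paper's abstract ``composition of affine maps is affine'' argument. Bullet two is also the paper's strategy (restrict via the affine map $\phi_F$ and conclude the restriction is a UEPP; iterate over codimensions); you additionally prove that every UEPP satisfies the second clause of Definition~\ref{def:polytope-partition}, which the paper merely declares ``straightforward'' in the sentence preceding the lemma, so that is a useful addition rather than a deviation. Your third bullet via the interior-extremum principle for affine functions is a compact restatement of the paper's two-case argument ($A_{i,\bullet}=A_{j,\bullet}$ with $b_i\ne b_j$, versus $A_{i,\bullet}\ne A_{j,\bullet}$), and is sound.

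The obstacle you flag in the fourth bullet is genuine, and the paper's own proof does not actually resolve it. Your two concerns are exactly the right ones. First, $m-k$ affinely independent points of $P_i\cap H$ span only an $(m-k-1)$-dimensional simplex $\sigma$, whereas $\mathrm{aff}(H)$ has dimension $m-k$; so even if one shows $g:=\ell_i-\ell_j$ vanishes on $\mathrm{aff}(\sigma)$, that is a proper affine subspace of $\mathrm{aff}(H)$ and does not force $g\equiv 0$ on $H$. Second, knowing only $y\in\sigma$ does not place $y$ in $relint(\sigma)$, so $g\ge 0$ on $\sigma$ together with $g(y)=0$ does not by itself force $g$ to vanish on $\mathrm{aff}(\sigma)$ either. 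The paper's proof of the base case asserts that ``since the $x_i$ are affinely independent, it follows that $P_i^w$ is full-dimensional in $\mathscr{P}(A,b)^w$,'' which is an off-by-one (one needs $m-k+1$ affinely independent points for full dimensionality), and then invokes the third bullet without checking the $relint(P_i)\cap P_j\neq\emptyset$ hypothesis (nothing given places $y$ in $relint(P_i^w)$). In short, your proposal matches the paper on bullets one through three; on bullet four you have correctly identified that something extra is needed --- most plausibly the hypotheses should read $m-k+1$ affinely independent points with $y\in relint(Conv(x_1,\ldots,x_{m-k+1}))$ --- and your hesitation is well-founded, since the paper's argument has the same untreated gaps.
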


\begin{proof}
The first bullet point follows from two facts: affine transformations restricted to affine subspaces are themselves affine transformations, and compositions of affine transformations are themselves affine transformations. 

To be rigorous, define the affine transformation $g:\mathbb{R}^m \rightarrow \mathbb{R}^m$ to be $g(x) = Ax + b$. Let $g' = g \restriction_{(\Delta^m)^x}$ be the restriction of $g$ to the affine subspace $(\Delta^m)^x \subset \mathbb{R}^m$ of codimension 1. As we mentioned before, $g'$ is itself an affine transformation. 

Now let us recall that $f_x$ is the canonical affine transformation that maps $(\Delta^m)^x$ to $\Delta^{m-1}$. It is straightforward to see that $f_x^{-1}$ exists ($\Delta^{m-1}$ and $(\Delta^{m})^x$ are clearly isomorphic) and is itself an affine transformation. Consequently $g' \circ  f_x^{-1} : \Delta^{m-1} \rightarrow \mathbb{R}^n$ is itself an affine transformation, which can be identified with a matrix $A^x$ and vector $b^x$ such that $\left( g' \circ  f_x^{-1} \right) z = A^x z + b^x$. It is straightforward to see that $(A^x,b^x)$ are such that $\mathscr{P}(A^x,b^x) = f_x(\mathscr{P}(A,b)^x)$ as desired.  

As for the second bullet point, let $F \in \partial_{m-1}(\Delta^m)$ be an arbitrary face of $\Delta^m$ of codimension 1. In addition, we use the notation $\phi_F$ as before to denote the canonical isomorphism from $F$ to $\Delta^{m-1}$. $\phi_F$ is itself an affine transformation, hence we can use identical argumentation from before to show that by restricting the original affine functions arising from $(A,b)$ to $F \in \partial_{m-1}(\Delta^m)$, we can find equivalent affine functions that render $\phi_F(\mathscr{P}_F)$ an upper-envelope polytope partition. For arbitrary $0 \leq k \leq m-1$, we can use the previous statement inductively to show that for any $F \in \partial_k(\Delta^m)$, $\mathscr{P}_F$ is a $(k,n)$-polytope partition. This concludes the proof that $\mathscr{P}$ is a proper polytope partition.

As for the third bullet point, the fact that $A_{i,\bullet} = A_{j,\bullet}$ and $b_i = b_j$ implies $P_i = P_j$ is trivial. Let us focus on the case when $P_i$ is of full affine dimension and $relint(P_i) \cap P_j \neq \emptyset$. For the sake of contradiction, let us suppose that $A_{i,\bullet} = A_{j,\bullet}$ and $b_i \neq b_j$. If this holds, then $(Ay + b)_i \neq (Ay + b)_j$ for all $y$, which contradicts our assumption that $relint(P_i) \cap P_j \neq \emptyset$. Let us therefore suppose that $A_{i,\bullet} \neq A_{j,\bullet}$. Let $H$ be the set of $y$ such that $(Ay + b)_i = (Ay + b)_j$. Since $A_{i,\bullet} \neq A_{j,\bullet}$, $H$ has codimension of at least 1. By assumption, there exists a $z \in relint(P_i) \cap P_j$. It must be the case that $z \in H$ as well. However, using the fact that $z \in relint(P_i)$ and that $P_i$ is of full affine dimension, for some $\epsilon> 0$, the $B_\epsilon(z) \subsetneq P_i$. However, since $z \in H$, which is of codimension 1, then half of $B_\epsilon(z)$ must not belong to $P_i$, which is a contradiction.  

The final bullet point follows from putting the first and third bullet points together and inducting on $k$. The base case follows from the fact that for $w \in [0,1)$, we know that $\mathscr{P}^w$ is itself a scaled upper envelope polytope partition (from the first bullet point). Now suppose that $x_1,...,x_{m-1} \in P_i$ are affinely independent in $\mathscr{P}(A,b)^w$. Furthermore suppose that $Conv(x_1,...,x_{m-1})$ contains a point $y \in P_j$. Since the $x_i$ are affinely independent, it follows that $P_i^w$ is full-dimensional in $\mathscr{P}(A,b)^w$, hence we can apply the third bullet point to show that $P_i^w$ and $P_j^w$ coincide in $\mathscr{P}(A,b)^w$ (which is in fact what we desired). 

Let us suppose that the claim holds for a given $k - 1 < m-1 $ and that we are given $a_1,...,a_k$. From the first bullet point, $\mathscr{P}(A,b)^{a_1}$ is a scaled lower-dimensional upper envelope polytope partition. Let us define $H = \{ (z_1,...,z_m) \in \Delta^m \ | \ z_i = a_i, i=1,..,k\}$ and $H_2 = \{ (z_1,...,z_m) \in \Delta^m \ | \ z_i = a_i, i=2,..,k\}$. It follows that $\mathscr{P}(A,b) \cap H = \mathscr{P}(A,b)^{a_1}\cap H_2$, and in the later we can use the inductive assumption (since $(m-1) - (k-1) = m-k$) to show that if $x_1,...,x_{m-k}$ are affinely independent points in $P_i^{a_1} \cap H_2 = P_i \cap H$, and $y \in Conv(x_1,...,x_{m-k})$ belongs to $P_j$, then $P_i^{a_1}$ and $P_j^{a_1}$ coincide in $H_2$, which is the same as saying $P_i$ and $P_j$ coincide in $H$ as desired.    
\end{proof}

\subsection{Adversarial CD-GBS}

Suppose that $\mathscr{P}$ is an UEPP. Since it is also a proper $(m,n)$-polytope partition, it inherits all the properties from before. Along with Lemma \ref{lemma:structure-uepp} we have the necessary tools to show that Algorithm \ref{alg:GBS-adversarial} is a query efficient way of computing $\epsilon$-close labellings of $\mathscr{P}$ with an adversarial query oracle. In the specification of CD-GBS, we use identical terms and notation from Algorithm \ref{alg:GBS}. 

\begin{algorithm}[h] 
\caption{Adversarial CD-GBS$(m,n,\epsilon, Q_A)$} 
\label{alg:GBS-adversarial} 
\begin{algorithmic} 
	\INPUT $m \geq 0, \ n,\epsilon > 0$, query access to oracle $Q_A: \Delta^m \rightarrow [n]$.
	\REQUIRE Recursive calls to CD-GBS$ \left( m-1,n,\frac{\epsilon^2}{85(1-x) n m^{5/2}}, Q_A \circ f_x^{-1} \right)$.
	\OUTPUT $\epsilon$-close labelling of $\mathscr{P}$.
	\IF{$m=0$}
	    \STATE Query $Q_A(0)$
	\ELSE
    \STATE $\hat{\mathscr{P}}^0 \leftarrow f_0^{-1} \left( \text{CD-GBS} \left( m-1,n,\frac{\epsilon^2}{85 n m^{5/2}}, Q_A \circ f_0^{-1} \right) \right)$
	\STATE$\hat{\mathscr{P}}^1 \leftarrow Q(\vec{e}_1)$.
	
	\FOR{$k = 1$ to $\lceil \log(2/\epsilon) \rceil$}
	    \FOR{$x \in \mathscr{D}^k$}
	        \IF{$I_x^k$ is uncovered}
	            \STATE $t \leftarrow midpoint(I_x)$
	            \STATE $\hat{\mathscr{P}}^t \leftarrow f_t^{-1} \left( \text{CD-GBS} \left( m-1,n,\frac{\epsilon^2}{85(1-t) n m^{5/2}}, Q_A \circ f_t^{-1} \right) \right)$
	        \ENDIF
	        \STATE Recompute $\hat{\mathscr{P}}$ by taking convex hulls of labels
	        \WHILE{$\exists i,j \in [n], \ z \in \Delta^m$ such that $dim(\hat{P}_i) = m$ and $z \in int(\hat{P}_i)$}
		        \STATE Merge label $i$ with label $j$
		        \STATE Recompute $\hat{\mathscr{P}}$ by taking convex hulls of labels
		    \ENDWHILE
		    \IF{$\hat{\mathscr{P}}$ is an $\epsilon$-close labelling}
		        \STATE Break
		    \ENDIF
	    \ENDFOR
	\ENDFOR
    \ENDIF
	\RETURN $\hat{\mathscr{P}}$    
\end{algorithmic}
\end{algorithm}

\begin{theorem}\label{thm:adv-CDGBS}
If CD-GBS is given access to an adversarial query oracle $Q_A$ of an $(m,n)$-polytope partition based on a UEPP, it computes an $\epsilon$-close labelling of $\mathscr{P}$ using at most\newline
$\left( \prod_{i=1}^m \left( \binom{n+i}{i} + 2n \right) \right) 2^{2m^2}\log^m \left( \frac{170 n m^{5/2}}{\epsilon}  \right)$ membership queries. For constant $m$ this constitutes $O(n^{m^2}\log^m \left( \frac{n}{\epsilon}  \right) ) = poly(n,\log \left( \frac{1}{\epsilon} \right))$ queries.
\end{theorem}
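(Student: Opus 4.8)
The plan is to re-run the proof of Theorem~\ref{thm:genbinsrch-correct} essentially line for line, inducting on $m$, and to use the three structural guarantees of Lemma~\ref{lemma:structure-uepp} to show that the two features the adversarial algorithm omits (the clean-up loop over uncovered critical coordinates and the halt-on-budget-exceeded) are genuinely unnecessary, while the one feature it adds (the merge loop) is harmless. The base cases $m=0$ (a single query) and $m=1$ (binary search, at most $n\log(2/\epsilon)$ queries) are as before; in the $m=1$ case the merge loop only ever collapses labels of coinciding sub-intervals. For the inductive step, the first observation is that by the first bullet of Lemma~\ref{lemma:structure-uepp}, for every $t\in[0,1)$ the rescaled cross-section $f_t(\mathscr{P}^t)$ is again a UEPP---hence a genuine $(m-1,n)$-polytope partition, so no cross-section is ever degenerate---and $Q_A\circ f_t^{-1}$ is a valid adversarial oracle for it; by the inductive hypothesis each recursive call therefore returns a close labelling of that cross-section with the stated guarantee. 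This is exactly why Algorithm~\ref{alg:GBS-adversarial} can drop the degenerate-cross-section handling of Algorithm~\ref{alg:GBS}.

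Next I would justify the merge loop. Whenever the algorithm merges label $i$ into label $j$, it is because $\dim(\hat P_i)=m$ and some $z\in\mathrm{int}(\hat P_i)\cap\hat P_j$ with $i\neq j$. Since $\hat P_i\subseteq P_i$ is full-dimensional, $P_i$ is full-dimensional, and $z\in\mathrm{int}(\hat P_i)\subseteq\mathrm{relint}(P_i)$ while $z\in\hat P_j\subseteq P_j$; thus $\mathrm{relint}(P_i)\cap P_j\neq\emptyset$, and the third bullet of Lemma~\ref{lemma:structure-uepp} forces $P_i=P_j$. Hence the merged empirical polytope $\mathrm{Conv}(\hat P_i\cup\hat P_j)\subseteq\mathrm{Conv}(P_i\cup P_j)=P_i$ is still contained in a true region, so the maintained collection is always a valid empirical labelling of $\mathscr{P}$; each merge strictly decreases the number of labels, so the loop terminates (after $<n$ iterations) and exits with a labelling whose full-dimensional empirical polytopes have pairwise disjoint interiors. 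Crucially, these are precisely the identifications needed to make the algorithm's state agree with the \emph{quotient} UEPP $\bar{\mathscr{P}}=\mathscr{P}(A',b')$ obtained by deleting duplicate rows of $(A,b)$: $\bar{\mathscr{P}}$ has distinct regions and $r\le n$ of them, and any query bound for $\bar{\mathscr{P}}$ that is monotone in the number of regions is dominated by the claimed bound for $n$ regions. In short, the merge loop lets the algorithm run, at every level of the recursion, as though it were operating on a UEPP with distinct regions.

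The heart of the argument is then to recover the interval-counting bound of Corollary~\ref{cor:width-bound} in this setting, namely that at each level $k$ at most $2\big(\binom{n+m}{m}+2n\big)$ of the dyadic intervals $I_x^k$ are uncovered. Inspecting the proof of Lemma~\ref{lemma:crux-for-gbs}, the lexicographic oracle is used only in statement~(3): to conclude that a queried point in the relative interior of a full-dimensional (``$\alpha$-significant'') region $P_j^x$ must carry the label $j$. For a UEPP with distinct regions this holds for \emph{any} valid membership oracle, by the third bullet of Lemma~\ref{lemma:structure-uepp}: such a point lies only in $P_j$ (no other region equals it), so every oracle---adversarial included---is forced to answer $j$. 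Thus Lemma~\ref{lemma:crux-for-gbs} and Corollary~\ref{cor:width-bound} hold verbatim for $\bar{\mathscr{P}}$ together with its induced adversarial oracle, giving at most $2\big(\binom{r+m}{m}+2r\big)\le 2\big(\binom{n+m}{m}+2n\big)$ uncovered intervals per level, so the budget is never exceeded. Correctness now follows exactly as in Theorem~\ref{thm:genbinsrch-correct}: at level $k=\lceil\log(2/\epsilon)\rceil$ consecutive dyadic points are within $\epsilon/2$, the (non-degenerate) cross-sections are learned to $\tfrac{\epsilon^2}{85(1-t)nm^{5/2}}$-closeness by the recursive calls, and taking convex combinations plus the triangle inequality yields an $\epsilon$-close labelling of $\mathscr{P}$ with no uncovered critical coordinates left over. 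Finally the query recursion is the one solved in Theorem~\ref{thm:genbinsrch-correct}---indeed with the $+2|C^\alpha_{\mathscr{P}}|$ clean-up term removed---so it telescopes to the same bound $\big(\prod_{i=1}^m(\binom{n+i}{i}+2n)\big)2^{2m^2}\log^m\!\big(\tfrac{170nm^{5/2}}{\epsilon}\big)$, which for constant $m$ is $O\big(n^{m^2}\log^m(n/\epsilon)\big)=\poly(n,\log(1/\epsilon))$.

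I expect the main obstacle to be making precise the ``runs as if on the quotient'' claim of the second and third paragraphs: one must check that the algorithm's partially-merged state at any moment during the binary search is consistent with $\bar{\mathscr{P}}$, and---the genuinely delicate point---that a label of a region $P_j=P_\ell$ that the adversary simply never returns contributes an \emph{empty} empirical polytope and therefore cannot spoil any net, rather than leaving statement~(3) of Lemma~\ref{lemma:crux-for-gbs} literally unsatisfiable for that label. Everything else---the non-degeneracy of cross-sections, the soundness of each individual merge, and the arithmetic of the telescoped query recursion---is routine given Lemma~\ref{lemma:structure-uepp} and the already-established Theorem~\ref{thm:genbinsrch-correct}.
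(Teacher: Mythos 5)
Your overall structure matches the paper's (induct on $m$; use Lemma~\ref{lemma:structure-uepp} to rule out degenerate cross-sections and to justify merges; bound the uncovered intervals via Corollary~\ref{cor:width-bound}; unroll the recursion as in Theorem~\ref{thm:genbinsrch-correct}), and your soundness argument for the merge step is exactly right. However, the step you flag at the end as ``the genuinely delicate point'' is indeed a real gap, and your quotient-UEPP device does not close it. The issue is that the algorithm merges \emph{lazily}: two labels $i,j$ with $P_i=P_j$ in the true UEPP are only merged once a witness $z\in\mathrm{int}(\hat P_i)\cap\hat P_j$ appears, and until then the algorithm maintains $\hat P_i$ and $\hat P_j$ as separate empirical polytopes. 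Since $\hat P_i\cup\hat P_j\subsetneq \mathrm{Conv}(\hat P_i\cup\hat P_j)$ in general, the un-merged state can declare an interval $I^k_x$ uncovered even though the merged state (i.e.\ the state one would reach running on the quotient UEPP with the induced oracle) would declare it covered. Therefore Corollary~\ref{cor:width-bound}, applied to the quotient UEPP, bounds the uncovered intervals of the \emph{merged} state, not the algorithm's actual state, and your conclusion ``at most $2\big(\binom{r+m}{m}+2r\big)$ uncovered intervals per level'' is not licensed.

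The paper resolves this by constructing a \emph{different} consistent polytope partition $\bar{\mathscr{P}}$ that tracks the algorithm's (post-merge-loop but still possibly un-quotiented) state: since after the merge loop all pairs $\hat P_i,\hat P_j$ have non-full-dimensional intersection, one picks for each pair a separating hyperplane $H_{i,j}$ and sets $\bar P_i=\bigcap_j H_{i,j}^+$. This $\bar{\mathscr{P}}$ has at most $n$ regions, is consistent with every query answer seen so far, and is such that a lexicographic run of CD-GBS on $\bar{\mathscr{P}}$ with a suitable label ordering reproduces exactly the empirical labelling $\hat{\mathscr{P}}$ the adversarial algorithm holds. Applying Corollary~\ref{cor:width-bound} to \emph{this} $\bar{\mathscr{P}}$ (not the quotient) then yields the required bound of $2\big(\binom{n+m}{m}+2n\big)$ uncovered intervals per level, and the rest of the proof goes through as you describe. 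So you should replace your quotient UEPP with the paper's per-step consistent partition built from separating hyperplanes; with that change, your argument is correct and essentially identical to the paper's.
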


\begin{proof}
As in the proof of correctness of CD-GBS, we begin by noting that when $m=1$ the algorithm runs identical to binary search. We thus focus on the case where $m > 1$.   

The key observation of the proof of correctness is the following: At any given $k$ in the first for loop there are at most $2|C^\alpha_{\mathscr{P}}|$ values of $x$ such that $I^k_x$ is uncovered. Let us consider the empirical polytope $\hat{\mathscr{P}}$ that has been constructed at the time of the execution of the $k$-th loop. Due to the fact that we have merged any labels from the execution of the loop at value $k-1$, it follows that for all $i,j$, $\hat{P}_i \cap \hat{P}_j$ is not of full affine dimension. In turn this means that there exists a hyperplane $H_{i,j}$ that separates the interiors of $\hat{P}_i$ and $\hat{P}_j$. Furthermore, denote $H_{i,j}^+$ as the halfspace defined by $H_{i,j}$ in which $\hat{P}_i$ is contained. This means that in turn we can define $\bar{P}_i = \cap_{j} H_{i,j}^+$ so that $\hat{P}_i \subset \bar{P}_i$. Furthermore, it is straightforward to see that we can define $\bar{\mathscr{P}} = \{ \bar{P}_i\}$ as a valid $(m,n)$-polytope partition that is consistent with $\hat{\mathscr{P}}$. Since $\bar{\mathscr{P}}$ is consistent with our current observations from $Q_A$, we can actually simulate CD-GBS on $\bar{P}_i$ for the first $k-1$ iterations of the algorithm (ordering polytopes accordingly to simulate a lexicographic query oracle). The empirical polytope returned when doing so will in fact be $\hat{\mathscr{P}}$, and thus we can apply corollary \ref{cor:width-bound} to tell us that the number of uncovered $I^k_x$ is in fact bounded by $2|C^\alpha_{\mathscr{P}}|$. 

Returning to our proof of correctness of the algorithm. It is not hard to see that upon termination it is correct if we assume that calling CD-GBS as a subroutine works correctly as per the inductive assumption and crucially the fact that from Lemma \ref{lemma:structure-uepp} each cross-section of $\mathscr{P}$ is non-degenerate. Therefore we focus on the query cost of the algorithm. At each $k =1$ to $\lceil \log(2/\epsilon) \rceil$, from our previous result there can only be at most $2|C^\alpha_{\mathscr{P}}|$ uncovered $I^k_x$, which are precisely the $I^k_x$ that result in queries. By simple multiplication we thus get that the number of cross-section queries is at most $2 \lceil \log(2/\epsilon) \rceil |C_{\mathscr{P}}^\alpha|$, hence we get the following recursion for bounding the query cost of adversarial CD-GBS:
$$
T(m,n,\epsilon) \leq 2\left(\binom{n+m}{m} + 2n \right) \log \left( \frac{2}{\epsilon} \right) T \left(m-1, n, \frac{\epsilon^2}{85 n m^{5/2}} \right)
$$
with base case $T(1,n,\epsilon) \leq n \log \left( \frac{2}{\epsilon} \right)$. If we unpack the recursion in the same way as Theorem \ref{thm:genbinsrch-correct}, we get the desired result.
\end{proof}

\subsection{Adversarial CR-GBS}

In this section we formalize an adversarial variant of CR-GBS. We note that most of the notation is identical to lexicographic CR-GBS. 

\begin{algorithm} 
\caption{CR-GBS$(m,n,\epsilon, \mathscr{P})$} 
\label{alg:FGBS-adversarial} 
\begin{algorithmic} 
	\INPUT $m,n,\epsilon > 0$, query access to $Q_A$ for $(m,n)$-polytope partition $\mathscr{P}$.
	\OUTPUT $\epsilon$-close labelling of $\mathscr{P}$.
		\STATE $k \leftarrow \binom{n} {2}$
	\FOR{$F \in \partial( \Delta^m)^k$}
        \STATE $\hat{\mathscr{P}}^F \leftarrow \phi_F^{-1} \left( \text{CD-GBS}\left(k, n, \frac{3\epsilon}{100n^2\sqrt{k+1}(m+1)^{5/2}}, Q \circ \phi^{-1}_F \right) \right)$.
	\ENDFOR
	\STATE $\hat{\mathscr{P}} \leftarrow Conv_F(\hat{\mathscr{P}}_F)$
	\WHILE{$\exists i,j \in [n], \ z \in \Delta^m$ such that $dim(\hat{P}_i) = m$ and $z \in int(\hat{P}_i)$}
		    \STATE Merge label $i$ with label $j$
		    \STATE Recompute convex hulls of labels
	\ENDWHILE
	\RETURN $\hat{Q}$  
\end{algorithmic}
\end{algorithm}

\begin{theorem}\label{thm:full-gbs-guarantee-adversarial}
Let $\mathscr{P}$ be an $(m,n)$-polytope partition where $n$ is constant. Furthermore, let $k = \binom{n}{2}$. CR-GBS computes an $\epsilon$-close labelling of $\mathscr{P}$ and uses $O \left( m^k \log^k \left( \frac{m}{\epsilon} \right) \right) = poly(m,\log \left( \frac{1}{\epsilon} \right) )$ queries.
\end{theorem}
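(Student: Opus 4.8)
The plan is to run the proof of the lexicographic variant (Theorem~\ref{thm:full-gbs-guarantee}) almost verbatim, with two changes: the per-face subroutine guarantee is now supplied by the adversarial Theorem~\ref{thm:adv-CDGBS} in place of Theorem~\ref{thm:genbinsrch-correct}, and the terminating merge loop at the end of Algorithm~\ref{alg:FGBS-adversarial} must be shown to be harmless. Throughout we take $\mathscr{P}$ to be an upper envelope polytope partition, as the rest of this section does; in particular Lemma~\ref{lemma:structure-uepp} tells us that $\mathscr{P}$ is proper and that $\phi_F(\mathscr{P}_F)$ is again a UEPP for every face $F\in\partial_k(\Delta^m)$. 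Since $n$ is constant so is $k=\binom{n}{2}$, and if $m\le k$ the ambient simplex has constant dimension and we may instead run Adversarial CD-GBS directly on $\Delta^m$ via Theorem~\ref{thm:adv-CDGBS}; so assume $m>k$.

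First I would establish the per-face part. Fix $F\in\partial_k(\Delta^m)$. Since $\phi_F(\mathscr{P}_F)$ is a UEPP, Theorem~\ref{thm:adv-CDGBS} guarantees that Adversarial CD-GBS, called with approximation parameter $\epsilon'=\frac{3\epsilon}{100n^2\sqrt{k+1}(m+1)^{5/2}}$, returns an $\epsilon'$-close labelling of $\phi_F(\mathscr{P}_F)$; pulling this back through $\phi_F$ and absorbing the $\sqrt{k+1}$-Lipschitz factor from the lemma of Section~\ref{subsec:face-to-simplex} (relevant only when $F$ is not axis-aligned) yields a $\sqrt{k+1}\,\epsilon'=\frac{3\epsilon}{100n^2(m+1)^{5/2}}$-close labelling of $\mathscr{P}_F$. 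Moreover CD-GBS queries every vertex of the simplex it is run on (it queries $\vec{e}_1$ and recurses into the $0$-cross-section down to the base case $Q(0)$), so through $\phi_F^{-1}$ every vertex of $F$ receives a valid label. Taking $\gamma=\frac{3\epsilon}{40n^2(m+1)^{5/2}}$, so that $\frac{2\gamma}{5}$ equals the approximation just obtained, Corollary~\ref{ref:corollary-for-gbs-correct} shows that $Conv_F(\hat{\mathscr{P}}_F)$ is already an $\epsilon$-close labelling of $\mathscr{P}$.

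Next I would justify the merge loop. Suppose it finds $i\ne j$ and $z\in\Delta^m$ with $dim(\hat{P}_i)=m$ and $z\in int(\hat{P}_i)\cap\hat{P}_j$. Since $\hat{P}_i\subseteq P_i$ and $\hat{P}_j\subseteq P_j$, this forces $z\in relint(P_i)\cap P_j$ with $P_i$ of full affine dimension, so the third bullet of Lemma~\ref{lemma:structure-uepp} gives $P_i=P_j$. Hence the merge is sound: $Conv(\hat{P}_i,\hat{P}_j)\subseteq P_i=P_j$, and the union of empirical polytopes only grows, so the $\epsilon$-net property survives. Each merge strictly reduces the number of distinct non-empty empirical polytopes, so the loop halts after at most $n-1$ iterations, leaving a consistent $\epsilon$-close labelling.

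Finally the query bound. There are $\binom{m}{k}=O(m^k)$ faces in $\partial_k(\Delta^m)$, and on each one Adversarial CD-GBS runs in the constant dimension $k$ with $\log(1/\epsilon')=O(\log(m/\epsilon))$; by Theorem~\ref{thm:adv-CDGBS} this costs $\left(\prod_{i=1}^{k}\left(\binom{n+i}{i}+2n\right)\right)2^{2k^2}\log^{k}\!\left(\frac{170nk^{5/2}}{\epsilon'}\right)=O\!\left(\log^{k}(m/\epsilon)\right)$ queries since $n,k$ are constant, and the merge loop uses no queries. Multiplying gives $O\!\left(m^k\log^k(m/\epsilon)\right)=\poly(m,\log(1/\epsilon))$. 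I expect the only genuinely non-routine point to be the merge loop: one must be sure that every interior overlap an adversarial oracle can produce really witnesses coinciding polytopes — which is exactly what the third bullet of Lemma~\ref{lemma:structure-uepp} provides — so that no merge damages the labelling and the process terminates; everything else is the bookkeeping already carried out for Theorem~\ref{thm:full-gbs-guarantee}.
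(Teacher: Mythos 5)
Your route differs from the paper's. The paper dispatches this theorem in a few lines by observing (as in the proof of Theorem~\ref{thm:adv-CDGBS}) that the responses of $Q_A$ seen by the algorithm are consistent with the lexicographic oracle of some polytope partition $\bar{\mathscr{P}}$, so that the adversarial run of CR-GBS is indistinguishable from a lexicographic run on $\bar{\mathscr{P}}$; the $\epsilon$-closeness of $\hat{\mathscr{P}}$ for $\bar{\mathscr{P}}$ then transfers immediately to $\mathscr{P}$. You instead try to re-derive correctness directly against $\mathscr{P}$ by invoking Corollary~\ref{ref:corollary-for-gbs-correct} on the $Q_A$-computed per-face labellings, and then argue separately that the merge loop is sound.

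The problem is the direct invocation of Corollary~\ref{ref:corollary-for-gbs-correct}. That corollary, and Lemma~\ref{lemma:face-to-thickness} on which it rests, are stated and proved for labellings ``computed with $Q_\ell$''; the lemma's proof explicitly finds points $z_1,\dots,z_s$ in $B_\gamma(v)\cap P_i\cap F$ that \emph{all} receive the label $i$ ``under $Q_\ell$'', and it is because $v\in Conv(z_1,\dots,z_s)$ with a single common label that $v$ is guaranteed to be labelled. Under $Q_A$, if $P_i$ coincides with another $P_j$, the adversary may split the $z_r$'s between labels $i$ and $j$ in such a way that $v$ lies in neither $Conv(\{z_r: Q_A(z_r)=i\})$ nor $Conv(\{z_r: Q_A(z_r)=j\})$, and the merge loop need not fire because the adversary can keep $\hat{P}_i$ and $\hat{P}_j$ with disjoint interiors (effectively presenting a different, but consistent, polytope partition in which $P_i$ has been cut in two). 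So the conclusion ``$Conv_F(\hat{\mathscr{P}}_F)$ is already an $\epsilon$-close labelling of $\mathscr{P}$'' does not follow from the corollary as applied to $\mathscr{P}$; it would follow if you first identified the consistent $\bar{\mathscr{P}}$ and applied the corollary to \emph{that}, which is exactly what the paper's consistency argument does in one step. Your observation that every merge the loop performs is sound (via the third bullet of Lemma~\ref{lemma:structure-uepp}) and terminates is correct and fine, and your per-face approximation parameter and the $O\bigl(m^k\log^k(m/\epsilon)\bigr)$ query count are accurate; it is only the correctness-before-merge claim that needs to be routed through the consistent-partition argument rather than a literal reuse of Corollary~\ref{ref:corollary-for-gbs-correct}.
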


\begin{proof}
As in the proof of Theorem \ref{thm:adv-CDGBS}, we know that there exists a polytope partition $\bar{\mathscr{P}}$ that is consistent with $\hat{\mathscr{P}}$. Once again, we notice that this invocation of adversarial CR-GBS is identical to running lexicographic CR-GBS on $\bar{\mathscr{P}}$, which would in turn return $\hat{\mathscr{P}}$ an $\epsilon$-close labelling of $\bar{\mathscr{P}}$. However, it is straightforward to see from the definition of $\epsilon$-close labellings that this also makes $\hat{\mathscr{P}}$ an $\epsilon$-close labelling of $\mathscr{P}$ as desired. Finally, the query usage of adversarial CR-GBS is identical to lexicographic CR-GBS, hence the rest of the theorem follows.
\end{proof}

\section{Games and Best Responses}\label{sec:game-theory-background}

Now that we have established query-efficient algorithms for learning $\epsilon$-close labellings of polytope partitions, we turn our attention to game theory to prove the connection between learning these labellings and computing approximate well-supported Nash equilibria. 

Suppose that $G = (A,B)$ is an $m \times n$ bi-matrix game where  $A,B \in [0,1]^{m \times n}$ are the row player and column player payoff matrices respectively with payoffs normalised to $[0,1]$. We wish to identify an $\epsilon$-well-supported Nash equilibrium ($\epsilon$-WSNE) using only limited information on $G$. The set of row player pure strategies is $[m] = \{1,\ldots,m\}$ and similarly that of the column player pure strategies is $[n] = \{1,\ldots,n\}$. Furthermore, the set of all row player mixed strategies can be associated with the axis-aligned $(m-1)$-simplex: $\Delta^{m-1}= \{ \vec{x} \in \mathbb{R}^{m-1} | \sum_{i=1}^{m-1} x_i \leq 1 \text{ and } x_i \geq 0\}$. Similarly, column player mixed strategies are identified with $\Delta^{n-1}$. 

\begin{definition}[Utility Functions]
Suppose that $u \in \Delta^{m-1}$, and $v\in \Delta^{n-1}$ are row and column player mixed strategies. Let $u' = (1 - \sum u_i, u_1,...,u_{n-1})$ and $v' = (1 - \sum v_i, v_1,...,v_{n-1})$. Then for strategy profile $(u,v)$, row player utility is $U_r(u,v) = u'^TAv'$ and column player utility is $U_c(u,v) = u'^TBv'$.
\end{definition}

It will also be useful to have shorthand for the following functions: $U_r^i(y) = U_r(e_i,y)$ as the row player utility for playing pure strategy $i$, and $E_R(y) = \max_{i \in [m]} U_r^i(y)$ as the maximal utility the row player can achieve against mixed strategies. In an identical fashion we can define $U_c^j$ and $E_C$ as the column player utility in playing strategy $j$ and the maximal column player utility. With this notation in hand, we can define the best response oracles algorithms will have access to when computing approximate Nash equilibria.

\begin{definition}[Best Response Query Oracles]
Any bimatrix game has the following best response query oracles:

\begin{itemize}
    \item Strong query oracles: for the column player, $BR^C_s(u) = \{ j \in [n] \ | \ U^j_c(u) = E_C(u) \}$ and for the row player, $BR^R_s(v) = \{ i \in [m] \ | \ U^i_r(v) = E_R(v) \}$
    \item Lexicographic query oracles: for the column player, $BR^C_\ell(u) = \min_{j \in [n]} j \in BR^C_s(u)$ and for the row player, $BR^R_\ell(v) = \min_{ i \in [m]} i \in B^R_s(v)$
    \item Adversarial query oracles: for the column player, any function $BR^C_A$ such that $BR^C_A(u) \in BR^C_s$ and for the row player, any function such that $BR^R_A(v) \in BR^R_s(v)$
\end{itemize}
\end{definition}

For a given mixed strategy, $u \in \Delta^{m-1}$, we say the support of $u$ is the set of pure strategies that are played in $u$ with non-zero probability. It will be useful to formulate this as a function in order to define Nash equilibria combinatorially in the following section.

\begin{definition}[Support Functions] 
Let $S^R : \Delta^{m-1} \rightarrow \mathcal{P}([m])$ be the function which returns the support of a row player mixed strategy. Similarly let $S^C : \Delta^n \rightarrow \mathcal(\mathcal{P}[n])$
return the support of column player mixed strategies.
\end{definition}

\section{Nash Equilibria and Lipschitz Continuity of Utility Functions}\label{sec:nash-lipschitz}

\begin{definition}[Nash Equilibrium]\label{def:nash}
Suppose that $u$ and $v$ are row and column player strategies respectively. We say that the pair $(u,v)$ is a {\em Nash Equilibrium (NE)} if for all $u' \in \Delta^{m-1}$ and $v' \in \Delta^{n-1}$: $U_r(u,v) \geq U_r(u',v)$ and $U_c(u,v) \geq U_c(u,v')$. 
\end{definition}

Though the definition of a Nash equilibrium involves utility values of both players at their mixed strategy profiles, there is an equivalent combinatorial formulation of the above definition:

\begin{proposition}\label{prop:comb-NE}
$(u,v)$ is a NE if and only if $S^R(u) \subseteq BR^R_s(v)$ and $S^C(v) \subseteq BR^C_s(u)$. In other words $u$ is supported by best responses to $v$ and vice versa.
\end{proposition}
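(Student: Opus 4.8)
The plan is to prove both directions by relating the utility condition in Definition \ref{def:nash} to the combinatorial support condition via the standard linearity-of-expected-payoff argument. The basic fact I would use throughout is that for a fixed column strategy $v$, the row player's utility $U_r(u,v) = \sum_{i\in[m]} u_i U_r^i(v)$ is a convex combination of the pure-strategy payoffs $U_r^i(v)$ with weights $u_i$, so $U_r(u,v) \le E_R(v)$ always, with equality if and only if $u$ places all its mass on pure strategies achieving the maximum, i.e. $S^R(u)\subseteq BR^R_s(v)$. The symmetric statement holds for the column player with $v$ and $u$ interchanged.

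For the forward direction, suppose $(u,v)$ is a NE. Taking $u' = e_i$ for each $i$ in the first inequality of Definition \ref{def:nash} gives $U_r(u,v) \ge U_r^i(v)$ for all $i\in[m]$, hence $U_r(u,v) \ge E_R(v)$; combined with $U_r(u,v)\le E_R(v)$ from the convex-combination bound, we get equality, and by the characterization above this forces $S^R(u)\subseteq BR^R_s(v)$. The same argument applied to the column player's inequality with $v' = e_j$ yields $S^C(v)\subseteq BR^C_s(u)$.

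For the converse, suppose $S^R(u)\subseteq BR^R_s(v)$ and $S^C(v)\subseteq BR^C_s(u)$. Then $U_r(u,v) = \sum_{i\in S^R(u)} u_i U_r^i(v) = \sum_{i\in S^R(u)} u_i E_R(v) = E_R(v)$, since every $i$ in the support is a best response. For any $u'\in\Delta^{m-1}$ we have $U_r(u',v) = \sum_i u'_i U_r^i(v) \le E_R(v) = U_r(u,v)$, which is the first NE inequality; the second follows symmetrically using $S^C(v)\subseteq BR^C_s(u)$. Hence $(u,v)$ is a NE.

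I do not expect any genuine obstacle here: the only subtlety is being careful that the embedding of $\Delta^{m-1}$ into the full probability simplex (the primed vectors in the utility definition) does not disturb the linearity argument, and that the support function is defined relative to that embedding — but since the utility is affine in each argument, expanding $U_r(u,v)$ in the pure-strategy basis is routine. The step requiring the most care to state cleanly is the equality characterization ``$U_r(u,v)=E_R(v)$ iff $S^R(u)\subseteq BR^R_s(v)$'', which is just the standard fact that a convex combination attains the maximum of its components exactly when it is supported on the argmax set.
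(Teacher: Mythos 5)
Your proof is correct and is the standard argument; the paper itself states Proposition~\ref{prop:comb-NE} without proof, treating it as the classical support characterisation of Nash equilibria, so there is no proof in the paper to compare against. Your three-step structure (linearity gives $U_r(u,v)\le E_R(v)$ with equality iff the support lies in the argmax set; NE plus pure deviations $u'=e_i$ forces equality; conversely the support condition forces $U_r(u,v)=E_R(v)\ge U_r(u',v)$) is exactly the expected argument, and your caveat about the $\Delta^{m-1}$ embedding is handled correctly since expanding in the pure-strategy basis commutes with the affine reparametrisation used in the paper's definition of $U_r$.
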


When using best response queries only, one does not have access to utility values (as emphasised in the first definition), however this second equivalent definition of Nash equilibria can be verified by using best response oracles and support functions alone.\footnote{In general one is unable to recover utility values from best responses, even up to affine transformations.} 

We also note that for utility queries, the complexity of an exact NE is finite: we can exhaustively query the game. On the other hand, Corollary \ref{cor:exact-nash-inft} shows that this is not the case for best response queries. As a consequence, we relax the notion of a NE when using best response queries. the relaxation of NE which we study is that of approximate well-supported equilibria. Before proceeding with the formal definition, we say that a row player mixed strategy $u \in \Delta^m$ is an $\epsilon$ best response against a column player mixed strategy $v \in \Delta^n$ if $U_r(u,v) \geq U_r(u',v) - \epsilon$ for all $u' \in \Delta^m$. An identical notion holds for when a column player mixed strategy $v \in \Delta^n$ is an $\epsilon$ best response against a row player mixed strategy $u \in \Delta^m$. Intuitively, an $\epsilon$ best response is a mixed strategy where a player has only an $\epsilon$ incentive to deviate.

\begin{definition}[$\epsilon$-Well-Supported Nash Equilibrium]\label{def:eps-WSNE}
Suppose that $u$ and $v$ are row and column player strategies respectively. We say that the pair $(u,v)$ is an $\epsilon$-well-supported Nash equilibrium ($\epsilon$-WSNE) if and only if $u$ is supported by $\epsilon$-best responses to $v$ and vice versa.\footnote{Note that the conditions for an $\epsilon$-WSNE imply that no player has more than an $\epsilon$ incentive to deviate from the approximate equilibrium.}
\end{definition}

\begin{theorem}\label{thm:WSNE-LB}
The query complexity of computing an $\epsilon$-WSNE with best response queries is $\Omega( \log \left(\frac{1}{\epsilon} \right) )$, even when given access to strong query oracles.
\end{theorem}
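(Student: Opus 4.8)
The plan is to prove the bound already for $2\times 2$ games (hence for all larger games) via a one-parameter family of games together with an adversary argument. For each $\theta\in[\tfrac14,\tfrac34]$ consider the game $G_\theta=(A_\theta,B)$ with
\[
A_\theta=\begin{pmatrix} 1-\theta & 0\\[2pt] 0 & \theta\end{pmatrix},\qquad
B=\begin{pmatrix} 0 & \tfrac12\\[2pt] \tfrac12 & 0\end{pmatrix},
\]
all of whose entries lie in $[0,1]$. Writing a column strategy as the probability $q\in[0,1]$ it puts on pure strategy $1$, and a row strategy as $p\in[0,1]$ likewise, one computes $U_r^1(q)-U_r^2(q)=q-\theta$ and $U_c^1(p)-U_c^2(p)=\tfrac12-p$. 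Hence the row player's best response to $q$ is $\{1\}$ if $q>\theta$, $\{2\}$ if $q<\theta$, and $\{1,2\}$ if $q=\theta$; the column player's best response to $p$ is $\{1\}$ if $p<\tfrac12$, $\{2\}$ if $p>\tfrac12$, $\{1,2\}$ if $p=\tfrac12$, \emph{independently of $\theta$}. The key structural point is that a best-response query can reveal information about $\theta$ only by querying a column-player mixed strategy; a query of a row-player strategy returns the same answer in every $G_\theta$, even under the strong oracle.

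Next I would pin down the $\epsilon$-WSNE of $G_\theta$ for $\epsilon<\tfrac14$. If the column strategy in some $\epsilon$-WSNE were the pure strategy $e_1$ (i.e.\ $q=1$), the column-player condition would force $p\le\tfrac12+\epsilon$, while at $q=1$ the row-player utility gap is $1-\theta\ge\tfrac14>\epsilon$, so pure strategy $2$ is not an $\epsilon$-best response and the row player would be forced to $p=1>\tfrac12+\epsilon$, a contradiction; the case $q=e_2$ is symmetric. Therefore in every $\epsilon$-WSNE the column player mixes, so both of its pure strategies are $\epsilon$-best responses and $|q-\theta|\le\epsilon$. Consequently the $q$-coordinate of any valid output certifies $\theta$ up to $\pm\epsilon$, and the $\epsilon$-WSNE sets of $G_\theta$ and $G_{\theta'}$ are disjoint whenever $|\theta-\theta'|>2\epsilon$.

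Finally, run an adversary over a grid $\Theta\subset[\tfrac14,\tfrac34]$ of $\Theta(1/\epsilon)$ values that are pairwise more than $4\epsilon$ apart, maintaining the set $S\subseteq\Theta$ of parameters consistent with the answers so far. A query of a row strategy leaves $S$ unchanged. A query of a column strategy $q_0$ can be answered by $\{1\}$ or $\{2\}$ so as to retain at least $\lceil(|S|-1)/2\rceil$ of the parameters (a grid point equal to $q_0$, of which there is at most one, may be discarded). Hence each column query shrinks $|S|$ by at most roughly a factor of $2$, so if the algorithm makes $T$ queries then $|S|\gtrsim|\Theta|/2^{T}$, and as long as $|S|\ge 2$ no single output is an $\epsilon$-WSNE for all surviving games. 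Thus $T=\Omega(\log|\Theta|)=\Omega(\log(1/\epsilon))$. Since this uses only the strong oracle, the bound holds a fortiori for the lexicographic and adversarial oracles; and the same counting under the uniform prior on $\Theta$ extends it to randomised algorithms via Yao's principle.

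I expect the main obstacle to be the $\epsilon$-WSNE characterisation of $G_\theta$ — in particular excluding pure column strategies, which genuinely requires combining both players' support conditions — together with a careful treatment of the tie cases ($q_0=\theta$, $p=\tfrac12$) so that the adversary's bookkeeping is exact.
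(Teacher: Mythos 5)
Your proof is correct and takes essentially the same approach as the paper: construct a $2\times 2$ game with no pure equilibrium whose unique mixed equilibrium encodes an unknown threshold $\theta$, so that best-response queries to one player's oracle give only comparison information against $\theta$ and a binary-search lower bound of $\Omega(\log(1/\epsilon))$ follows. The paper states the same construction with a second free parameter in the column player's matrix (not actually used in its argument) and leaves the ``identical to binary search'' reduction informal, whereas you freeze $B$ and make the adversary argument over a $\Theta(1/\epsilon)$ grid explicit; the only detail worth tightening in your write-up is the jump from ``the column player mixes'' to $|q-\theta|\le\epsilon$, which really goes via ``column mixing forces $p\in[\tfrac12-\epsilon,\tfrac12+\epsilon]\subset(0,1)$, hence the row player also mixes'', and it is the row player's support condition that yields $|q-\theta|\le\epsilon$.
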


\begin{proof}
Suppose that $x,y \in (0,1)$ are arbitrary, and let us consider a two-player binary action game, $G_{x,y}$, with the following row and column player payoff matrices:

\[A_x =  \left( \begin{array}{rr}
x & x \\
0 & 1 
\end{array} \right)
\ \ 
B_y =  \left( \begin{array}{rr}
0 & y \\
1 & y 
\end{array} \right)
\]
Since $G_{x,y}$ is a binary action game, we can express any mixed strategy profile of both players by a tuple $(p_r,p_c) \in [0,1]^2$. $p_r$ represents the probability the row player plays the second row and $p_c$ represents the probability that the column player plays the second column. 

It is clear that this game has no pure equilibria, but its unique NE is the mixture: $(y,x) \in [0,1]^2$. Upon close inspection, one can also see that the set of $\epsilon$-WSNE of $G_{x,y}$ lie in $B_\epsilon(y) \times B_\epsilon(x) \cap [0,1]^2$, where $B_\epsilon(z)$ the set of points at a distance $\epsilon$ from $z$ in $\mathbb{R}$. Suppose that an algorithm, $\mathcal{A}$, is given a game $G_{x,y}$ from the family above and access to the game's strong best response query oracles. In order for $\mathcal{A}$ to compute an $\epsilon$-WSNE, the previous observation tells us that the it has to at least find a point $z \in B_\epsilon(x)$ by querying the row player's best response oracle. From the structure of $A_x$ however, we know that for $p_c \in [0,x]$ the first row is a best response for the row player, and for $p_c \in [x,1]$, the second row is a best response for the row player. This problem formulation however is identical to binary search, hence finding a $z \in B_\epsilon(x)$ takes at least $\Omega( \log \left(\frac{1}{\epsilon} \right) )$ queries.

\end{proof}

\begin{corollary}\label{cor:exact-nash-inft}
The query complexity of computing a NE with best response queries is infinite, even when given access to strong query oracles.
\end{corollary}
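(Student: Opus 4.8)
The plan is to obtain the corollary as an immediate consequence of Theorem~\ref{thm:WSNE-LB}, using the observation that an exact Nash equilibrium is in particular an $\epsilon$-WSNE for every $\epsilon > 0$. Suppose, for contradiction, that there is a finite $k$ and an algorithm $\mathcal{A}$ that computes an exact NE of every bimatrix game using at most $k$ best response queries, even when given access to strong query oracles. By Proposition~\ref{prop:comb-NE}, if the pair $(u,v)$ output by $\mathcal{A}$ is a NE, then $S^R(u) \subseteq BR^R_s(v)$ and $S^C(v) \subseteq BR^C_s(u)$, so every pure strategy in the support of $u$ (resp.\ $v$) is an exact best response, hence trivially an $\epsilon$-best response, against $v$ (resp.\ $u$). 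Thus $(u,v)$ satisfies the definition of an $\epsilon$-WSNE for \emph{every} $\epsilon > 0$, and $\mathcal{A}$ is in particular an algorithm that computes an $\epsilon$-WSNE using at most $k$ queries, for all $\epsilon$.

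The second step is to invoke Theorem~\ref{thm:WSNE-LB}: any algorithm computing an $\epsilon$-WSNE with best response queries --- even with strong oracles --- must make $\Omega(\log(1/\epsilon))$ queries in the worst case, the hard instances being the family $G_{x,y}$ used in that proof. Choosing $\epsilon$ small enough that this lower bound exceeds $k$ contradicts the assumption that $\mathcal{A}$ suffices with $k$ queries. Since $k$ was arbitrary, there is no finite upper bound on the query complexity of computing an exact NE, i.e.\ it is infinite.

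One could alternatively argue directly on the family $G_{x,y}$: computing the exact NE $(y,x)$ of $G_{x,y}$ forces the algorithm to pin down the real number $x$, but the row player's best response oracle on $G_{x,y}$ only reveals the sign of $p_c - x$ at a queried $p_c$ (with a tie signalled exactly when $p_c = x$), so a deterministic algorithm, whose decision tree uses only finitely many query values, produces the same transcript --- and hence the same output --- for all $x$ avoiding those finitely many values after a small perturbation, and is therefore incorrect on a nearby game. The only point that needs any care is the routine check that an exact NE genuinely meets the $\epsilon$-WSNE definition for all $\epsilon > 0$ --- that the solution concept and output format line up --- which is immediate from Proposition~\ref{prop:comb-NE}; the remainder is a direct quotation of Theorem~\ref{thm:WSNE-LB}, so I do not expect any real obstacle.
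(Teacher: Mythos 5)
Your proof is correct and matches the paper's (implicit) argument: the paper states the corollary immediately after Theorem~\ref{thm:WSNE-LB} without a separate proof, relying precisely on the observation you make explicit --- that an exact NE is an $\epsilon$-WSNE for every $\epsilon > 0$, so a finite-query algorithm for exact NE would violate the $\Omega(\log(1/\epsilon))$ lower bound for sufficiently small $\epsilon$. Your alternative direct argument on the family $G_{x,y}$ is also sound and in fact closely mirrors a direct argument the authors drafted but commented out of the source.
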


\subsection{Algebraic Properties of Utility Functions}

Definition \ref{def:eps-WSNE} mentions approximate best responses, yet we only have access to the best response oracle in our model. In order to resolve this, we delve into the algebraic properties of utility functions of both the column and row player.

\begin{lemma}\label{lemma:utilities-lipschitz}
If the domains of $U^i_r$ and $U^j_c$ are endowed with the $\ell_2$ norm, then the functions are $\lambda_R$ and $\lambda_C$ Lipschitz continuous respectively, for some $0\leq \lambda_R \leq \sqrt{n-1}$ and $0 
\leq \lambda_C \leq \sqrt{m-1}$. If the domains are endowed with the $\ell_1$ norm, then both functions are $1$-Lipschitz continuous.
\end{lemma}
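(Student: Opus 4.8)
The plan is to observe that each of the functions $U_r^i$ and $U_c^j$ is an \emph{affine} function of its argument, so that its Lipschitz constant in any norm is exactly the dual norm of a fixed coefficient vector, and this coefficient vector is easy to bound because all payoff entries lie in $[0,1]$.

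First I would make the affine structure explicit. Fix a pure row strategy $i$ and let $a = (a_1,\dots,a_n)$ denote the $i$-th row of $A$; then for $y \in \Delta^{n-1}$, writing $y' = (1-\sum_{k=1}^{n-1} y_k,\, y_1,\dots,y_{n-1})$ as in the definition of the utility functions, we get $U_r^i(y) = U_r(e_i,y) = a \cdot y'$, i.e.
\[
U_r^i(y) \;=\; a_1\Bigl(1-\sum_{k=1}^{n-1} y_k\Bigr) + \sum_{k=1}^{n-1} a_{k+1}\, y_k \;=\; a_1 + \sum_{k=1}^{n-1} (a_{k+1}-a_1)\, y_k .
\]
Hence $U_r^i(y) - U_r^i(z) = g\cdot (y - z)$ for the fixed gradient vector $g \in \mathbb{R}^{n-1}$ with $g_k = a_{k+1} - a_1$. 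Since every entry of $A$ lies in $[0,1]$, each $|g_k| \le 1$, so $\|g\|_\infty \le 1$ and $\|g\|_2 \le \sqrt{n-1}$.

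The two Lipschitz bounds then follow immediately. By Cauchy--Schwarz, $|U_r^i(y) - U_r^i(z)| = |g\cdot(y-z)| \le \|g\|_2 \,\|y-z\|_2 \le \sqrt{n-1}\,\|y-z\|_2$, giving a Lipschitz constant $\lambda_R \le \sqrt{n-1}$ in the $\ell_2$ norm; by H\"older, $|g\cdot(y-z)| \le \|g\|_\infty \,\|y-z\|_1 \le \|y-z\|_1$, giving $1$-Lipschitzness in the $\ell_1$ norm. The bound $\lambda_R \ge 0$ is trivial, as Lipschitz constants are nonnegative (indeed $\lambda_R = 0$ exactly when the $i$-th row of $A$ is constant). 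The column-player statement is entirely symmetric: $U_c^j(u) = U_c(u,e_j) = (u')^{T} B e_j'$ is affine in $u \in \Delta^{m-1}$ with gradient entries $B_{k+1,j} - B_{1,j} \in [-1,1]$, so $\lambda_C \le \sqrt{m-1}$ in $\ell_2$ and $U_c^j$ is $1$-Lipschitz in $\ell_1$.

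There is no real obstacle here; the only point requiring care is the bookkeeping of the embedding $y \mapsto y'$ that turns a point of $\Delta^{n-1} \subset \mathbb{R}^{n-1}$ into a distribution over the $n$ pure strategies, since it is this step that yields $\sqrt{n-1}$ rather than $\sqrt{n}$ (the constant coordinate $1 - \sum_k y_k$ contributes nothing to the gradient). As an alternative route to the $\ell_1$ bound one could simply note that $U_r^i(y)$ is a convex combination of the values $a_1,\dots,a_n \in [0,1]$, and that changing $y$ by $\delta$ in $\ell_1$ shifts at most $\delta$ of total probability mass, hence changes such a convex combination by at most $\delta$.
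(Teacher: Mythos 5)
Your proof is correct, and for the $\ell_2$ bound it is essentially the paper's argument: both of you expand $U_r^i$ as an affine function of the $(n-1)$ free coordinates, identify the gradient $g$ with entries $a_{k+1}-a_1 \in [-1,1]$, and apply Cauchy--Schwarz to get $\|g\|_2 \le \sqrt{n-1}$.

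For the $\ell_1$ bound you take a genuinely different and more direct route. You apply H\"older to the same gradient $g$ already in hand, using $\|g\|_\infty \le 1$ to get $|g\cdot(y-z)| \le \|y-z\|_1$ in one line, entirely within the $(n-1)$-dimensional coordinate patch. The paper instead re-embeds $\Delta^{n-1}$ into the probability simplex $\Lambda^n \subset \mathbb{R}^n$ via $\phi_{n-1}$, observes that utilities in $[0,1]$ are $1$-Lipschitz under total variation distance on $\Lambda^n$, and then proves a separate comparison $\|\phi_{n-1}(x)-\phi_{n-1}(y)\|_1 \le 2\|x-y\|_1$ to translate back to $\ell_1$ on $\Delta^{n-1}$. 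Your version avoids the embedding bookkeeping and the TV detour; the paper's version makes the probabilistic intuition (shifting at most $\delta$ of probability mass changes a $[0,1]$-bounded expectation by at most $\delta$) explicit, which is the same idea you sketch as an alternative in your final paragraph. Both are valid; yours is shorter and arguably the cleaner proof of the $\ell_1$ claim.
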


\begin{proof}
We focus on $U^i_r$, the case for the column player is identical. Let $c = [A^T]_i = (a_1,...,a_n)$ be the $i$-th row vector of the row player's payoff matrix, and suppose that $v = (v_1,...,v_{n-1}) \in \Delta^{n-1}$ is a column player mixed strategy, where $v_0 = 1 - \sum_{i=1}^{n-1} v_i$ is implicit. Let $z = (z_i)_{i=1}^{n-1}$ with $z_i = (a_i - a_0)$ for $i= 1,...,n-1$. Then it is clear that $U^i_r(v) = a_0 + \sum_{i=1}^{n-1}z_i \cdot v_i$. This function is linear, and trivially $\|z\|_2$-Lipschitz continuous. Since the game is normalised, $\|z\|_2 \leq \|\vec{1}\|_2 = \sqrt{n-1}$.

As for the second part of the claim, the domain of $U^i_r$ can be equivalently represented as $\Lambda^{n} = \{x \in \mathbb{R}^{n} \ | \ \|x\|_1 = 1$, $x_i \geq 0\}$ by using the invertible linear map $\phi_{n-1}: \Delta^{n-1} \rightarrow \Lambda^{n}$ given by $\phi_{n-1}(x_1,...,x_{n-1}) = \left( x_1,...,x_{n-1}, \left( 1-\sum_{i=1}^{n-1}{x_i} \right) \right)$. This space can be endowed with total variation distance as a metric, which for two distributions, $x,y \in \Lambda^n$ is defined as $TV(x,y) = \max_{s \subseteq n} | \mathbb{P}_x(S) - \mathbb{P}_y(S)| = \frac{1}{2} \|x - y\|_1$. Since utilities are bounded to be in the interval $[0,1]$, it follows that $U^i_r$ is 1-Lipschitz as a function with domain $\Lambda^n$ in the total variation metric.

Now suppose that $x,y \in \Delta^{n-1}$. We wish to show that $TV(\phi_{n-1}(x), \phi_{n-1}(y)) \leq \|x - y\|_1$. To see this, let $x_n = 1 - \sum_{i=1}^{n-1} x_i$ and $y_n = \sum_{i=1}^{n-1} y_i$. Then $\|\phi_{n-1}(x) - \phi_{n-1}(y)\|_1  = \|x - y\|_1 + |x_n - y_n|$. From this we see that $|x_n - y_n| = |\sum_{i=1}^{n-1}(y_i - x_i)| \leq \|x-y\|_1$, which in turn implies $\|\phi_{n-1}(x) - \phi_{n-1}(y)\|_1 \leq 2\|x- y\|_1$. Dividing the expresion by 2 and applying the fact that $TV(x,y) = 
\frac{1}{2}\|x-y\|_1$ proves our desired inequality. 1-Lipschitz continuity in the $\ell_1$ norm for domain $\Delta^{n-1}$ follows immediately.
\end{proof}

\begin{corollary}
Since $E_C$ and $E_R$ are defined as a pointwise maximum, it follows that they are also Lipschitz continuous with constant $\lambda_C \leq \sqrt{m-1}$ and $\lambda_R \leq \sqrt{n-1}$ in the $\ell_2$ norm. 
\end{corollary}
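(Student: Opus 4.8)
The plan is to invoke the elementary fact that a pointwise maximum over a finite family of $L$-Lipschitz functions is again $L$-Lipschitz, and apply it with the Lipschitz constants supplied by Lemma~\ref{lemma:utilities-lipschitz}. Concretely, recall that $E_C(u) = \max_{j \in [n]} U^j_c(u)$ and $E_R(v) = \max_{i \in [m]} U^i_r(v)$, and that by Lemma~\ref{lemma:utilities-lipschitz} each $U^j_c$ is $\lambda_C$-Lipschitz in the $\ell_2$ norm with $\lambda_C \leq \sqrt{m-1}$ and each $U^i_r$ is $\lambda_R$-Lipschitz with $\lambda_R \leq \sqrt{n-1}$.

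First I would fix two column-player mixed strategies $u, u' \in \Delta^{n-1}$ and let $j^* \in \argmax_{j \in [n]} U^j_c(u)$, so that $E_C(u) = U^{j^*}_c(u)$. Since $E_C(u') \geq U^{j^*}_c(u')$, we get
$$
E_C(u) - E_C(u') \;\leq\; U^{j^*}_c(u) - U^{j^*}_c(u') \;\leq\; \lambda_C \, \|u - u'\|_2 .
$$
Interchanging the roles of $u$ and $u'$ yields the reverse inequality, hence $|E_C(u) - E_C(u')| \leq \lambda_C \|u - u'\|_2$, i.e. $E_C$ is $\lambda_C$-Lipschitz with $\lambda_C \leq \sqrt{m-1}$. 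The identical argument applied to the row player's utilities $U^i_r$ shows that $E_R$ is $\lambda_R$-Lipschitz with $\lambda_R \leq \sqrt{n-1}$.

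There is essentially no obstacle here: the only point worth noting is that the maximum is taken over a finite index set, so the argument above is unconditionally valid (and in fact the same reasoning would go through for a supremum over an arbitrary family of uniformly $L$-Lipschitz functions, provided the supremum is finite, which it is here since utilities lie in $[0,1]$). The claimed numerical bounds on $\lambda_C$ and $\lambda_R$ are inherited verbatim from Lemma~\ref{lemma:utilities-lipschitz}.
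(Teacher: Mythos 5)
Your argument is correct and is exactly the standard ``max of finitely many $L$-Lipschitz functions is $L$-Lipschitz'' proof that the paper is tacitly invoking (the paper states the corollary without an explicit proof). One small slip: you describe $u, u'$ as ``column-player mixed strategies in $\Delta^{n-1}$,'' but $E_C$ and each $U^j_c$ take a \emph{row-player} strategy in $\Delta^{m-1}$ as their argument, so $u, u'$ should live in $\Delta^{m-1}$. This is purely a labelling error and does not affect the inequality chain or the stated Lipschitz constant $\lambda_C \leq \sqrt{m-1}$, which is tied to the dimension of the domain $\Delta^{m-1}$ as supplied by Lemma~\ref{lemma:utilities-lipschitz}.
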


With bounded Lipschitz continuity, we have guarantees on how much utilities can deviate between ``close'' mixed strategy profiles. This has interesting implications even for the best response query oracle, for this means that if $u$ and $u'$ are close in the $\ell_2$ norm with $c_i \in BR^C_s(u)$, $ c_j \in BR^C_s(u')$ and $c_i \neq c_j$, then we can say that $c_i$ and $c_j$ are both approximate best responses in the vicinity of $u$ and $u'$. We formalise this as follows.

\begin{lemma} \label{lemma:close-epsBR}
Fix $\epsilon >0$ and let $\delta_C = \frac{\epsilon}{2\sqrt{m-1}}$. Suppose that $u \in \Delta^{m-1}$ is a row player mixed strategy with $c_j \in BR^C_s(u)$. For any $u'$ such that $\|u - u'\|_2 \leq \delta_C$, if $c_i \in BR^C_s(u')$, then $|U^i_c(u) - U^j_c(u)| \leq \epsilon$. In other words, $c_i$ is an $\epsilon$-best response to $u$. Similarly, let $\delta_R = \frac{\epsilon}{2\sqrt{n-1}}$. Suppose that $v \in \Delta^{n-1}$ is a column player mixed strategy with $r_j \in BR^R_s(u)$. For any $v'$ such that $\|v - v'\|_2 \leq \delta_R$, if $r_i \in BR^R_s(v')$, then $|U^i_r(v) - U^j_r(v)| \leq \epsilon$. In other words, $r_i$ is an $\epsilon$-best response to $v$.
\end{lemma}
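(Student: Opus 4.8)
The plan is to derive this directly from the Lipschitz continuity of the pure-strategy utility functions established in Lemma~\ref{lemma:utilities-lipschitz} and its corollary, by chaining a handful of one-line inequalities. I will prove only the column-player statement; the row-player statement is entirely symmetric, with $\sqrt{m-1}$ replaced by $\sqrt{n-1}$ and $BR^C_s$ by $BR^R_s$ throughout, so it needs no separate argument.

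First I would record the two ``best response'' inequalities that come for free from the hypotheses: since $c_j \in BR^C_s(u)$ we have $U^j_c(u) = E_C(u) \ge U^i_c(u)$, and since $c_i \in BR^C_s(u')$ we have $U^i_c(u') = E_C(u') \ge U^j_c(u')$. In particular $U^j_c(u) - U^i_c(u) \ge 0$, so it suffices to bound this difference above by $\epsilon$, which then also yields $|U^i_c(u) - U^j_c(u)| \le \epsilon$.

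Next I would transport the values at $u'$ back to $u$ using the $\lambda_C$-Lipschitz bound with $\lambda_C \le \sqrt{m-1}$ and $\|u - u'\|_2 \le \delta_C = \frac{\epsilon}{2\sqrt{m-1}}$, which gives $U^i_c(u) \ge U^i_c(u') - \tfrac{\epsilon}{2}$ and $U^j_c(u) \le U^j_c(u') + \tfrac{\epsilon}{2}$. Substituting these into $U^j_c(u) - U^i_c(u)$ and then using $U^j_c(u') \le U^i_c(u')$ from the previous step yields $U^j_c(u) - U^i_c(u) \le \bigl(U^j_c(u') - U^i_c(u')\bigr) + \epsilon \le \epsilon$. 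Combined with nonnegativity this gives $|U^i_c(u) - U^j_c(u)| \le \epsilon$, and since $U^j_c(u) = E_C(u)$ this is exactly the statement that $c_i$ is an $\epsilon$-best response to $u$ in the sense defined just before Definition~\ref{def:eps-WSNE}.

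There is no genuine obstacle here; the only thing to watch is the bookkeeping of the direction of each inequality and the constant — two applications of the Lipschitz bound each cost $\lambda_C \delta_C \le \tfrac{\epsilon}{2}$, summing to exactly $\epsilon$, which is precisely why the factor $2$ appears in the definition of $\delta_C$. The row-player case is obtained verbatim by swapping the roles of the two players.
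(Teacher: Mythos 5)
Your proof is correct and follows essentially the same route as the paper: two applications of the $\lambda_C$-Lipschitz bound from Lemma~\ref{lemma:utilities-lipschitz} (each contributing at most $\lambda_C\delta_C \le \epsilon/2$), combined with the two best-response inequalities at $u$ and $u'$, to bound $U^j_c(u) - U^i_c(u)$ by $\epsilon$. The paper's write-up is slightly more compressed but does the identical bookkeeping, and also just asserts the row-player case by symmetry.
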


\begin{proof}
Suppose that $u'$ is such that $\|u - u'\| \leq \delta_C$ and $c_i \in BR^C_s(u')$. By definition, $E_c(u') = U^i_c(u') \geq U^j_c(u')$ and by Lemma \ref{lemma:utilities-lipschitz}, $|U_j^c(u) - U_j^c(u')| \leq \lambda_C \|u - u'\|_2$, and $\|U_i^c(u) - U_i^c(u')| \leq \lambda_C \|u - u'\|_2$. With these expressions we obtain the following inequalities:
$$
|E_c(u) - U_c^i(u)| \leq 2\lambda_C\|u - u'\|_2 \leq 2\lambda_C \frac{\epsilon}{2\lambda_C} = \epsilon
$$
The proof of the second half of the lemma is identical.
\end{proof}

The previous Lemma establishes the important idea that we can obtain some information regarding approximate best responses using only the best response oracle and ``nearby'' queries. With some thought one can see that this in general does not reveal {\em all} approximate best response information. For example, if a strategy were strictly dominated, a best response oracle would never see it, and hence never be able to tell if it was an approximate best response. 

\section{Nash's Theorem with Discrete Approximations}\label{sec:discrete-nash} 

We are now in a position to prove the intimate connection between computing $\epsilon$-close labellings of upper envelope polytope partitions and computing $\epsilon$-WSNE for bimatrix games using best response queries.

\begin{definition}[Best Response Sets]\label{BR-sets}
Let $G = (A,B)$ be a bimatrix game. We define column best response sets as the collection of $C_i = \{ x \in \Delta^{m-1} \ | \ BR^C_s(x) = c_i \}$. Similarly we define row player best response sets as the collection of $ R_j = \{ y \in \Delta^{n-1} \ | \ BR^R_s(y) = r_j \}$. We denote the collections by $\mathscr{C} =\{C_i\}_{i=1}^n$ and $\mathscr{R} = \{R_j\}_{j=1}^m$.
\end{definition}

Since utilities are affine functions, it is immediately clear that $\mathscr{C}$ and $\mathscr{R}$ are upper envelope polytope partitions. Now the best response oracles play the same role as membership oracles, $Q$, from before. Since adversarial oracles are the weakest of the three membership oracles (in the sense that they are a valid lexicographic oracle and they can be simulated with access to a strong oracle), we focus on using adversarial best response oracles. Furthermore, with our language of empirical labellings we can now define a key object used in the computation of approximate equilibria. Before doing so, we clarify some notation: $d(x,S)$ denotes the infimum distance of a point, $x$ to a set $S$.

\begin{definition}[Voronoi Best Response Sets]\label{def:VorBR}
Suppose that $\hat{\mathscr{C}} = \{\hat{C}_i\}$ and $\hat{\mathscr{R}} = \{\hat{R}_j\}$ are empirical labellings of $\mathscr{C}$ and $\mathscr{R}$ as in Definition \ref{def:implicit-labelling}. The {\em Voronoi Best Response Sets} of the row and column player are $VR_j = \{ y \in \Delta^{n-1} \ | \ \argmin_j d(y,\hat{R}_j) = r_j \}$ and $VC_i = \{ x \in \Delta^{m-1} \ | \ \argmin_i d(x,\hat{C}_i) = c_i \}$, defined for any $j \in [m]$ and $i \in [n]$. Furthermore, we let $V^R(v) = \{i \ | \ VR_i \ni v\}$ and $V^C(u) = \{j \ | VC_j \ni u\}$ be the row and column player {\em Voronoi Best Responses}.
\end{definition}

\begin{lemma}\label{lemma:voronoi-closed}
Voronoi best response sets partition $\Delta^{m-1}$ and $\Delta^{n-1}$ into closed connected regions with non-empty interior and piecewise linear boundaries.
\end{lemma}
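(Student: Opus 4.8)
The plan is to read each Voronoi best response set $VR_j$, for the indices $j$ with $\hat{R}_j\neq\emptyset$, as a generalised Voronoi cell whose ``sites'' are the convex empirical polytopes $\hat{R}_j$, and to exploit throughout that $y\mapsto d(y,\hat{R}_j)$ is convex and $1$-Lipschitz, being the distance to a fixed set. It is cleanest to work with the closed form $VR_j=\{\,y\in\Delta^{n-1}:d(y,\hat{R}_j)\le d(y,\hat{R}_k)\text{ for all }k\,\}$, which is what ``$\argmin$'' must mean if the cells are to be closed and to cover $\Delta^{n-1}$; empty empirical polytopes sit at distance $+\infty$ and are irrelevant. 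Everything below applies verbatim to the $VC_i$ on $\Delta^{m-1}$, so I describe only the $VR_j$.

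Covering is immediate, since every $y\in\Delta^{n-1}$ attains $\min_k d(y,\hat{R}_k)$ at some non-empty site. Closedness is immediate too: each $d(\cdot,\hat{R}_j)$ is continuous, so $VR_j=\bigcap_k\{\,d(\cdot,\hat{R}_j)-d(\cdot,\hat{R}_k)\le 0\,\}$ is a finite intersection of closed sets. Disjointness of interiors reduces to showing each pairwise bisector $\{\,y:d(y,\hat{R}_j)=d(y,\hat{R}_k)\,\}$ has empty interior; this holds once the $\hat{R}_j$ are irredundant (no one contained in another), because then $d(\cdot,\hat{R}_j)^2-d(\cdot,\hat{R}_k)^2$ is a piecewise-quadratic function that is not locally identically $0$, and this irredundancy is exactly what the ``merge'' steps of the adversarial algorithms enforce.

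Connectedness is the geometric heart of the argument. Fix $y\in VR_j$, let $q$ be the metric projection of $y$ onto the convex set $\hat{R}_j$ (so $\|y-q\|=d(y,\hat{R}_j)$), and set $r=(1-t)y+tq$ for $t\in[0,1]$. Then for every index $k$, $1$-Lipschitzness of $d(\cdot,\hat{R}_k)$ gives $d(r,\hat{R}_k)\ge d(y,\hat{R}_k)-\|y-r\|\ge d(y,\hat{R}_j)-t\|y-q\|=(1-t)\|y-q\|\ge d(r,\hat{R}_j)$, so the whole segment $[y,q]$ lies in $VR_j$. Since moreover $\hat{R}_j\subseteq VR_j$ (distance $0$ is minimal there) and $\hat{R}_j$ is convex, hence connected, $VR_j$ is a union of segments each meeting the connected set $\hat{R}_j$, and is therefore path-connected.

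For the last two properties I would argue as follows, and I expect these to be the only delicate points. For a non-empty interior it suffices to exhibit a point $q$ with $d(q,\hat{R}_j)<d(q,\hat{R}_k)$ for every $k\neq j$, since then $\{\,\min_{k\neq j}d(\cdot,\hat{R}_k)-d(\cdot,\hat{R}_j)>0\,\}$ is a non-empty open subset of $VR_j$ by continuity; such a $q$ exists because the part of $\hat{R}_j$ lying inside the other sites cannot exhaust it: writing $\hat{R}_j=\bigcup_k(\hat{R}_j\cap\hat{R}_k)$ as a finite union of closed convex pieces, if all were lower-dimensional in $\hat{R}_j$ they could not cover it (Baire category), while a piece full-dimensional in $\hat{R}_j$ would force, through the upper-envelope structure of $\mathscr{R}$ and its restriction properties (Lemma~\ref{lemma:structure-uepp}), $R_j$ and $R_k$ to coincide on the affine hull of $\hat{R}_j$, which is again an irredundancy removed by merging. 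For the boundary, $\partial VR_j$ is contained in the finite union of the bisectors $\{\,d(\cdot,\hat{R}_j)=d(\cdot,\hat{R}_k)\,\}$, each of which splits into finitely many pieces according to which face of $\hat{R}_j$ and which face of $\hat{R}_k$ realises the minimum distance there, and on each piece the two squared-distance functions are explicit quadratics whose difference cuts out the boundary piece. The genuine obstacles are thus (i) the non-empty-interior claim, which is the only place the upper-envelope structure of $\mathscr{R}$ is really used---to stop a site being ``swallowed'' by overlapping siblings---and (ii) the algebraic type of the boundary pieces: strictly they are pieces of quadric hypersurfaces (a hyperplane only when both distance-realising faces are vertices), so ``piecewise linear'' should be read as ``piecewise algebraic'' unless one adds a genericity hypothesis forcing the realising faces to be vertex-like; covering, closedness and connectedness are routine once the Lipschitz/reverse-triangle observation is in hand.
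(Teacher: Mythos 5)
Your proposal follows the same overall structure as the paper's proof---closedness from continuity of distance, connectedness by running segments from a point $y\in VR_j$ down to the metric projection of $y$ onto the convex site $\hat R_j$---but you actually prove the step that the paper merely asserts. The paper writes that for the shortest path $p$ from $x\in VC_i\setminus\hat C_i$ to $\hat C_i$, ``it is clear that all points along $p$ must also lie in $VC_i$,'' whereas you supply the $1$-Lipschitz / reverse-triangle computation $d(r,\hat R_k)\ge d(y,\hat R_k)-\|y-r\|\ge(1-t)\|y-q\|\ge d(r,\hat R_j)$ that makes this inequality precise. That is a genuine improvement on the paper's exposition.

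More importantly, your skepticism about ``piecewise linear boundaries'' is justified: as stated, that clause of the lemma is false for the $\ell_2$ Voronoi diagram of convex polytope sites. Take $\hat C_1=\{(0,0)\}$ and $\hat C_2=\{1\}\times[-1,1]$ in the plane; on the strip $|y|\le 1$ the bisector is $\sqrt{x^2+y^2}=1-x$, i.e.\ the parabola $y^2=1-2x$, not a line. In general the bisector between two convex polytopes splits into pieces according to which faces of the sites realise the distance, and on each piece the squared-distance difference is a quadratic, so the boundary is piecewise algebraic (of degree $\le 2$), reducing to hyperplanes only where the realising faces are affine subspaces of the same dimension. The paper's justification---that ``nearest neighbour decision boundaries between piecewise linear boundaries'' give piecewise linear boundaries---does not hold. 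You also correctly flag that the non-empty-interior clause requires an irredundancy hypothesis on the sites (a site that is, say, a single point buried inside another site produces a measure-zero cell), which the paper neither states nor proves; in context it is supplied by the lexicographic oracle or by the merge steps of the adversarial algorithms, but your instinct to make this explicit is right.

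None of this breaks the paper: downstream (in Theorem \ref{thm:discrete-Nash}) only closedness of the Voronoi cells is actually used, for the closed-graph condition in Kakutani's theorem, and that part is correct in both your argument and the paper's. But you should trust your finding---the ``piecewise linear'' and ``non-empty interior'' clauses as stated are overclaims, and your ``piecewise algebraic'' reading plus an explicit irredundancy hypothesis is the correct fix.
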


\begin{proof}
Without loss of generality, let us focus on a given column player Voronoi best response set; i.e. some $VC_i \subset \Delta^{m-1}$ that is non-empty and arises from the empirical labelling $\hat{\mathscr{C}} = \{\hat{C}_i\}$ of column-player best responses. First we note that $\hat{C}_i \subset VC_i$ by definition, and the former is a convex, closed, and connected polytope of $\Delta^{m-1}$. Therefore it remains to show that if we pick an arbitrary $x \in VC_i \setminus \hat{C}_i$ it is connected to $C_i$. To do so, suppose that $p \subset \Delta^{m-1}$ is the unique shortest path from $x$ to $C_i$. It is clear that all points along $p$ must also lie in $VC_i$, therefore the claim holds. 

As for closedness, note that if $\{x_n\}$ is a sequence in $VC_i$ that converges to some $x \in \Delta^{m-1}$ then $x$ must also be in $VC_i$. This follows from the continuity of Euclidian distance for $\Delta^{m-1} \subset \mathbb{R}^{m}$. Now suppose that $x$ is a limit point of $VC_i$, then we can construct a sequence as above, and thus $x$ must also be in $VC_i$, rendering the set closed.

Finally, the piecewise linear boundary arises from the fact that $\hat{C}_i$ is itself a closed convex polytope which has a piecewise linear boundary. Decision boundaries between different $\hat{C}_i$ and $\hat{C}_j$ are composed of nearest neighbour decision boundaries between piecewise linear boundaries, which in turn results in piecewise linear decision boundaries between $VC_i$ and $VC_j$.    
\end{proof}

Although the previous lemma proves that Voronoi best response sets partition $\Delta^{m-1}$ and $\Delta^{n-1}$ into closed connected regions with non-empty interior and piecewise linear boundaries, they need not be convex. This ends up not being an issue for our subsequent results. The reason we deal with these objects however is due to the following Lemma. We recall that $\lambda_R \leq \sqrt{m-1}$ and $\lambda_C \leq \sqrt{n-1}$ are the relevant Lipschitz continuity constants for row player and column player expected utility functions. The following is a straightforward consequence of Lemma \ref{lemma:close-epsBR}.
 
\begin{lemma}\label{lemma:Vor-to-epsBR}
Suppose that $\hat{\mathscr{C}}$ is a $\frac{\epsilon}{2\sqrt{m-1}}$-close labelling and $\hat{\mathscr{R}}$ is a $\frac{\epsilon}{2\sqrt{n-1}}$-close labelling. Then Voronoi best responses are $\epsilon$ best-responses in $G$  
\end{lemma}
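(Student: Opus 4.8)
The plan is to reduce directly to Lemma~\ref{lemma:close-epsBR}, treating the column player in detail (the row player being entirely symmetric). Fix a row-player mixed strategy $u \in \Delta^{m-1}$ and an index $i \in V^C(u)$, i.e. $u \in VC_i$; the goal is to show that $c_i$ is an $\epsilon$-best response to $u$. First I would locate a point of $\hat{C}_i$ near $u$: since $\hat{\mathscr{C}}$ is a $\frac{\epsilon}{2\sqrt{m-1}}$-close labelling, $\bigcup_k \hat{C}_k$ is a $\frac{\epsilon}{2\sqrt{m-1}}$-net of $\Delta^{m-1}$, so there is $w \in \bigcup_k \hat{C}_k$ with $\|u-w\|_2 \le \frac{\epsilon}{2\sqrt{m-1}}$ (this also guarantees the $\argmin$ defining $V^C(u)$ is over at least one nonempty $\hat{C}_k$, so $V^C(u)$ is well-defined). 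Because $u \in VC_i$, the polytope $\hat{C}_i$ realises the minimum distance from $u$ to the empirical polytopes, so $d(u,\hat{C}_i) \le d(u,\hat{C}_k) \le \|u-w\|_2 \le \frac{\epsilon}{2\sqrt{m-1}}$, where $k$ is chosen with $w \in \hat{C}_k$. Since $\hat{C}_i$ is a nonempty compact polytope, this infimum is attained: there is $u' \in \hat{C}_i$ with $\|u-u'\|_2 \le \frac{\epsilon}{2\sqrt{m-1}} = \delta_C$.

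Next I would record that $\hat{C}_i \subseteq C_i$: by the upper-envelope description, $C_i = \{x \in \Delta^{m-1} \mid U^i_c(x) = E_C(x)\}$ is the closed polytope of strategies to which $c_i$ is a best response, every queried point returned by the (adversarial) best-response oracle with answer $c_i$ lies in $C_i$, and $C_i$ is convex, so the convex hull $\hat{C}_i$ of these points stays inside $C_i$ (this is also built into Definition~\ref{def:implicit-labelling}). Hence $u' \in C_i$, i.e. $c_i \in BR^C_s(u')$.

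The conclusion then follows by applying Lemma~\ref{lemma:close-epsBR} with the point $u$ in the role of its ``$u$'' — picking any $c_j \in BR^C_s(u)$, which exists — and the point $u'$ in the role of its ``$u'$'', since $\|u-u'\|_2 \le \delta_C$ and $c_i \in BR^C_s(u')$. The lemma yields $|U^i_c(u) - U^j_c(u)| \le \epsilon$, and since $U^j_c(u) = E_C(u)$ this says $U^i_c(u) \ge E_C(u) - \epsilon$, i.e. $c_i$ is an $\epsilon$-best response to $u$. Running the identical argument for the row player, with $\hat{\mathscr{R}}$, $\delta_R = \frac{\epsilon}{2\sqrt{n-1}}$, and the row-player half of Lemma~\ref{lemma:close-epsBR}, finishes the proof.

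I expect no genuinely hard step here; the only points requiring care are (i) getting the orientation of Lemma~\ref{lemma:close-epsBR} right — the point with an \emph{exact} best response is the query point $u'$, while the point about which we draw the $\epsilon$-best-response conclusion is the target $u$ — and (ii) the elementary but essential observation that membership of $u$ in the Voronoi cell of $\hat{C}_i$, combined with the net property, forces $d(u,\hat{C}_i) \le \delta_C$, which is exactly the hypothesis needed to invoke Lemma~\ref{lemma:close-epsBR}. Degenerate configurations do not interfere: empty empirical polytopes never attain the $\argmin$, and non-convexity of the $VC_i$ (Lemma~\ref{lemma:voronoi-closed}) is irrelevant to this argument.
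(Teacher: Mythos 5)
Your proof is correct and is exactly the elaboration the paper intends when it calls the lemma ``a straightforward consequence of Lemma~\ref{lemma:close-epsBR}'' (the paper gives no proof of its own). You correctly identify the two ingredients — the net property supplies a point $u' \in \hat{C}_i$ within $\delta_C$ of $u$ because the Voronoi cell's defining $\argmin$ forces $d(u,\hat{C}_i) \le d(u,\hat{C}_k)$ for the index $k$ of the nearby net point, and the containment $\hat{C}_i \subseteq C_i$ from Definition~\ref{def:implicit-labelling} makes $c_i$ an exact best response at $u'$ — and you orient Lemma~\ref{lemma:close-epsBR} the right way around.
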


We recall that the combinatorial formulation of Nash's theorem implies that with full information of best response sets in all of $\Delta^m$ and $\Delta^n$, one is able to compute and verify an exact Nash equilibrium. Best responses only partially recover this information in convex patches of $\Delta^m$ and $\Delta^n$. Furthermore, it is not clear how a game $G'$ with best responses sets consistent with empirical best response sets of $G$ can be used to compute an approximate equilibrium of $G$. Voronoi best response sets however allow us to take the partial information provided by empirical best response sets and extend it to approximate best response information {\em across the entire domains $\Delta^m$ and $\Delta^n$} (Voronoi best response sets cover $\Delta^m$ and $\Delta^n$ after all). This hints at the fact that Voronoi best response sets hold enough information to compute $\epsilon$-WSNE. In fact we can prove this in the same way as Nash's theorem: via Kakutani's fixed point theorem. In order to do so, we define a Voronoi best response correspondence (which as we have shown before is an approximate best response correspondence), and show that it satisfies the properties of Kakutani's fixed point theorem. The guaranteed fixed point of this correspondence will in turn be an $\epsilon$-WSNE. 
 
\begin{definition}[Voronoi Approximate Best Response Correspondence]\label{def:VorBR-correspondance}
For a given mixed strategy profile of both the row and column player, $(u,v) \in \Delta^{m-1} \times \Delta^{n-1}$, we define $B^*(u,v)$ to be the set of all possible mixtures over Voronoi best response profiles both players may have to the other player's strategy. $B^*: \Delta^{m-1} \times \Delta^{n-1} \rightarrow \mathcal{P}(\Delta^{m-1} \times \Delta^{n-1})$ is defined as follows:

$$
B^*(u,v) = \left( conv(V^R(v)), conv(V^C(u)) \right) \subseteq \Delta^{m-1} \times \Delta^{n-1}.
$$ 
\end{definition} 
 
\begin{theorem}[Kakutani's Fixed Point Theorem \cite{kakutani1941}]\label{thm:kakutani}
Let $A$ be a non-empty subset of a finite-dimensional Euclidian space and $f:A \rightarrow \mathcal{P}(A)$ be a set-valued function satisfying the following conditions:
\begin{itemize}

\item $A$ is a compact and convex set.
\item $f(x)$ is non-empty for all $x \in A$.
\item $f(x)$ is a convex-valued correspondence: for all $x \in A$, $f(x)$ is a convex set.
\item $f(x)$ has a closed graph: that is, if $\{x_n,y_n\} \rightarrow \{x,y\}$ with $y_n \in f(x_n)$ for all $n$, then $y \in f(x)$. 

\end{itemize}
Then $f$ has a fixed point, that is there exists some $x \in A$ such that $x \in f(x)$. 
\end{theorem}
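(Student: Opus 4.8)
The plan is to deduce Kakutani's theorem from Brouwer's fixed-point theorem by a simplicial-approximation argument. First I would reduce to the case in which $A$ is a simplex. Since $A$ is compact and convex in some $\mathbb{R}^N$, the nearest-point projection $r:\mathbb{R}^N \to A$ is (1-Lipschitz, hence) continuous; choosing a simplex $S \supseteq A$ and setting $g(x) = f(r(x))$ for $x \in S$ gives a correspondence $g:S \to \mathcal{P}(S)$ that inherits non-emptiness, convex-valuedness and the closed-graph property from $f$ (the last because $r$ is continuous), and any fixed point of $g$ automatically lies in $A$ (as $g(x) \subseteq A$) and hence satisfies $r(x)=x$, so it is a fixed point of $f$. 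From now on we may therefore assume $A = S$ is a simplex.

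Next, for each $k$ I would take the $k$-fold barycentric subdivision of $S$, whose mesh tends to $0$. Using non-emptiness of $f$, pick for every vertex $v$ of this subdivision a point $y_v \in f(v)$, and let $f_k: S \to S$ be the map that on each sub-simplex is the affine interpolation of the values $y_v$ at its vertices. Because a single value $y_v \in S$ is assigned to each vertex, $f_k$ is well-defined and continuous (the interpolations agree on shared faces), and it maps $S$ into $S$ by convexity. Brouwer's theorem then yields $x_k \in S$ with $f_k(x_k) = x_k$.

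Then I would pass to the limit. Write $x_k = \sum_{i=0}^N \lambda_i^{(k)} v_i^{(k)}$ as a convex combination of the vertices of a sub-simplex containing it, so that $x_k = f_k(x_k) = \sum_i \lambda_i^{(k)} y_{v_i^{(k)}}$. By compactness of $S$ and of the coefficient simplex, pass to a subsequence along which $\lambda_i^{(k)} \to \lambda_i^*$, $v_i^{(k)} \to x^*$ for every $i$ (all vertices of the shrinking sub-simplices converge to the same point because the mesh goes to $0$), $x_k \to x^*$, and $y_{v_i^{(k)}} \to y_i^*$ for each $i$. The closed-graph property applied to the sequence $(v_i^{(k)}, y_{v_i^{(k)}})$ gives $y_i^* \in f(x^*)$; passing to the limit in $x_k = \sum_i \lambda_i^{(k)} y_{v_i^{(k)}}$ gives $x^* = \sum_i \lambda_i^* y_i^*$; and since $f(x^*)$ is convex and contains each $y_i^*$, we conclude $x^* \in f(x^*)$, the desired fixed point.

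The main obstacle — granting Brouwer's theorem as a black box — is the bookkeeping in this limiting step: one must ensure the chosen sub-simplices genuinely shrink (so their vertices share a common limit), and then invoke the closed-graph hypothesis and convex-valuedness at exactly the right places. A cleaner but essentially equivalent alternative is to build, for each $\delta>0$, a continuous single-valued $\delta$-approximate selection of $f$ via a partition of unity subordinate to a fine open cover and apply Brouwer to that, again letting $\delta \to 0$; either way, the crux is the same compactness-plus-closed-graph passage to the limit.
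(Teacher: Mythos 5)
The paper does not prove Kakutani's theorem; it cites it from \cite{kakutani1941} and uses it as a black box, so there is no in-paper argument to compare against. Your proof is nonetheless correct, and it is essentially Kakutani's original argument: reduce to a simplex $S$ via the (non-expansive, hence continuous) nearest-point retraction $r:\mathbb{R}^N\to A$, noting that $g = f\circ r$ inherits non-emptiness, convex values and closed graph, and that any fixed point of $g$ already lies in $A$; then build piecewise-affine approximants $f_k$ on barycentric subdivisions by selecting one value of $f$ at each vertex; apply Brouwer to each $f_k$; and pass to the limit along a subsequence where the barycentric coordinates, the vertices, the fixed points, and the selected values all converge, using shrinking mesh to force all vertices to the same limit $x^*$, the closed graph to place each $y_i^*$ in $f(x^*)$, and convexity of $f(x^*)$ to conclude $x^* = \sum_i \lambda_i^* y_i^* \in f(x^*)$. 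You correctly flag the only delicate points (well-definedness of $f_k$ on shared faces, the mesh going to zero, and the order in which closed-graph and convexity are invoked), and the alternative you sketch via a partition-of-unity continuous $\delta$-selection plus Brouwer is also standard and valid.
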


\begin{theorem}\label{thm:discrete-Nash}
$B^*$ satisfies all the conditions of Kakutani's fixed point Theorem, and hence there exists a strategy profile $(u^*,v^*)$ such that $(u^*,v^*) 
\in B^*(u^*,v^*)$. In particular, if the Voronoi best responses for $B^*$ arise from $\hat{\mathscr{C}}$, a $\frac{\epsilon}{2\sqrt{m-1}}$-close labelling and $\hat{\mathscr{R}}$, a $\frac{\epsilon}{2\sqrt{n-1}}$-close labelling, then this in turn implies that $(u^*,v^*)$ is an $\epsilon$-WSNE of $G$. 
\end{theorem}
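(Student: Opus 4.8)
The plan is to check the four hypotheses of Kakutani's theorem (Theorem~\ref{thm:kakutani}) for $B^*$ on the domain $A=\Delta^{m-1}\times\Delta^{n-1}$, and then read off the $\epsilon$-WSNE property from the structure of any guaranteed fixed point. Compactness and convexity of $A$ are immediate, as it is a product of two simplices. Non-emptiness of $B^*(u,v)$ holds because for every $v$ the set $\argmin_j d(v,\hat{R}_j)$ is a minimum over finitely many values and hence non-empty, so $V^R(v)\neq\emptyset$, and symmetrically $V^C(u)\neq\emptyset$; taking convex hulls preserves non-emptiness. Convex-valuedness is equally direct, since $B^*(u,v)$ is by definition a Cartesian product of two convex hulls.

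The substantive step is the closed-graph condition, and here I would use finiteness of the pure-strategy index sets together with the topological content of Lemma~\ref{lemma:voronoi-closed}. Suppose $(u_n,v_n)\to(u,v)$ and $(p_n,q_n)\in B^*(u_n,v_n)$ with $(p_n,q_n)\to(p,q)$. Since $V^C(u_n)$ ranges over the finite collection of subsets of $[n]$, I pass to a subsequence along which $V^C(u_n)$ equals a fixed set $T$; then $q_n\in conv(\{e_j:j\in T\})$, a closed set, so $q\in conv(\{e_j:j\in T\})$. For each $j\in T$ we have $u_n\in VC_j$ along the subsequence, and $VC_j$ is closed by Lemma~\ref{lemma:voronoi-closed}, so $u\in VC_j$, i.e. $j\in V^C(u)$; hence $T\subseteq V^C(u)$ and $q\in conv(V^C(u))$. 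The identical argument on the row side gives $p\in conv(V^R(v))$, so $(p,q)\in B^*(u,v)$. Kakutani's theorem then yields a fixed point $(u^*,v^*)\in B^*(u^*,v^*)$.

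Finally I would translate the fixed point into an $\epsilon$-WSNE. The relation $u^*\in conv(V^R(v^*))$ forces $S^R(u^*)\subseteq V^R(v^*)$, and symmetrically $S^C(v^*)\subseteq V^C(u^*)$. Under the hypothesis that $\hat{\mathscr{C}}$ is a $\frac{\epsilon}{2\sqrt{m-1}}$-close labelling and $\hat{\mathscr{R}}$ a $\frac{\epsilon}{2\sqrt{n-1}}$-close labelling, Lemma~\ref{lemma:Vor-to-epsBR} (itself a consequence of Lemma~\ref{lemma:close-epsBR}) says every Voronoi best response is an $\epsilon$-best response in $G$. Hence every pure strategy in $S^R(u^*)$ is an $\epsilon$-best response to $v^*$ and every pure strategy in $S^C(v^*)$ is an $\epsilon$-best response to $u^*$, which is exactly the condition in Definition~\ref{def:eps-WSNE} for $(u^*,v^*)$ to be an $\epsilon$-WSNE.

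I expect the closed-graph verification to be the main obstacle: one must argue that the Voronoi best response multifunction $u\mapsto V^C(u)$ is upper hemicontinuous --- which is precisely what closedness of the individual sets $VC_j$ from Lemma~\ref{lemma:voronoi-closed} provides --- and that passing to convex hulls of the induced vertex sets commutes with limits, which the subsequence device handles cleanly. A secondary point needing care is the bookkeeping between the two distinct approximation parameters $\frac{\epsilon}{2\sqrt{m-1}}$ and $\frac{\epsilon}{2\sqrt{n-1}}$ for the two players, but this is already encapsulated in the statements of Lemmas~\ref{lemma:close-epsBR} and~\ref{lemma:Vor-to-epsBR}.
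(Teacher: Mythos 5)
Your proof is correct and follows the same high-level route as the paper: verify the four Kakutani hypotheses (the first three being routine) and translate the fixed point into an $\epsilon$-WSNE via \Cref{lemma:Vor-to-epsBR}. The one place you genuinely depart from the paper is the closed-graph step. The paper does not invoke \Cref{lemma:voronoi-closed} there; instead it directly constructs a radius $\mu = \tfrac{1}{3}\bigl(\min_{j\notin S} d(u,\hat{C}_j) - \min_i d(u,\hat{C}_i)\bigr) > 0$ and uses the triangle inequality on the distances $d(\cdot,\hat{C}_i)$ to show that every point of $B_\mu(u)$ has its Voronoi best responses contained in $S=V^C(u)$; it then concludes via $u_n\to u$ eventually landing in $B_\mu(u)$. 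You instead extract a subsequence on which $V^C(u_n)$ equals a fixed set $T$ (possible because there are only finitely many subsets of $[n]$), use closedness of the polytope $conv(\{e_j: j\in T\})$ to get $q\in conv(T)$, and use closedness of each $VC_j$ from \Cref{lemma:voronoi-closed} to get $T\subseteq V^C(u)$. Both are essentially upper-hemicontinuity arguments; yours is more conceptual and re-uses the stated lemma, while the paper's is more self-contained and makes the local stability of the Voronoi cell assignment quantitative. Your explicit derivation of $S^R(u^*)\subseteq V^R(v^*)$ and $S^C(v^*)\subseteq V^C(u^*)$ from the fixed-point relation also spells out a step the paper compresses into one sentence; it is the correct unwinding.
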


\begin{proof}
We need to prove the following conditions for Kakutani's fixed point Theorem:
\begin{itemize}

\item $B^*$ has a compact and convex domain.
\item $B^*(u,v)$ is non-empty and convex for all $(u,v) \in \Delta^{m-1} \times \Delta^{n-1}$.
\item (Graph Closedness) Suppose that $\{\sigma_n\}$ and $\{\sigma_n'\}$ are sequences in $\Delta^{m-1} \times \Delta^{n-1}$ that converge to $\sigma$ and $\sigma'$ respectively. Furthermore, suppose that $\sigma'_n \in B^*(\sigma_n)$ for all $n$. Then $\sigma' \in B^*(\sigma)$.    

\end{itemize} 
For the first item, the domain of $B^*$ is $\Delta^{m-1} \times \Delta^{n-1}$ which clearly satisfies the desired condition.

As for the second and third item, from the definition of $B^*$ the image of any $(u,v)$ consists of convex combinations of Voronoi best responses, which are defined for all $(u,v)$ (thus satisfying non-emptyness), and since they are convex combinations, they are convex subsets of $\Delta^{m-1} \times \Delta^{n-1}$.

Finally for the fourth item, let us consider such a sequence where $\sigma_n = (u_n,v_n)$, $\sigma = (u,v)$, $\sigma'_n = (u_n',v_n')$, and $\sigma' = (u',v')$. To show the claim, it suffices to consider the sequences $\{u_n\}$ and $\{v_n'\}$ with respective limits $u$ and $v'$, and show that $v' \in conv(V^C(u))$ 

To show this however, it suffices to use the fact that Voronoi best response sets are closed. Suppose that $u$ has a certain set $S \subset [n]$ of Voronoi best responses. Then there exists a constant $\mu > 0$ such that $B_\mu(u) \cap VC_i \neq \emptyset$ if and only if $i \in S$; namely the $\mu$ neighbourhood around $u$ only intersects Voronoi best response sets from $u$'s Voronoi best responses.

To explicitly construct such a $\mu$, let us consider $D_i = d(u,\hat{C}_i)$ to be the distance between $u$ and the empirical best response set $\hat{C}_i$. This means that $S = \argmin_i D_i$, so let us define $D = \min_i D_i$ and $\mu = \frac{\min_{j \notin S} D_j - D}{3} $ which is positive due to the fact that there are finitely many partial best response sets. Now suppose that $x \in B_\mu(u)$, then for any $j \notin S$ we have $d(u,\hat{C}_j) \leq d(u,x) + d(x,\hat{C}_j)$ by the triangle inequality, which rearranging gives us: $d(x,\hat{C}_j) \geq d(u,\hat{C}_j) - d(u,x) \geq (D + 3\mu) - \mu = D +2\mu$. On the other hand, for any $ i \in S$, $d(x,\hat{C}_i) \leq d(x,u) + d(u,\hat{C}_i) \leq D + \mu$. It thus follows that $x$ can only have Voronoi best responses from $S$. 

Now from the fact that $u_n \rightarrow u$, then for some $N> 0$, if $n >N$ then $u_n \in B_\mu(u)$. This in turn means that $v_n \in conv(S)$ by assumption, which means that $v \in conv(S)$ as well, which is what we wanted to show. To extend this to $\sigma$ and $\sigma'$, it suffices to repeat the previous argument in each component of the correspondence.   
 
Now that the conditions of Kakutani's fixed point Theorem are satisfied, we know of the existence of an $(u^*,v^*)$ such that $(u^*,v^*) \in B^*(u^*,v^*)$. As in the statement of the Theorem, suppose that Voronoi best responses for $B^*$ arise from an $\frac{\epsilon}{2\sqrt{m-1}}$-close labelling of $\Delta^m$ and an $\frac{\epsilon}{2\sqrt{n-1}}$ of $\Delta^n$, then we know that all Voronoi best responses are $\epsilon$ best responses for both players. The conditions of the fixed point amount to saying that both players are playing convex combinations of Voronoi best responses, therefore $(u^*,v^*)$ is an $\epsilon$-WSNE. 
\end{proof} 

With Theorem \ref{thm:discrete-Nash} in hand and our algorithms for constructing $\epsilon$-close labellings, we can put everything together and prove our desired results regarding the query complexity of computing an $\epsilon$-WSNE in general bimatrix games.

\begin{theorem}\label{thm:final-nash-BR}
Suppose that $G$ is an $m \times n$ bimatrix game and let $n$ be constant. We can compute an $\epsilon$-WSNE using $O(m^{n^2}\log^{n^2} \left( \frac{m}{\epsilon} \right) ) = poly(m, \log \left( \frac{1}{\epsilon} \right) )$ adversarial best response queries.
\end{theorem}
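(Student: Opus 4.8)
The plan is to reduce directly to the machinery already assembled in Sections~\ref{sec:uepp}--\ref{sec:discrete-nash}: realise the two relevant best-response correspondences as upper envelope polytope partitions, compute sufficiently fine empirical labellings of them with our two algorithms, and then invoke Theorem~\ref{thm:discrete-Nash} to extract an $\epsilon$-WSNE from those labellings without any further queries. First I would recall that the column-player best-response sets $\mathscr{C}=\{C_i\}_{i=1}^n$ live over the row player's mixed-strategy space $\Delta^{m-1}$, so by the remark following Definition~\ref{BR-sets} they form an $(m-1,n)$-UEPP; symmetrically the row-player best-response sets $\mathscr{R}=\{R_j\}_{j=1}^m$ form an $(n-1,m)$-UEPP over $\Delta^{n-1}$. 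The adversarial best-response oracles $BR^C_A$ and $BR^R_A$ are exactly adversarial membership oracles (Definition~\ref{def:query-oracle}) for these two partitions, and since both partitions are UEPPs they are proper (Lemma~\ref{lemma:structure-uepp}) and all of our algorithms apply to them verbatim.

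The key point is that the roles of ``ambient dimension'' and ``number of regions'' are swapped between $\mathscr{C}$ and $\mathscr{R}$, which is precisely why both CD-GBS and CR-GBS are needed. For $\mathscr{C}$ the number of regions $n$ is constant while the dimension $m-1$ varies, so I would run adversarial CR-GBS (Theorem~\ref{thm:full-gbs-guarantee-adversarial}) to produce an empirical labelling $\hat{\mathscr{C}}$ that is a $\tfrac{\epsilon}{2\sqrt{m-1}}$-close labelling of $\mathscr{C}$; because $n$ is constant we have $\log(\sqrt{m-1}/\epsilon)=O(\log(m/\epsilon))$, so this costs $O\big(m^{\binom{n}{2}}\log^{\binom{n}{2}}(m/\epsilon)\big)$ queries. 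For $\mathscr{R}$ the dimension $n-1$ is constant while the number of regions $m$ varies, so I would instead run adversarial CD-GBS (Theorem~\ref{thm:adv-CDGBS}) to produce an empirical labelling $\hat{\mathscr{R}}$ that is a $\tfrac{\epsilon}{2\sqrt{n-1}}$-close labelling of $\mathscr{R}$, at a cost of $O\big(m^{(n-1)^2}\log^{n-1}(m/\epsilon)\big)$ queries (using the bound of Theorem~\ref{thm:genbinsrch-correct} with the constant ``dimension'' equal to $n-1$ and the varying ``number of regions'' equal to $m$, and absorbing the constant factor $\sqrt{n-1}$). In the degenerate regime where $m-1$ is too small for CR-GBS to be meaningful --- i.e. $m$ is itself bounded by a constant depending only on $n$ --- one simply applies CD-GBS directly to $\mathscr{C}$, and the resulting bound is trivially polynomial in $m$.

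Finally I would feed $\hat{\mathscr{C}}$ and $\hat{\mathscr{R}}$ into Theorem~\ref{thm:discrete-Nash}: since $\hat{\mathscr{C}}$ is a $\tfrac{\epsilon}{2\sqrt{m-1}}$-close labelling and $\hat{\mathscr{R}}$ is a $\tfrac{\epsilon}{2\sqrt{n-1}}$-close labelling, the Voronoi approximate best-response correspondence $B^*$ (Definition~\ref{def:VorBR-correspondance}) built from them satisfies the hypotheses of Kakutani's theorem and has a fixed point $(u^*,v^*)$, which that theorem certifies to be an $\epsilon$-WSNE of $G$. Crucially, $(u^*,v^*)$ is determined by the empirical labellings already in hand, so producing it requires no additional best-response queries. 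Summing the two query costs and using $\binom{n}{2}\le n^2$, $(n-1)^2\le n^2$ and $n-1\le n^2$, the total is $O\big(m^{n^2}\log^{n^2}(m/\epsilon)\big)=poly(m,\log(1/\epsilon))$, as claimed.

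I would not expect a serious obstacle: the statement is essentially a synthesis of Theorems~\ref{thm:adv-CDGBS}, \ref{thm:full-gbs-guarantee-adversarial} and \ref{thm:discrete-Nash}, and the only points demanding care are (i) correctly matching each of $\mathscr{C}$ and $\mathscr{R}$ to the algorithm whose constant parameter it fits, (ii) checking that the Lipschitz factors $\sqrt{m-1}$ and $\sqrt{n-1}$ only enter inside a logarithm (for the varying parameter) or as a constant (for the fixed parameter), so they do not affect the stated bound, and (iii) the elementary exponent comparisons that fold $\binom{n}{2}$ and $(n-1)^2$ into $n^2$.
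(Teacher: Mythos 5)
Your proof is correct and follows the same route as the paper: realise $\mathscr{C}$ and $\mathscr{R}$ as upper envelope polytope partitions, run adversarial CR-GBS on $\mathscr{C}$ and adversarial CD-GBS on $\mathscr{R}$ to the tolerances $\tfrac{\epsilon}{2\sqrt{m-1}}$ and $\tfrac{\epsilon}{2\sqrt{n-1}}$, and invoke Theorem~\ref{thm:discrete-Nash} to extract the $\epsilon$-WSNE. Two minor points in your favour: you correctly read off $\log^{n-1}$ rather than $\log^{(n-1)^2}$ from Theorem~\ref{thm:adv-CDGBS} for the CD-GBS call on $\mathscr{R}$ (a slightly tighter intermediate bound than the one written in the paper's proof, though both are absorbed into $\log^{n^2}$), and you explicitly flag the regime $m \le \binom{n}{2}$ where CR-GBS's hypothesis fails, which the paper leaves implicit.
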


\begin{proof}
Suppose that $\mathscr{C}$ and $\mathscr{R}$ are the polytope partitions arising from best-response sets in $G$. This means that $\mathscr{C}$ is a $(m-1,n)$-polytope partition and $\mathscr{R}$ is a $(n-1,m)$-polytope partition. Let $\epsilon_C = \frac{\epsilon}{2\sqrt{m-1}}$ and $\epsilon_R = \frac{\epsilon}{2\sqrt{n-1}}$. From Theorem \ref{thm:discrete-Nash}, we know that computing an $\epsilon_C$-close labelling of $\mathscr{C}$ and a $\epsilon_R$-close labelling of $\mathscr{R}$ suffice to compute an $\epsilon$-WSNE of $G$. We use adversarial CR-GBS on $\mathscr{C}$ and adversarial CD-GBS on $\mathscr{R}$.

$n$ is the number of polytopes in the partition $\mathscr{C}$, which is assumed to be constant. Consequently, Theorem \ref{thm:full-gbs-guarantee-adversarial} states that computing an $\epsilon_C$-close labelling of $\mathscr{C}$ using CR-GBS uses $O((m-1)^k\log^k \left( \frac{m-1}{\epsilon_C} \right))$ adversarial queries, where $k = \binom {n} {2}$. Since $k \leq n^2$, we can upper bound the number of queries by $O(m^{n^2}\log^{n^2} \left( \frac{m}{\epsilon} \right) )$.

$n-1$ is the dimension of the ambient simplex in the partition $\mathscr{R}$, which is assumed to be finite. Consequently, Theorem \ref{thm:adv-CDGBS} states that computing an $\epsilon_R$-close labelling of $\mathscr{R}$ using CD-GBS uses $O(m^{(n-1)^2}\log^{(n-1)^2} \left( \frac{1}{\epsilon} \right) )$ queries. We trivially upper bound this quantity by $O(m^{n^2}\log^{n^2} \left( \frac{m}{\epsilon} \right) )$.

Putting everything together, the total query usage is thus $O(m^{n^2}\log^{n^2} \left( \frac{m}{\epsilon} \right) ) = poly(m, \log \left( \frac{1}{\epsilon} \right) )$ as desired.
\end{proof}

\section{A Brief Foray into Multiplayer Games}

In this section we partially extend our results from sections \ref{sec:nash-lipschitz} and \ref{sec:discrete-nash}. In particular, we show that in multiplayer games utility functions are Lipschitz continuous as well, which allows us to uncover approximate best-response information when observing different best responses at ``nearby'' mixed strategy profiles. In addition we generalise our definitions of best response sets and of $\epsilon$-close labellings to obtain a result similar to Theorem \ref{thm:discrete-Nash} where we showed that obtaining a precise enough empirical labelling provides enough information to compute a well-supported approximate Nash equilibrium. 

The main difference in the multiplayer setting however is that best response sets are no longer polytopes nor convex, which means that our algorithms for computing empirical labellings via generalisations of binary search no longer apply. This does not preclude us however from simply querying an $\epsilon$-net, which as we will show will suffice for computing an $\epsilon$-WSNE.

\subsection{Notation for Multiplayer Games}

For simplicity we will focus on games with $n$ players where each player has a strategy set $A_i$ consisting of $|A_i| = k$ pure strategies. It is straightforward to extend our results to more general games where different players have action sets of different cardinalities. 

In general, we let $A = \prod_{i=1}^n A_i$ be the space of all pure strategy profiles of all players. For the $i$-th player, we also denote $A_{-i} = \prod_{j \neq i} A_j$ as the space of all pure strategy profiles of players other than $i$. Since every player has $k$ actions, it is straightforward to see that all $A_{-i}$ are isomorphic, hence without loss of generality we can assume that for all $i$, $A_{-i}$ is canonical representation. For a given pure strategy profile $a \in A$, we may wish to distinguish the pure strategy taken by the $i$-th player, and this is done by writing $a = (a_i, a_{-i})$ with $a_i \in A_i$ and $a_{-i} \in A_{-i}$.

We denote the $i$-th player's mixed strategy space by $\Delta(A)_i$, and we note it is equivalent to $\Delta^{k-1}$. This means that the space of mixed strategy profiles of all players is $\Delta(A) = \prod_{i=1}^n (\Delta^{k-1}) = (\Delta^{k-1})^n$. In addition, for the $i$-th player, we also denote $\Delta(A)_{-i} = \prod_{i=1}^n (\Delta^{k-1}) = (\Delta^{k-1})^{n-1}$ as the space of all mixed strategy profiles of players other than $i$. Once again, since all players have $k$ actions, we can assume without loss of generality that all $\Delta(A)_{-i}$ have the same canonical representation. For a mixed strategy profile $x \in \Delta(A)$, we may wish to distinguish the mixed strategy of the $i$-th player by writing $x = (x_i, x_{-i})$ with $x_i \in \Delta(A)_i$ and $x_{-i} \in \Delta(A)_{-i}$.

\begin{definition}[Multiplayer Utility Functions] \label{def:multi-utility}
For any player $i$ and action $r \in A$, we denote $U_i^r:A_{-i} \rightarrow [0,1]$ as the $i$-th player's utility for playing $r$. If $a = (a_i,a_{-i}) \in A$ is a pure strategy profile, the utility player $i$ receives is $U_i^{a_i}(a_{-i})$ which we denote by $U_i(a)$. 
\end{definition}

Utility functions are defined for pure strategy profiles, but with a slight abuse of notation we extend the domain to include mixed strategy profiles. In particular, if $x = (x_i,x_{-i})$ is a mixed strategy profile, we let $U_i^r(x_{-i}) = \mathbb{E}_{a_{-i} \sim x_{-i}}(U_i^r(a_{-i}))$ and by extension $U_i(x) = \mathbb{E}_{a \sim x} (U_i(a))$. 

As in bimatrix games, for a given player $i$ and $x_{-i} \in \Delta(A)_{-i}$, we let $E_i(x_{-i}) = \max_{r \in A_i} U_i^r(x_{-i})$ be the maximal expected utility player $i$ can obtain against the mixed strategy profile $x_{-i}$ of all other players.  

\begin{definition}[Best Response Query Oracles]\label{def:BR-oracle-multiplayer}
Let $G$ be an $n$-player game where each player has $k$ pure strategies:
\begin{itemize}
    \item There are $n$ strong best response oracles denoted $BR^i_s: \Delta(A)_{-i} \rightarrow \mathcal{P}(A_i)$ for $i=1,...,n$. Each of these is defined by $BR^i_s(x) = \{r \in A_i \ | \ U_i^r(x) = E_i(x)\}$.  
    \item There are $n$ lexicographic best response oracles denoted $BR^i_\ell: \Delta(A)_{-i} \rightarrow A_i$ for $i=1,...,n$. Each of these is defined by $BR^i_\ell(x) = \argmin_{r \in A_i} r \in BR^i_s(x)$ for some consistent order on each $A_i$.  
    \item An adversarial best response oracle collection is a collection of $n$ functions denoted $BR^i_A: \Delta(A)_{-i} \rightarrow A_i$ for $i=1,...,n$. Each of these satisfies $BR^i_A(x) \in BR^i_s(x)$. 
\end{itemize}
\end{definition}

For completeness we have defined all best response oracles, but we focus on adversarial best response oracles. As mentioned before, they are the weakest from the fact that a lexicographic oracle is a valid adversarial oracle and the fact that adversarial oracles can be simulted with strong best response oracles. This implies that our results for adversarial oracles carry over to other oracle models.

For a given mixed strategy, $x \in \Delta(A)_i$, we say the support of $x$ is the set of pure strategies that are played in $x$ with non-zero probability. It will be useful to formulate this as a function in order to define Nash equilibria combinatorially again.

\begin{definition}[Support Functions] 
Let $S^i : \Delta(A)_i \rightarrow \mathcal{P}(A_i)$ be the function which returns the support of the $i$-th player's mixed strategy.
\end{definition}

We are now in a position to define what a Nash equilibrium is in multiplayer games.

\begin{definition}[Nash Equilibrium]\label{def:nash-multiplayer}
We say $x \in \Delta(A)$ is a Nash Equilibrium (NE) if for any player $i$, and $x_i' \in \Delta(A)_i$ it holds that $U_i(x) \geq U_i(x'_i, x_{-i})$

\end{definition}

As before, this definition involves utility values of players at their mixed strategy profiles. Once again, there is an equivalent combinatorial formulation.

\begin{proposition}\label{prop:comb-NE-multiplayer}
$x \in \Delta(A)$ is a NE if and only if for all $i$, when $x = (x_i, x_{-i})$, $S^i(x_i) \subseteq BR^i_s(x_{-i})$.
\end{proposition}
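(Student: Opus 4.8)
The plan is to prove both implications from the single structural fact that each player's expected utility is linear in her own mixed strategy. Writing $x=(x_i,x_{-i})$ and unpacking Definition~\ref{def:multi-utility}: since under the product distribution $x$ the coordinates $a_i$ and $a_{-i}$ are independent, $U_i(x)=\mathbb{E}_{a\sim x}\big(U_i^{a_i}(a_{-i})\big)=\sum_{r\in A_i}x_i(r)\,\mathbb{E}_{a_{-i}\sim x_{-i}}\big(U_i^r(a_{-i})\big)=\sum_{r\in A_i}x_i(r)\,U_i^r(x_{-i})$. The point to emphasise is that the coefficients $U_i^r(x_{-i})$ do not depend on $x_i$, so $U_i(x)$ is a convex combination of the numbers $\{U_i^r(x_{-i}):r\in S^i(x_i)\}$, each of which is at most $E_i(x_{-i})=\max_{r\in A_i}U_i^r(x_{-i})$.

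For the ``only if'' direction I would argue by contraposition. Suppose the support condition fails, so there is a player $i$ and a pure strategy $r_0\in S^i(x_i)\setminus BR^i_s(x_{-i})$; then $x_i(r_0)>0$ and $U_i^{r_0}(x_{-i})<E_i(x_{-i})$. In the convex combination above, a term with strictly positive weight lies strictly below the maximum, so $U_i(x)<E_i(x_{-i})$. Taking $x_i'$ to be the pure strategy concentrated on any $r^*\in BR^i_s(x_{-i})$ gives $U_i(x_i',x_{-i})=E_i(x_{-i})>U_i(x)$, so $x$ is not a NE by Definition~\ref{def:nash-multiplayer}.

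For the ``if'' direction, suppose $S^i(x_i)\subseteq BR^i_s(x_{-i})$ for every $i$. Fix a player $i$. Every $r$ in the support of $x_i$ has $U_i^r(x_{-i})=E_i(x_{-i})$, so the convex combination identity yields $U_i(x)=E_i(x_{-i})$. For an arbitrary deviation $x_i'\in\Delta(A)_i$, $U_i(x_i',x_{-i})=\sum_{r\in A_i}x_i'(r)\,U_i^r(x_{-i})\le\sum_{r\in A_i}x_i'(r)\,E_i(x_{-i})=E_i(x_{-i})=U_i(x)$. Since $i$ and $x_i'$ were arbitrary, $x$ is a NE.

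I do not expect any genuine obstacle: the only step requiring care is making the linearity decomposition explicit, because $U_i^r$ is defined on pure opponent profiles and extended to mixed profiles only through an expectation, so one must justify factoring the product distribution. Once that identity is recorded, both directions are immediate and mirror the bimatrix argument behind Proposition~\ref{prop:comb-NE}.
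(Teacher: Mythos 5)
Your proof is correct, and it is the standard textbook argument; the paper itself states Proposition~\ref{prop:comb-NE-multiplayer} (like its bimatrix counterpart, Proposition~\ref{prop:comb-NE}) without proof, treating it as folklore. The key identity you isolate, $U_i(x)=\sum_{r\in A_i}x_i(r)\,U_i^r(x_{-i})$ with coefficients independent of $x_i$, is exactly what makes both directions immediate, and your contrapositive argument for the ``only if'' direction and the domination argument for the ``if'' direction are the expected ones. Nothing is missing; the care you flag about factoring the product distribution is warranted but routine.
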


Finally, we define what it means for a mixed strategy profile to be an $\epsilon$-WSNE in the multiplayer setting. Before proceeding, we say that for a given $x_{-i} \in \Delta(A)_{-i}$, strategy $x \in \Delta(A)_i$ is an $\epsilon$ best response if $U_i(x,x_{-i}) \geq U_i(x',x_{-i}) - \epsilon$ for all $x'\in \Delta(A)_i$. Intuitively, an $\epsilon$ best response is a mixed strategy where a player has only an $\epsilon$ incentive to deviate.

\begin{definition}[$\epsilon$-Well-Supported Nash Equilibrium]\label{def:eps-WSNE-multiplayer}
Suppose that $x \in \Delta(A)$ is a mixed strategy profile of all players We say that $x$ is an $\epsilon$-well-supported Nash equilibrium ($\epsilon$-WSNE) if and only if for every player $i$, all pure strategies in $S^i(x_i)$ are $\epsilon$ best responses to $x_{-i}$. 
\end{definition}

\subsection{Lipschitz Continuity of Utility Functions}

As in Section \ref{sec:nash-lipschitz}, we will show that for each player $i$ and each $r \in A_i$, $U_i^r$ is a Lipschitz continuous function. In order to do so, we must regard the domain of $U_i^r$, which is $\Delta(A)_{-i} = (\Delta^{k-1})^{n-1}$, as a subset of Euclidean space endowed with the $\ell_1$ norm.

\begin{lemma} \label{lemma:utility-lipschitz-multiplayer}
For any player $i$ and action $r \in A_i$, $U_i^r$ is 1-Lipschitz when the domain is endowed with the $\ell_1$ norm. 
\end{lemma}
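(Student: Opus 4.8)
The plan is to reduce the claim to a one-coordinate-at-a-time (hybrid) argument, exploiting the fact that a point $x_{-i}\in\Delta(A)_{-i}=(\Delta^{k-1})^{n-1}$ parametrises a \emph{product} distribution over $A_{-i}=\prod_{j\ne i}A_j$, and that $U_i^r$ is by definition the expectation of a $[0,1]$-valued function under that product distribution. Write $x_{-i}=(x_j)_{j\ne i}$ and $y_{-i}=(y_j)_{j\ne i}$ with $x_j,y_j\in\Delta^{k-1}$, so that $\|x_{-i}-y_{-i}\|_1=\sum_{j\ne i}\|x_j-y_j\|_1$. Fix any ordering of the players other than $i$ and, for $t=0,1,\dots,n-1$, let $z^{(t)}\in\Delta(A)_{-i}$ be the profile of marginals that agrees with $y$ on the first $t$ players in this order and with $x$ on the remaining ones, so $z^{(0)}=x_{-i}$ and $z^{(n-1)}=y_{-i}$. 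By the triangle inequality it suffices to bound $|U_i^r(z^{(t-1)})-U_i^r(z^{(t)})|$ for each $t$, where $z^{(t-1)}$ and $z^{(t)}$ differ only in the marginal of a single player $j$.

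For that single-coordinate step I would expand the expectation by conditioning on player $j$'s pure choice: letting $h(a_j)\in[0,1]$ denote the expected value of $U_i^r$ when player $j$ plays the pure strategy $a_j$ and every other player $\ne i$ plays according to the marginals common to $z^{(t-1)}$ and $z^{(t)}$, one gets $U_i^r(z^{(t-1)})-U_i^r(z^{(t)})=\sum_{a_j\in A_j}\bigl(x_j(a_j)-y_j(a_j)\bigr)h(a_j)$. Since $x_j$ and $y_j$ are probability vectors over $A_j$, we have $\sum_{a_j}(x_j(a_j)-y_j(a_j))=0$, so the sum is unchanged if $h$ is shifted by a constant; subtracting $\tfrac12$ and using $|h(a_j)-\tfrac12|\le\tfrac12$ gives $|U_i^r(z^{(t-1)})-U_i^r(z^{(t)})|\le\tfrac12\sum_{a_j}|x_j(a_j)-y_j(a_j)|$. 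Summing over the $n-1$ hybrid steps yields $|U_i^r(x_{-i})-U_i^r(y_{-i})|\le\sum_{j\ne i}\|x_j-y_j\|_1=\|x_{-i}-y_{-i}\|_1$ once one passes between the reduced ($\Delta^{k-1}\subset\mathbb{R}^{k-1}$) and the probability-vector representations exactly as in the proof of Lemma~\ref{lemma:utilities-lipschitz}; in fact in the probability-vector representation the constant improves to $\tfrac12$, but $1$ is all that is needed here.

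An equivalent route, which I would at least mention, is to invoke directly the standard fact that total variation distance is subadditive over the factors of a product distribution, $\mathrm{TV}\bigl(\bigotimes_{j\ne i}x_j,\bigotimes_{j\ne i}y_j\bigr)\le\sum_{j\ne i}\mathrm{TV}(x_j,y_j)$, together with the elementary inequality $|\mathbb{E}_p g-\mathbb{E}_q g|\le\mathrm{TV}(p,q)$ valid for any $g$ with values in $[0,1]$, and then rewrite $\mathrm{TV}=\tfrac12\|\cdot\|_1$ as before. There is no genuine obstacle in this lemma; the only points requiring care are the bookkeeping in the hybrid argument (making sure the ``other'' marginals are literally held fixed across a single step so that $h(a_j)$ is well defined) and the by-now-familiar translation between the reduced simplex coordinates used throughout the paper and the full probability vectors in which total variation distance is naturally expressed.
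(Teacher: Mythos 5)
Your proposal is correct and takes essentially the same route as the paper: a hybrid (one player's marginal at a time) decomposition, with each single-coordinate difference bounded by the $\ell_1$ distance between that player's marginals. The only cosmetic difference is that the paper dispatches the per-coordinate step by citing Lemma~\ref{lemma:utilities-lipschitz} (viewing the frozen game as a $k\times k$ bimatrix game), whereas you re-derive the same bound inline via the centering-by-$\tfrac12$ / total-variation argument.
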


\begin{proof}

Consider an arbitrary player $j$ with $j \neq i$. If we endow $\Delta(A)_j$ with the $\ell_1$ norm, from Lemma \ref{lemma:utilities-lipschitz}, we know that $U_i^r$ as a function of $x_j \in \Delta(A)_j$ (which is a component of $\Delta(A)_{-i}$) is 1-Lipschitz. This is because if all mixed strategies other than those of player $i$ and $j$ are fixed, we obtain a $k \times k$ bimatrix game between player $i$ and $j$. Now let us consider $x,y \in \Delta(A)_{-i}$.

\begin{equation} 
\begin{split}
|U_i^r(x) - U_i^r(y)| &  = \left| \sum_{j=1}^{n-1} U_i^r(y_1,..,y_{j},x_{j+1},...,x_n) - U_i^r(y_1,..,y_{j+1},x_{j+2},...,x_n) \right| \\
 & \leq \sum_{j=1}^{n-1} \left| U_i^r(y_1,..,y_{j},x_{j+1},...,x_n) - U_i^r(y_1,..,y_{j+1},x_{j+2},...,x_n) \right| \\
 & \leq \sum_{j=1}^n \|x_i - y_i\|_1 \\
 & = \|x - y\|_1
\end{split}
\end{equation}
\end{proof}

Since Lipschitz continuity is maintained over maxima, with the same Lipschitz constant, we get the following result that says that best responses to nearby mixed strategy profiles are in fact approximate best responses.

\begin{lemma} \label{lemma:multiplayer-lipschitz}
$E_i : \Delta(A)_{-i} \rightarrow [0,1]$ is 1-Lipschitz when its domain is endowed with the $\ell_1$ norm. In particular, if $x,y \in \Delta(A)_{-i}$ and $\|x - y\|_1 \leq \frac{\epsilon}{2}$, then any $r \in BR^i_s(y)$ is an $\epsilon$ best response to $x$ for player $i$.
\end{lemma}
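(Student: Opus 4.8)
The plan is to derive both assertions directly from Lemma~\ref{lemma:utility-lipschitz-multiplayer}, which already shows each $U_i^r$ is $1$-Lipschitz in the $\ell_1$ norm. For the first assertion, I would invoke the standard fact that a pointwise maximum of a finite family of $L$-Lipschitz functions is again $L$-Lipschitz. Concretely, for $x,y\in\Delta(A)_{-i}$ pick $r^\star\in A_i$ with $E_i(x)=U_i^{r^\star}(x)$; then $E_i(x)-E_i(y)=U_i^{r^\star}(x)-E_i(y)\le U_i^{r^\star}(x)-U_i^{r^\star}(y)\le\|x-y\|_1$, and by symmetry $E_i(y)-E_i(x)\le\|x-y\|_1$, so $|E_i(x)-E_i(y)|\le\|x-y\|_1$, i.e. $E_i$ is $1$-Lipschitz.

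For the second assertion, suppose $\|x-y\|_1\le\tfrac{\epsilon}{2}$ and $r\in BR^i_s(y)$, so by definition $U_i^r(y)=E_i(y)$. To conclude that $r$ is an $\epsilon$ best response to $x$ for player $i$ it suffices to show $U_i^r(x)\ge E_i(x)-\epsilon$. I would chain three estimates through the point $y$: by $1$-Lipschitzness of $U_i^r$, $U_i^r(x)\ge U_i^r(y)-\|x-y\|_1$; by the equality $U_i^r(y)=E_i(y)$; and by $1$-Lipschitzness of $E_i$ (the first assertion), $E_i(y)\ge E_i(x)-\|x-y\|_1$. Combining these gives $U_i^r(x)\ge E_i(x)-2\|x-y\|_1\ge E_i(x)-\epsilon$, as desired.

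There is essentially no obstacle; this statement is the multiplayer analogue of Lemma~\ref{lemma:close-epsBR}, with the $\ell_1$ norm playing the role that keeps every Lipschitz constant equal to $1$. The only point needing (minor) care is the factor of $2$ in the hypothesis: the gap $|E_i(x)-U_i^r(x)|$ is bounded by passing through $y$ twice---once via $U_i^r$ and once via $E_i$---rather than by a single Lipschitz step, which is precisely why the threshold is $\tfrac{\epsilon}{2}$ and not $\epsilon$.
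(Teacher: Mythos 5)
Your proof is correct and matches the approach the paper intends: the paper states Lemma~\ref{lemma:multiplayer-lipschitz} without an explicit proof, appealing in one sentence to the fact that pointwise maxima preserve Lipschitz constants, and your first paragraph supplies exactly that standard argument. Your second paragraph is the multiplayer $\ell_1$ analogue of the paper's proof of Lemma~\ref{lemma:close-epsBR}: there the paper routes the bound through the utility of the best response at the queried point and the best response at the original point, picking up $2\lambda$ over two Lipschitz steps, while you route through $U_i^r$ and $E_i$ once each; these are the same estimate written in equivalent orders, both correctly explaining the factor of $2$ in the $\epsilon/2$ hypothesis.
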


\subsection{Nash's Theorem with Discrete Approximations in Multiplayer Games}

Just as in bimatrix games, we have a notion of best response sets.

\begin{definition}[Multiplayer Best Response Sets] \label{def:multiplayer-BR-sets}
Let $G$ be a game with $n$ players, each with $k$ strategies. For a player $i$ and pure strategy $r \in A_i$, we define $P_i^j = \{x \in \Delta(A)_{-i} \ | \ BR^i_s(x) = r\}$. We say that $P_i^j$ is a best response set corresponding to strategy $r$ for player $i$, and note that $\{P_i^r\}_{r \in A_i}$ cover $\Delta(A)_{-i}$. 
\end{definition}

In bimatrix games our goal was to learn $\epsilon$-close labellings of the polytope partitions induced by best response sets. In the multiplayer setting that goal can be generalised.

\begin{definition}[Multiplayer $\epsilon$-close labellings]

Let $G$ be a game with $n$ players, each with $k$ actions, and let $i$ be a specific player in the game. Suppose that for each action $r \in A_i$, $\hat{P}_i^r \subseteq P_i^r$ is a closed set. We say the collection $\{\hat{P}_i^r\}_{r \in A_i}$ is an empirical labelling of $\Delta(A)_{-i}$. If in addition $\bigcup_{r \in A_i} \hat{P}_i^r$ is an $\epsilon$-net of $\Delta(A)_{-i}$ in the $\ell_1$ norm, we say that the collection $\{\hat{P}_i^r\}_{r \in A_i}$ is an $\epsilon$-close labelling of $\Delta(A)_{-i}$.
\end{definition}

As we will see shortly, if we manage to compute an $\frac{\epsilon}{2}$-close labelling for all $\Delta(A)_{-i}$, we have enough information to compute an approximate equilibrium. 

In bimatrix games, each $P_i^r$ is a polytope, but in multiplayer games, expected utilities are no longer linear in $\Delta(A)_{-i}$. Consequently, best response sets are semi-algebraic sets instead. This means that in general best response set are not connected and thus not convex. Without convexity and polytope structure we can no longer use binary search methods to learn $\epsilon$-close labellings. As mentioned before, this does not preclude us from computing an $\epsilon$-close labelling via a brute force method of querying an entire $\epsilon$-net of $\Delta(A)_{-i}$. Before we show this suffices however, we show the key result of this section: computing $\frac{\epsilon}{2}$-close labellings for all $\Delta(A)_{-i}$ suffices to compute $\epsilon$-WSNE. To do so we revisit Voronoi best response sets. In what follows we let $d(x,S)$ denote the infimum distance of a point, $x$ to a set $S$ in the $\ell_1$ norm.

\begin{definition}[Multiplayer Voronoi Best Response Functions and Best Response Sets]
Let $G$ be a game with $n$ players, each with $k$ actions, and let $i$ be a specific player in the game. Suppose that $\{\hat{P}_i^r\}_{r \in A_i}$ is an empirical labelling of $\Delta(A)_{-i}$. Player $i$'s Voronoi Best Response function is denoted by $V^i: \Delta(A)_{-i} \rightarrow A_i$. The function is defined as $V^i(x) = \argmin_{r \in A_i} d(x,\hat{P}_i^r)$. We also define player $i$'s Voronoi Best Response Sets as $V_i^r = \{x \in \Delta(A)_{-i} \ | \ V^i(x) = r \}$.
\end{definition}

If we invoke Lemma \ref{lemma:multiplayer-lipschitz}, we obtain the same result as in bimatrix games whereby Voronoi best responses are actually approximate best responses.

\begin{lemma}\label{lemma:multiplayer-voronoi-approximateBR}
Suppose that $\{\hat{P}_i^r\}_{r \in A_i}$ is an $\frac{\epsilon}{2}$-close labelling of $\Delta(A)_{-i}$, then Voronoi Best Response for player $i$ are $\epsilon$ Best Responses in $G$.
\end{lemma}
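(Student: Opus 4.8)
The plan is to fix an arbitrary $x \in \Delta(A)_{-i}$, write $r = V^i(x)$ for the Voronoi best response of player $i$ at $x$, and show directly that $r$ is an $\epsilon$ best response to $x$ for player $i$. The argument is a short chain: use the $\frac{\epsilon}{2}$-net property to locate a labelled point near $x$, use the $\argmin$ definition of $V^i$ to transfer that proximity to the set $\hat{P}_i^r$ itself, and then invoke the Lipschitz estimate from Lemma~\ref{lemma:multiplayer-lipschitz}.

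First I would use that $\bigcup_{r' \in A_i} \hat{P}_i^{r'}$ is an $\frac{\epsilon}{2}$-net of $\Delta(A)_{-i}$ in the $\ell_1$ norm: there is a point $y$ in this union with $\|x - y\|_1 \leq \frac{\epsilon}{2}$, and $y \in \hat{P}_i^{r'}$ for some $r' \in A_i$. Hence $d(x, \hat{P}_i^{r'}) \leq \|x-y\|_1 \leq \frac{\epsilon}{2}$. Since $r = V^i(x) = \argmin_{r'' \in A_i} d(x, \hat{P}_i^{r''})$, it follows that $d(x, \hat{P}_i^r) \leq d(x, \hat{P}_i^{r'}) \leq \frac{\epsilon}{2}$. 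Because $\hat{P}_i^r$ is closed and $\Delta(A)_{-i}$ is compact (so $\hat{P}_i^r$ is compact), the infimum defining $d(x, \hat{P}_i^r)$ is attained at some $z \in \hat{P}_i^r$ with $\|x - z\|_1 \leq \frac{\epsilon}{2}$.

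Next, since $\hat{P}_i^r \subseteq P_i^r$ and $P_i^r$ consists exactly of those $w \in \Delta(A)_{-i}$ with $r \in BR^i_s(w)$, we have $r \in BR^i_s(z)$. I would then apply Lemma~\ref{lemma:multiplayer-lipschitz} to the pair $(x, z)$: as $\|x - z\|_1 \leq \frac{\epsilon}{2}$ and $r \in BR^i_s(z)$, the lemma yields that $r$ is an $\epsilon$ best response to $x$ for player $i$, i.e. $U_i^r(x) \geq E_i(x) - \epsilon$. Since $x$ was arbitrary, every Voronoi best response of player $i$ is an $\epsilon$ best response in $G$, which is the claim.

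There is no real obstacle here; the only step requiring a moment of care is the attainment of the minimising distance, which is handled by closedness of the empirical polytopes $\hat{P}_i^r$ together with compactness of $\Delta(A)_{-i}$ (alternatively, one argues with a minimising sequence $z_t \to z$ and passes to the limit using continuity of the $\ell_1$ distance, exactly as in the proof of Lemma~\ref{lemma:voronoi-closed}). Everything else is bookkeeping with the $\argmin$ definition followed by a single invocation of the Lipschitz bound in Lemma~\ref{lemma:multiplayer-lipschitz}.
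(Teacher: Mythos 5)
Your argument is correct and coincides with the route the paper intends: the paper itself supplies no written proof of this lemma, only the remark that it follows from Lemma~\ref{lemma:multiplayer-lipschitz}, and your chain (use the $\frac{\epsilon}{2}$-net to find a labelled point, use the $\argmin$ in the definition of $V^i$ to transfer the $\frac{\epsilon}{2}$ bound to $d(x,\hat{P}_i^r)$, use closedness to attain the distance at some $z\in\hat{P}_i^r\subseteq P_i^r$, and then invoke the Lipschitz bound on the pair $(x,z)$) is exactly the missing bookkeeping. The one tiny phrasing nit is that $V^i(x)$ is best read as the set of all minimisers (consistent with the use of $conv(V^i(x_{-i}))$ in the correspondence $B^*$), so you should take an arbitrary $r\in V^i(x)$ rather than "$r=V^i(x)$"; your argument handles any such $r$ without change.
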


In addition, our definition of empirical labellings stipulated that $\hat{P}_i^r$ are all closed sets. This is a property which is inherited by Voronoi Best Response Sets. 

\begin{lemma}
Voronoi best response sets are closed.
\end{lemma}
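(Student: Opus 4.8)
The plan is to realise each Voronoi best response set as a finite intersection of closed sets. For a fixed player $i$, unwinding the definition of $V^i$ gives
$$
V_i^r = \{ x \in \Delta(A)_{-i} \ | \ d(x,\hat{P}_i^r) \leq d(x,\hat{P}_i^{r'}) \text{ for all } r' \in A_i \},
$$
using the convention $d(x,\emptyset) = +\infty$; thus if $\hat{P}_i^r = \emptyset$ then $V_i^r$ is either empty or (in the degenerate case where every $\hat{P}_i^{r'}$ is empty) all of $\Delta(A)_{-i}$, and in either case closed, so from now on I would assume $\hat{P}_i^r \neq \emptyset$.

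Next I would record the standard fact that for any nonempty $S \subseteq \Delta(A)_{-i}$ the function $g_S(x) = d(x,S) = \inf_{s \in S} \|x - s\|_1$ is $1$-Lipschitz in the $\ell_1$ norm, hence continuous: for $x,y \in \Delta(A)_{-i}$ and $s \in S$, the triangle inequality gives $d(x,S) \leq \|x - y\|_1 + \|y - s\|_1$, and taking the infimum over $s$ yields $d(x,S) - d(y,S) \leq \|x - y\|_1$; swapping $x$ and $y$ gives the reverse inequality. This uses nothing of $S$ beyond nonemptiness (in particular it does not even invoke closedness of the $\hat{P}_i^{r'}$, since $d(\cdot,S)$ depends only on $\overline{S}$).

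Finally, for each $r' \in A_i$ with $\hat{P}_i^{r'} \neq \emptyset$, the function $h_{r,r'}(x) = d(x,\hat{P}_i^r) - d(x,\hat{P}_i^{r'})$ is continuous on $\Delta(A)_{-i}$, so $\{x : h_{r,r'}(x) \leq 0\}$ is closed. Discarding the vacuous constraints coming from empty $\hat{P}_i^{r'}$, the set $V_i^r$ is the intersection of these finitely many closed sets with the compact set $\Delta(A)_{-i} = (\Delta^{k-1})^{n-1}$, and a finite intersection of closed sets is closed, which is what we want.

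I do not anticipate a genuine obstacle here: the argument is just the $\ell_1$-analogue of the closedness portion of Lemma~\ref{lemma:voronoi-closed}, and it deliberately avoids any appeal to convexity or polytope structure, both of which fail for multiplayer best response sets. The only points warranting care are (i) the tie-breaking convention --- distinct $V_i^r$ may overlap on their boundaries, so ``each $V_i^r$ is closed'' (rather than ``disjoint partition'') is the correct statement --- and (ii) the degenerate empty-label case, dispatched as above.
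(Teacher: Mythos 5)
Your proof is correct and hinges on the same fact the paper uses --- continuity of the $\ell_1$ distance-to-set function --- but you realise $V_i^r$ as a finite intersection of closed sublevel sets, whereas the paper simply points back to the sequential limit-point argument in Lemma~\ref{lemma:voronoi-closed} and notes that $\ell_1$ distance is still continuous. You are also more careful than the paper about the empty-label degenerate case and the tie-breaking convention, both of which the paper glosses over.
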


\begin{proof}
The proof is identical to Lemma \ref{lemma:voronoi-closed} since $\ell_1$ distance is still a continuous function of the relevant domain $\Delta(A)_{-i}$. 
\end{proof}

We now define the generalisation to the Voronoi Best Response Correspondence from before.

\begin{definition}[Multiplayer Voronoi Best Response Correspondence]

Suppose that $x \in \Delta(A)$ is a mixed strategy profile. We define the Voronoi Best Response Correspondence $B^*: \Delta(A) \rightarrow \mathcal{P}( \Delta(A))$ as follows:
$$
B^*(x) = \prod_{i=1}^n conv(V^i(x_{-i})) \subseteq \Delta(A)
$$
\end{definition}

\begin{theorem}\label{thm:discrete-nash-multiplayer}
$B^*$ satisfies all the conditions of Kakutani's fixed point Theorem, and hence there exists a mixed strategy profile $x^* \in \Delta(A)$ such that $x^* \in B^*(x^*)$. In particular, if the Voronoi best response for $B^*$ arise from $\frac{\epsilon}{2}$-close labellings for all $\Delta(A)_{-i}$, then this in turn implies that $x^*$ is an $\epsilon$-WSNE.
\end{theorem}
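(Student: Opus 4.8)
The plan is to follow the same Kakutani-theoretic argument used in the proof of Theorem~\ref{thm:discrete-Nash}, adapted to the product structure $\Delta(A) = (\Delta^{k-1})^n$. First I would dispatch the three easy hypotheses of Theorem~\ref{thm:kakutani}: the domain $\Delta(A)$ is a finite product of simplices, hence compact and convex; for every $x \in \Delta(A)$ and every player $i$ the argmin defining $V^i(x_{-i})$ ranges over the finitely many sets $\hat{P}_i^r$, so $V^i(x_{-i})$ is a non-empty subset of $A_i$ and $conv(V^i(x_{-i}))$ is non-empty; and $B^*(x) = \prod_{i=1}^n conv(V^i(x_{-i}))$ is a product of convex sets, hence convex.

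The main work is the closed-graph condition, where I would reuse the neighbourhood argument from Theorem~\ref{thm:discrete-Nash}, exploiting the fact (established above) that Voronoi best response sets are closed. Fix a player $i$ and suppose $x_{-i}$ has Voronoi best response set $S = V^i(x_{-i}) \subseteq A_i$. Writing $D_r = d(x_{-i}, \hat{P}_i^r)$, $D = \min_r D_r$, and $\mu = \tfrac{1}{3}\left(\min_{r \notin S} D_r - D\right) > 0$ (positive since there are finitely many labels), the triangle inequality in the $\ell_1$ norm shows that every point of the ball $B_\mu(x_{-i})$ has its Voronoi best responses contained in $S$. Now given sequences $x^{(t)} \to x$ in $\Delta(A)$ and $y^{(t)} \in B^*(x^{(t)})$ with $y^{(t)} \to y$, for $t$ large we have $x^{(t)}_{-i} \in B_\mu(x_{-i})$, so $y^{(t)}_i \in conv(V^i(x^{(t)}_{-i})) \subseteq conv(S)$; since $conv(S)$ is closed this forces $y_i \in conv(S) = conv(V^i(x_{-i}))$. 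Applying this coordinatewise for each $i$ gives $y \in B^*(x)$, establishing the closed graph.

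With all four hypotheses verified, Theorem~\ref{thm:kakutani} yields a fixed point $x^* \in B^*(x^*)$, i.e.\ $x^*_i \in conv(V^i(x^*_{-i}))$ for every $i$, which says precisely that each player plays a mixture supported on Voronoi best responses to the others' strategies, so $S^i(x^*_i) \subseteq V^i(x^*_{-i})$. Finally, if the Voronoi best responses arise from $\tfrac{\epsilon}{2}$-close labellings of every $\Delta(A)_{-i}$, then Lemma~\ref{lemma:multiplayer-voronoi-approximateBR} tells us every Voronoi best response is an $\epsilon$ best response, hence all pure strategies in $S^i(x^*_i)$ are $\epsilon$ best responses to $x^*_{-i}$; by Definition~\ref{def:eps-WSNE-multiplayer} this makes $x^*$ an $\epsilon$-WSNE. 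The only point requiring care beyond transcribing the bimatrix argument is tracking the product structure in the closed-graph step, which I expect to be the main (though minor) obstacle; everything else is routine.
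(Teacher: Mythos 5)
Your proposal is correct and follows essentially the same Kakutani-theoretic structure as the paper's proof: dispatch the easy hypotheses, reduce the closed-graph condition to a per-player statement about $conv(V^i(\cdot))$, establish a local neighbourhood around $x_{-i}$ in which Voronoi best responses are confined to $V^i(x_{-i})$, and then invoke Lemma~\ref{lemma:multiplayer-voronoi-approximateBR}. The one small divergence is in how the neighbourhood is obtained: you carry over the explicit $\mu = \tfrac{1}{3}\bigl(\min_{r\notin S} D_r - D\bigr)$ construction from the bimatrix proof of Theorem~\ref{thm:discrete-Nash}, whereas the paper's multiplayer proof instead argues by contradiction, building a sequence converging to $x_{-i}$ inside some $V_i^{r'}$ with $r'\notin V^i(x_{-i})$ and using closedness of Voronoi sets to force $x_{-i}\in V_i^{r'}$; your direct construction is the more concrete of the two and is equally valid.
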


\begin{proof}
As in the proof of Theorem \ref{thm:discrete-Nash}, the first three conditions of Kakutani's fixed point theorem are trivial. We focus on proving that $B^*$ has a closed graph. It suffices to show the following for any player $i$: if $\{x_n\}$ is a sequence in $\Delta(A)_{-i}$ that converges to $x$, and $\{y_n\}$ is a sequence in $\Delta(A)_i$ converging to $y$ with the property that $y_n \in conv(V^i(x_n))$ for all $n$, then $y \in conv(V^i(x))$.

To show this, we prove that there exists a constant $\delta > 0$ with the property that $B^1_\delta(x) \cap V_i^r$ if and only if $r \in V_i(x)$. Here $B^1_\delta(x)$ denotes the $\ell_1$ ball of radius $\delta$ around $x$. We prove this claim by contradiction.

Suppose instead that for every $\delta > 0$, $B^1_\delta(x)$ contains some point from a $V_i^r$, where $r \notin V^i(x)$. Let $\delta_1 > 0 $ be arbitrary, and let $z_1$ be the guaranteed point in $B^1_{\delta_1}(x)$ that is contained in a collection of $V_i^r$ where none belong to $V^i(x)$. Let $\delta_2 = \|x - z_1\|_1$. We continue in this fashion where for a given $\delta_k > 0$, we let $z_k \in B^1_{\delta_k}(x)$ be a point contained in a collection of $V_i^r$ where none belong to $V^i(x)$. Accordingly we define $\delta_{k+1} = \|x - z_k\|_1$. Since we can always continue this process, we recover a sequence $\{z_n\}$ of elements in $\Delta(A)_{-i}$ that converges to $x$. There are only a finite number of Voronoi Best Response sets, hence this sequence must contain an infinite subsequence of points belonging to the same voronoi best response set, say $V_i^{r'}$, where $r'$ does not belong to $V^i(x)$. This however implies that $x$ is a limit point of $V_i^{r'}$, and since this set is closed, this implies that $x \in V_i^{r'}$, which contradicts the fact that $r' \notin V^i(x)$, thus proving our desired claim.

Returning to the sequences $\{x_n\}$ and $\{y_n\}$, the existence of a fixed $\delta > 0$ with the property that $B^1_\delta(x) \cap V_i^r$ if and only if $r \in V_i(x)$ means that for some $N > 0$, if $n > N$, $x_n \in B^1_\delta(x)$, which in turn means that $y_n \in Conv(V^i(x))$. This in turn means that $y \in Conv(V^i(x))$, as desired. Applying this result to each $i$ yields the graph-closedness of $B^*$, thus establishing that $B^*$ satisfies all the properties of Kakutani's fixed point theorem. 

As for the second part of the theorem, suppose that $x^*$ is the guaranteed fixed point of $B^*$ as guaranteed by Kakutani's fixed point theorem. From Lemma \ref{lemma:multiplayer-voronoi-approximateBR}, we know that Voronoi Best Responses are $\epsilon$-Best responses. By the definition of $B^*$, the support of each $x^*_i \in \Delta(A)_i$ consists of $\epsilon$ best responses to $x^*$, thus establishing the fact that $x^*$ is an $\epsilon$-WSNE. 
\end{proof}

With the previous theorem in hand, we have established that computing a $\frac{\epsilon}{2}$-close labellings of all $\Delta(A)_{-i}$ suffices to compute an $\epsilon$-WSNE. Although the lack of convexity in best response sets prevents us from using binary search techniques, we can always query an $\epsilon$-net of $\Delta(A)_{-i}$ in the $\ell_1$ norm. In this vein, we construct an explicit $\epsilon$-net for $\Delta(A)_{-i}$.

\begin{lemma}
$M^n_\epsilon =  \left( \frac{2\epsilon}{n}\mathbb{Z} \right)^{n} \bigcap \Delta^n$ is an $\epsilon$-net in the $\ell_1$ norm for $\Delta^n$. Furthermore, $|M^n_\epsilon| = O\left(   (\frac{n}{2\epsilon} + n)^n \right)$.
\end{lemma}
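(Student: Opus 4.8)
The plan is to establish the two assertions separately. For the $\epsilon$-net property, fix an arbitrary $x=(x_1,\dots,x_n)\in\Delta^n$ and write $\delta=\tfrac{2\epsilon}{n}$ for the grid spacing. Express each coordinate as $x_i=\delta(a_i+r_i)$ with $a_i=\lfloor x_i/\delta\rfloor\in\mathbb{Z}_{\ge 0}$ and $r_i\in[0,1)$, so that the simplex constraint $\sum_i x_i\le 1$ becomes $\sum_i a_i+\sum_i r_i\le 1/\delta$. I would construct a grid point $y\in M^n_\epsilon$ by rounding each $x_i$ to one of its two neighbouring multiples $\delta a_i$ or $\delta(a_i+1)$: if $U\subseteq[n]$ is the set of coordinates rounded up, then $y$ lies on the grid, $y\ge 0$ automatically, and $y\in\Delta^n$ exactly when $\sum_i a_i+|U|\le 1/\delta$ — which, given $\sum_i a_i+\sum_i r_i\le 1/\delta$, certainly holds whenever $|U|\le 1/\delta-\sum_i a_i$, a quantity that is itself at least $\sum_i r_i$. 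The $\ell_1$ error of the rounding is $\|x-y\|_1=\delta\big(\sum_{i\in U}(1-r_i)+\sum_{i\notin U}r_i\big)$.

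The natural choice is $U=\{i:r_i>\tfrac12\}$, which makes every per-coordinate summand at most $\tfrac12$ and hence $\|x-y\|_1\le \delta\cdot\tfrac n2=\epsilon$, provided $|U|$ respects the budget above; when it does not, I would round up only the coordinates with the largest fractional parts, up to the allowed number, and bound the error contributed by the coordinates that are then forced down using $\sum_i r_i< n$ together with the fact that all such forced-down coordinates have $r_i>\tfrac12$. For the cardinality bound, $M^n_\epsilon$ is in bijection with the set of $(k_1,\dots,k_n)\in\mathbb{Z}_{\ge 0}^n$ satisfying $\sum_i k_i\le \lfloor 1/\delta\rfloor=\lfloor n/(2\epsilon)\rfloor$; a standard stars-and-bars count gives $\binom{\lfloor n/(2\epsilon)\rfloor+n}{n}$ such vectors, and the estimate $\binom{N+n}{n}\le (N+n)^n$ yields $|M^n_\epsilon|=O\!\big((\tfrac{n}{2\epsilon}+n)^n\big)$.

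The crux is the $\ell_1$ estimate for the net. Rounding every coordinate to its nearer multiple would give error at most $\delta/2$ per coordinate and hence $\epsilon$ overall, but this can inflate $\sum_i y_i$ beyond $1$ and push $y$ out of $\Delta^n$; conversely, rounding everything down stays in $\Delta^n$ for free but only gives $\|x-y\|_1<n\delta=2\epsilon$, losing the factor of two. So the delicate point is to argue carefully from the simplex budget $1/\delta-\sum_i a_i\ge\sum_i r_i$ that there is always room to round up the heavy-fractional-part coordinates (or enough of them) while staying inside $\Delta^n$, which is precisely what lets each coordinate's error be capped so that the total stays at $\epsilon$; handling the boundary case where $\sum_i x_i$ is close to $1$ and many fractional parts exceed $\tfrac12$ simultaneously is the part that needs the most care.
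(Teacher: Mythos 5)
You have correctly identified the subtle point that the paper's own two-sentence proof elides: rounding each coordinate of $x$ to the nearest multiple of $\delta = 2\epsilon/n$ gives total $\ell_1$ error at most $n\delta/2 = \epsilon$, but the resulting grid point may have $\sum_i y_i > 1$ and so fall outside $\Delta^n$. The paper's remark about cube centres and cube diagonals, taken at face value, shows that $\bigl(\tfrac{2\epsilon}{n}\mathbb{Z}\bigr)^n$ is an $\epsilon$-net for the orthant, not that its intersection with $\Delta^n$ is an $\epsilon$-net for $\Delta^n$; your budget formulation (round up a set $U$ with $|U| \le \lfloor 1/\delta\rfloor - \sum_i a_i$) is exactly the right framing of the real issue, and is more careful than what the paper writes down.

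However, you explicitly leave the crux unfinished (``handling the boundary case \ldots is the part that needs the most care''), and in fact that step cannot be made to go through: the lemma is false as stated. Take $n=3$, $\delta = 5/18$ (so $1/\delta = 3.6$ and $\epsilon = \tfrac{3}{2}\delta = 5/12$), and $x = \delta\,(1.53,\ 1.53,\ 0.53) \in \Delta^3$ (its coordinate sum is $3.59\,\delta < 1$). The lattice vectors $k\in\mathbb{Z}^3_{\ge 0}$ with $\delta k\in\Delta^3$ must satisfy $k_1+k_2+k_3\le 3$, and a short check shows the nearest such point to $x$ in $\ell_1$ (e.g.\ $\delta(2,1,0)$ or $\delta(1,1,1)$) is at distance $1.53\,\delta > 1.5\,\delta = \epsilon$. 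Concretely, all three fractional parts are $0.53 > \tfrac12$ and want to round up, but the budget $\lfloor 1/\delta\rfloor - \sum_i a_i = 3-2 = 1$ allows only one; the two forced-down coordinates each pay $0.53\delta$, pushing the total just past $\epsilon$. The overshoot is $O(\delta)$ and hence small, but it is real, so no choice of $U$ can rescue the bound in general. The statement can be repaired either by halving the grid spacing (use $\bigl(\tfrac{\epsilon}{n}\mathbb{Z}\bigr)^n\cap\Delta^n$ and always round down, which stays in $\Delta^n$ with error $< n\cdot\tfrac{\epsilon}{n} = \epsilon$), or by keeping the spacing and settling for a $2\epsilon$-net; either fix changes only constants in the cardinality bound and in the downstream count in Lemma~\ref{lemma-lattice-multiplayer}. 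Your stars-and-bars cardinality count is correct and matches the paper.
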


\begin{proof}
The first claim follows from noting that lattice points lie on vertices of axis-aligned hypercubes of side-length $\frac{2\epsilon}{n}$. These cubes have a diagonal of length $2\epsilon$ in the $\ell_1$ norm, hence their centres are at most $\epsilon$-away from a given queried vertex.

As for the cardinality, using a stars and bars argument, one can see that if $S = \frac{1}{\kappa} \mathbb{Z}^{n} \cap \Delta^n$ for some $\kappa \in \mathbb{N}$, then $|S| = \binom {\kappa + n}{n} = O\left( (\kappa + n)^{n} \right)$. 
\end{proof}

Since $\Delta(A)_{-i}$ is a product of simplices, we can take products of the above $\epsilon$-net constructions to in turn obtain an $\epsilon$-net for $\Delta(A)_{-i}$.

\begin{lemma}\label{lemma-lattice-multiplayer}
Suppose that $\epsilon > 0$ and let $\epsilon' = \frac{\epsilon}{n-1}$. Furthermore, let us define the set $H_\epsilon^{n,k} = (M_{\epsilon'}^{k-1})^{n-1} \subset \Delta(A)_{-i} \cong (\Delta^{k-1})^{n-1}$. Then $H_\epsilon^{n,k}$ is an $\epsilon$ net for $\Delta(A)_{-i}$ in the $\ell_1$ norm. Furthermore $|H_\epsilon^{n,k}| = O\left(   (\frac{nk}{2\epsilon})^{nk} \right)$.
\end{lemma}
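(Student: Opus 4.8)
The plan is to reduce both claims to the single-simplex lemma immediately above, using the fact that the $\ell_1$ metric on a product of simplices is the sum of the per-factor $\ell_1$ metrics under the natural coordinate concatenation. First I would fix an arbitrary point $x = (x_1,\ldots,x_{n-1}) \in (\Delta^{k-1})^{n-1} \cong \Delta(A)_{-i}$. Applying the previous lemma to each factor with parameter $\epsilon'$ gives, for each $j$, a lattice point $y_j \in M_{\epsilon'}^{k-1}$ with $\|x_j - y_j\|_1 \le \epsilon'$. Setting $y = (y_1,\ldots,y_{n-1}) \in H_\epsilon^{n,k}$ and summing over the $n-1$ factors yields $\|x-y\|_1 = \sum_{j=1}^{n-1}\|x_j-y_j\|_1 \le (n-1)\epsilon' = \epsilon$, which is exactly the $\epsilon$-net property.

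For the cardinality bound, I would use $|H_\epsilon^{n,k}| = |M_{\epsilon'}^{k-1}|^{\,n-1}$ together with the count $|M_{\epsilon'}^{k-1}| = O\!\big((\tfrac{k-1}{2\epsilon'} + (k-1))^{k-1}\big)$ from the previous lemma. Substituting $\epsilon' = \epsilon/(n-1)$ turns the base into $\tfrac{(k-1)(n-1)}{2\epsilon} + (k-1)$, which for small $\epsilon$ is $O(nk/\epsilon)$; raising to the power $n-1$ and noting $(k-1)(n-1) \le nk$ gives $|H_\epsilon^{n,k}| = O\!\big((\tfrac{nk}{2\epsilon})^{nk}\big)$.

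The only point that needs a moment of care is the decomposition of the ambient $\ell_1$ norm: since $\Delta(A)_{-i}$ is embedded in $\mathbb{R}^{(k-1)(n-1)}$ by concatenating the coordinate blocks of the $n-1$ simplex factors, the $\ell_1$ distance between two product points is literally the sum of the $\ell_1$ distances of the corresponding blocks, so no cross terms appear and the triangle-inequality bookkeeping is exactly additive. Beyond that the argument is a routine product-of-nets computation, so I do not anticipate a genuine obstacle; the asymptotic absorption of the lower-order $(k-1)$ and $+\,n$ terms into the $O(\cdot)$ is the only slightly informal step, and it is valid in the regime $\epsilon \to 0$ relevant for the query-complexity bounds.
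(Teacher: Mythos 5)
Your proof is correct, and it follows the same route the paper indicates: the paper does not give a formal proof of this lemma, only the one-line remark that one ``takes products of the above $\epsilon$-net constructions,'' which your argument spells out in full. The key observation — that the $\ell_1$ norm on the concatenated product $\RR^{(k-1)(n-1)}$ decomposes additively over the $n-1$ simplex blocks, so that per-factor $\epsilon'$-nets combine into an $(n-1)\epsilon' = \epsilon$-net with no cross terms — is exactly what the paper is implicitly using, and it is also consistent with how the paper itself treats the product $\ell_1$ norm in the proof of Lemma~\ref{lemma:utility-lipschitz-multiplayer}. Your cardinality computation is also correct once one reads ``raising to the power $n-1$'' as applied to $|M_{\epsilon'}^{k-1}|$, which already carries an exponent of $k-1$, giving a total exponent of $(k-1)(n-1) \leq nk$; since the base exceeds~$1$ in the relevant regime $\epsilon \to 0$, increasing the exponent to $nk$ only weakens the bound, and absorbing the lower-order additive $(k-1)$ term into the $O(\cdot)$ is standard. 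No gap.
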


Querying all points in an $\epsilon$-net trivially gives rise to an $\epsilon$-close labelling. Consequently, with Lemma \ref{lemma-lattice-multiplayer} and Theorem \ref{thm:discrete-nash-multiplayer} in hand, we have proven our main result regarding the query complexity of computing an $\epsilon$-WSNE using adversarial Best Response Queries. 

\begin{theorem}
Suppose that $G$ is a game with $n$ players with $k$ pure strategies each. One can compute an $\epsilon$-WSNE of $G$ using $O\left(   n(\frac{nk}{\epsilon})^{nk} \right)$ adversarial Best Response Queries.
\end{theorem}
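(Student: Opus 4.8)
The plan is to combine Theorem~\ref{thm:discrete-nash-multiplayer} with the explicit net construction of Lemma~\ref{lemma-lattice-multiplayer}. By Theorem~\ref{thm:discrete-nash-multiplayer}, it suffices to produce, for every player $i$, an $\frac{\epsilon}{2}$-close labelling of $\Delta(A)_{-i}$; the guaranteed Kakutani fixed point $x^*$ of the associated correspondence $B^*$ is then an $\epsilon$-WSNE. So the only remaining task is to bound the number of best response queries needed to obtain such labellings.

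First, for each player $i$ I would query the adversarial best response oracle $BR^i_A$ at every point of the net $H^{n,k}_{\epsilon/2}$ from Lemma~\ref{lemma-lattice-multiplayer} (applying that lemma with approximation parameter $\epsilon/2$, so that $\epsilon' = \frac{\epsilon}{2(n-1)}$). For each action $r \in A_i$, set $\hat{P}_i^r$ to be the finite set of queried points whose returned label is $r$; being a finite set it is closed, and by definition of the oracle it is contained in $P_i^r$, so $\{\hat{P}_i^r\}_{r\in A_i}$ is a valid empirical labelling of $\Delta(A)_{-i}$. Since $\bigcup_{r\in A_i} \hat{P}_i^r$ is exactly $H^{n,k}_{\epsilon/2}$, which Lemma~\ref{lemma-lattice-multiplayer} asserts is an $\frac{\epsilon}{2}$-net of $\Delta(A)_{-i}$ in the $\ell_1$ norm, this empirical labelling is in fact $\frac{\epsilon}{2}$-close.

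For the query count, Lemma~\ref{lemma-lattice-multiplayer} gives $|H^{n,k}_{\epsilon/2}| = O\bigl((\tfrac{nk}{\epsilon})^{nk}\bigr)$, so each player costs $O\bigl((\tfrac{nk}{\epsilon})^{nk}\bigr)$ queries; summing over the $n$ players yields the claimed total of $O\bigl(n (\tfrac{nk}{\epsilon})^{nk}\bigr)$. Feeding these $n$ labellings into Theorem~\ref{thm:discrete-nash-multiplayer} then produces the desired $\epsilon$-WSNE, completing the argument.

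The only subtle point, rather than a genuine obstacle, is that Theorem~\ref{thm:discrete-nash-multiplayer} guarantees \emph{existence} of the fixed point $x^*$ only via Kakutani's theorem, and actually locating $x^*$ need not be cheap (and is presumably intractable in general). Since the statement concerns only best response query complexity, however, no further oracle calls are needed once the $n$ labellings are in hand, so this does not affect the bound; everything else is the routine bookkeeping already carried out in the cited lemma and theorem.
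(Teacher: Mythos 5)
Your proposal is correct and follows essentially the same route as the paper: query the explicit $\ell_1$-net of Lemma~\ref{lemma-lattice-multiplayer} at scale $\epsilon/2$ for each of the $n$ players, observe that the resulting empirical labellings are $\frac{\epsilon}{2}$-close, and invoke Theorem~\ref{thm:discrete-nash-multiplayer} to obtain the $\epsilon$-WSNE. Your explicit handling of the $\epsilon/2$ factor and the remark that only the query count (not the post-query cost of locating the Kakutani fixed point) is being bounded are sound clarifications of what the paper states rather tersely.
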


\section{Conclusion and Future Directions}\label{sec:conclusion}
In this paper we introduced the concept of learning $\epsilon$-close labellings of $(m,n)$-polytope partitions with membership queries, and derived query efficient algorithms for when either the dimension of the ambient simplex in the polytope partition, $m$, is held constant, or when the number of polytopes in the partition, $n$, is held constant.

Most importantly, we introduced a novel reduction from computing $\epsilon$-WSNE with best response queries to this geometric problem, thus allowing us to show that in the best response query model, computing $\epsilon$-WSNE of a bimatrix game has a finite query complexity. More specifically, for $m \times n$ games with $\min(m,n)$ constant, the query complexity is polynomial in $\max(m,n)$ and $\log \left( \frac{1}{\epsilon} \right)$. Furthermore, we partially extended our results from bimatrix games to multi-player games. Although the underlying geometry in multi-player games prevents us from using our results from learning polytope partitions, we were still able to show that querying a fine enough $\epsilon$-net of the mixed strategy space of all players suffices to compute an $\epsilon$-WSNE. 

As mentioned in the introduction, this geometric framework could be of use in other areas where Lipschitz continuous structures appear over domains with convex partitions. Upon further inspection, it is not difficult to see that polytope partitions do not need to be contained in $\Delta^m$, and in fact our algorithms extend to arbitrary ambient polytopes. 
Furthermore, it would be of great interest to create algorithms with a better query cost, prove lower bounds with regards to computing $\epsilon$-close labellings, or simply explore weaker query paradigms, such as noisy membership oracles. Finally, we have mentioned that in the multiplayer setting, best response sets are no longer polytopes, but rather semi-algebraic sets. It would be of interest to create learning algorithms for $\epsilon$-close labellings of these more complicated geometric objects, since doing so suffices to compute $\epsilon$-WSNE. 

\bibliographystyle{plainnat}

\bibliography{./refs.bib}

\begin{thebibliography}{25}
\providecommand{\natexlab}[1]{#1}
\providecommand{\url}[1]{\texttt{#1}}
\expandafter\ifx\csname urlstyle\endcsname\relax
  \providecommand{\doi}[1]{doi: #1}\else
  \providecommand{\doi}{doi: \begingroup \urlstyle{rm}\Url}\fi

\bibitem[Babichenko(2016)]{B13}
Yakov Babichenko.
\newblock Query complexity of approximate {N}ash equilibria.
\newblock \emph{J. {ACM}}, 63\penalty0 (4):\penalty0 36:1--36:24, 2016.

\bibitem[Babichenko and Rubinstein(2017)]{BabR}
Yakov Babichenko and Aviad Rubinstein.
\newblock Communication complexity of approximate nash equilibria.
\newblock In \emph{Proceedings of the 49th {STOC}}, pages 878--889. {ACM},
  2017.

\bibitem[Babichenko et~al.(2017)Babichenko, Barman, and Peretz]{BabBarman}
Yakov Babichenko, Siddharth Barman, and Ron Peretz.
\newblock Empirical distribution of equilibrium play and its testing
  application.
\newblock \emph{Math. Oper. Res.}, 42\penalty0 (1):\penalty0 15--29, 2017.

\bibitem[Baldwin and Klemperer(2016)]{BK16}
Elizabeth Baldwin and Paul Klemperer.
\newblock {Understanding Preferences: ``Demand Types'', and the Existence of
  Equilibrium with Indivisibilities}.
\newblock Technical report, LSE, October 2016.

\bibitem[Ball(1997)]{BallConvex}
Keith Ball.
\newblock An elementary introduction to modern convex geometry.
\newblock In Silvio Levi, editor, \emph{Flavors of Geometry}, chapter~1, pages
  1--58. Cambridge University Press, Cambridge, 1997.

\bibitem[Brown(1949)]{Brown}
George~W. Brown.
\newblock Some notes on computation of game solutions.
\newblock \emph{RAND corporation report}, page~78, April 1949.

\bibitem[Bshouty et~al.(1998)Bshouty, Goldberg, Goldman, and Mathias]{BGGM98}
Nader~H. Bshouty, Paul~W. Goldberg, Sally~A. Goldman, and H.~David Mathias.
\newblock Exact learning of discretized geometric concepts.
\newblock \emph{{SIAM} J. Comput.}, 28\penalty0 (2):\penalty0 674--699, 1998.

\bibitem[Chen et~al.(2009)Chen, Deng, and Teng]{CDT}
X.~Chen, X.~Deng, and S-H. Teng.
\newblock Settling the complexity of computing two-player {N}ash equilibria.
\newblock \emph{{\em Journal of the ACM}}, 56\penalty0 (3):\penalty0
  14:1--14:57, 2009.

\bibitem[Daskalakis and Pan(2014)]{DFictitious}
C.~Daskalakis and Q.~Pan.
\newblock A counter-example to {K}arlin's strong conjecture for fictitious
  play.
\newblock In \emph{Procs.\ of 55th FOCS}, pages 11--20, 2014.

\bibitem[Daskalakis et~al.(2009)Daskalakis, Goldberg, and Papadimitriou]{DGP}
C.~Daskalakis, P.W. Goldberg, and C.H. Papadimitriou.
\newblock The complexity of computing a {N}ash equilibrium.
\newblock \emph{{\em SIAM J. Comput.}}, 39\penalty0 (1):\penalty0 195--259,
  2009.

\bibitem[Fearnley et~al.(2015)Fearnley, Gairing, Goldberg, and Savani]{FGGS}
J.~Fearnley, M.~Gairing, P.W. Goldberg, and R.~Savani.
\newblock Learning equilibria of games via payoff queries.
\newblock \emph{J. Mach. Learn. Res.}, 16:\penalty0 1305--1344, 2015.

\bibitem[Fudenberg and Levine(1998)]{FL98}
Drew Fudenberg and David~K. Levine.
\newblock \emph{The Theory of Learning in Games}.
\newblock MIT Press, 1998.

\bibitem[Goldberg and Kwek(2000)]{GK00}
Paul~W. Goldberg and Stephen Kwek.
\newblock The precision of query points as a resource for learning convex
  polytopes with membership queries.
\newblock In \emph{Procs.\ of the 13th {COLT}}, pages 225--235, 2000.

\bibitem[Goldberg and Roth(2016)]{GR14}
Paul~W. Goldberg and Aaron Roth.
\newblock Bounds for the query complexity of approximate equilibria.
\newblock \emph{{ACM} Trans. Economics and Comput.}, 4\penalty0 (4):\penalty0
  24:1--24:25, 2016.

\bibitem[Goldberg et~al.(2013)Goldberg, Savani, S{\o}rensen, and
  Ventre]{GSSV11}
Paul~W. Goldberg, Rahul Savani, Troels~Bjerre S{\o}rensen, and Carmine Ventre.
\newblock On the approximation performance of fictitious play in finite games.
\newblock \emph{Int. J. Game Theory}, 42\penalty0 (4):\penalty0 1059--1083,
  2013.

\bibitem[Goldberg and Turchetta(2017)]{GT14}
P.W. Goldberg and S.~Turchetta.
\newblock Query complexity of approximate equilibria in anonymous games.
\newblock \emph{Journal of Computer and System Sciences}, 90:\penalty0 80--98,
  2017.

\bibitem[Hart and Mansour(2010)]{HM10}
S.~Hart and Y.~Mansour.
\newblock How long to equilibrium? the communication complexity of uncoupled
  equilibrium procedures.
\newblock \emph{{Games and Economic Behavior}}, 69\penalty0 (1):\penalty0
  107--126, 2010.

\bibitem[Hart and Nisan(2016)]{HN13}
Sergiu Hart and Noam Nisan.
\newblock The query complexity of correlated equilibria.
\newblock \emph{Games and Economic Behavior}, 2016.
\newblock ISSN 0899-8256.

\bibitem[John(2014)]{Fritz}
Fritz John.
\newblock \emph{Extremum Problems with Inequalities as Subsidiary Conditions},
  pages 197--215.
\newblock Springer Basel, Basel, 2014.
\newblock ISBN 978-3-0348-0439-4.
\newblock \doi{10.1007/978-3-0348-0439-4_9}.
\newblock URL \url{https://doi.org/10.1007/978-3-0348-0439-4_9}.

\bibitem[Kakutani(1941)]{kakutani1941}
Shizuo Kakutani.
\newblock A generalization of {B}rouwer's fixed point theorem.
\newblock \emph{Duke Math. J.}, 8\penalty0 (3):\penalty0 457--459, 09 1941.

\bibitem[Klemperer(2010)]{Kle10}
Paul Klemperer.
\newblock The product-mix auction: a new auction design for differentiated
  goods.
\newblock \emph{Journal of the European Economic Association}, 8\penalty0
  (2/3):\penalty0 526--536, 2010.

\bibitem[Lipton et~al.(2003)Lipton, Markakis, and Mehta]{LMM03}
R.J. Lipton, E.~Markakis, and A.~Mehta.
\newblock Playing large games using simple strategies.
\newblock In \emph{Procs.\ of the 4th ACM-EC}, EC '03, pages 36--41, New York,
  NY, USA, 2003. ACM.

\bibitem[Nash(1951)]{Nash}
John Nash.
\newblock Non-cooperative games.
\newblock \emph{Annals of Mathematics}, 54\penalty0 (2):\penalty0 286--295,
  1951.

\bibitem[Robinson(1951)]{Robinson}
Julia Robinson.
\newblock An iterative method of solving a game.
\newblock \emph{Annals of Mathematics}, 54\penalty0 (2):\penalty0 296--301,
  1951.

\bibitem[Stengel(2004)]{VS04}
B.~Von Stengel.
\newblock Leadership with commitment to mixed strategies.
\newblock \emph{CDAM Research Report}, 2004.

\end{thebibliography}

\end{document}